\newenvironment{pf}[1][Proof]{\noindent\textsc{#1.} }{\ \rule{0.5em}{0.5em}}
\theoremstyle{plain}
\newtheorem{Theorem}{Theorem}[section]
\newtheorem{Corollary}[Theorem]{Corollary}
\newtheorem{Definition}[Theorem]{Definition}
\newtheorem{Proposition}[Theorem]{Proposition}
\newtheorem{Remark}{Remark}[section]
\numberwithin{equation}{section}
\def\demi{\frac{1}{2}}
\def\cal{\mathcal}
\def\L{{\cal L}}
\def\B{{\mathcal B}}
\def\F{{\mathcal F}}
\def\K{{\cal K}}
\def\sigR{{\cal R}}
\def\sigoR{{\cal R}^{\perp}}
\def\sR{\text{{\tiny${\cal R}$}}}
\def\S{{\cal S}}
\def\U{{\cal U}}
\def\bfF{{\bf F}}
\def\bfU{{\bf U}}
\def\bfV{{\bf V}}
\def\E{{\mathbb E}}
\def\bF{{\mathbb F}}
\def\P{{\mathbb P}}
\def\Q{{\mathbb Q}}
\def\R{{\mathbb R}}
\def\R{{\mathbb R}}
\def\tU{{\widetilde U}}
\def\tu{{\tilde u}}
\def\tv{{\tilde v}}
\def\tmu{{\tilde \mu^*}}
\def\tsigma{{\tilde \sigma^*}}
\def\wL^*{{\widehat L^{(\mu^*, \sigma^*)}}}
\def\wL{{\widehat L}}
\def\tV{{\widetilde V}}
\def\Cc{{\mathcal C}}
\def\GX{{\mathscr X}}
\def\GY{{\mathscr Y}}
\def\eps{\varepsilon}
\def\cgblue{\color{blue}}
\def\cbred{ \color{red}}
\newcommand{\rmi}{{\rm (i) $\>\>$}}
\newcommand{\rmii}{{\rm (ii) $\hspace{1.5mm}$}}
\newcommand{\rmiii}{{\rm (iii)$\>\>$}}
\newcommand{\rmiv}{{\rm (iv)$\>\>$}}
\newcommand{\rma}{{\rm a)$\>\>$}}
\newcommand{\rmb}{{\rm b)$\>\>$}}
\def\B{\Big}
\def\b{\big}
\def\bit{\begin{itemize}}
\def\eit{\end{itemize}}
\def\bc{\begin{center}}
\def\ec{\end{center}}
\def\super { \end{document}}
\def\bcom{}
\def\edoc{\end{document}}
\DeclareMathOperator{\esssup}{ess\,sup}
\title{ Ramsey Rule with Progressive   Utility \\
 in  Long Term Yield Curves Modeling\thanks{With the financial
support of the "Chaire Risque Financier of the  Fondation du Risque" } }
\author{ El Karoui Nicole,
\thanks{ \small  LPMA, UMR CNRS  6632,  Universit\'e Pierre et Marie Curie, CMAP, UMR CNRS 7641, \'Ecole Polytechnique  }
\\
\and Hillairet Caroline,~ \thanks
{\small CMAP, UMR CNRS 7641, \'Ecole Polytechnique,}
\and Mrad~Mohamed~ \thanks
{\small   LAGA, UMR CNRS 7539,  Universit\'e Paris 13}
}
\date{\today}
\begin{document}
 \maketitle

  \abstract{The  purpose of this paper relies on  the study of long term yield curves modeling.  Inspired by the economic litterature, it provides a   
  financial interpretation of the Ramsey rule that links discount rate and marginal utility of aggregate optimal consumption. For such a  long maturity modelization, the possibility of adjusting
preferences to new economic information is crucial. Thus, after recalling some important properties on progressive utility, this paper first
 provides an extension of the notion of a consistent  progressive utility  to a consistent pair of
progressive  utilities of investment and consumption. An optimality condition is  that the utility from the wealth satisfies  a    second order SPDE of HJB type involving the Fenchel-Legendre transform of the utility from consumption. This SPDE is solved in order to give a full characterization of this class of consistent progressive  pair of utilities. An application of this results  is to   revisit the classical backward optimization problem in the light of progressive utility theory, 
 emphasizing  intertemporal-consistency issue.  Then we 
study the dynamics of the marginal utility yield curve, and give example with backward and progressive power utilities.
     }\\

 {\bf Keywords:} Market-consistent progressive utility of investment and consumption, Stochastic partial differential equations, 
 Intertemporal-consistency, Forward/backward  portfolio optimization,   Ramsey rule, Yields curves.

{\bf MSC 2010:} 60H15, 91B18, 91B70, 91G10, 91G30.

\section*{Introduction}

This paper focuses on the modelization of long term yield curves. For the financing of ecological project, for the pricing of longevity-linked securities or any other investment with long term impact, modeling long term interest rates is crucial. The answer cannot be find in financial market since  for longer maturities, the bond market is highly illiquid and standard financial interest rates models cannot be easily extended. Nevertheless, an abundant literature
on the economic aspects of long-term policy-making (i.e. a time horizon between 50 to
200 years), has been developed. The issue is adressed at a macroeconomic level, where long run interest rates have not necessarily the same meaning than in financial market. 
The Ramsey rule, introduced by Ramsey in his seminal work \cite{Ramsey} and further discussed by numerous economists such as Gollier \cite{Gollier6,Gollier9,Gollier13,Gollier14,Gollier15,Gollier16,GollierEcological} and Weitzman \cite{Weitzman,
Weitzman_review}, is the reference equation to compute discount rate, that allows to evaluate the future value of an investment by giving a current equivalent value. The Ramsey rule links the discount rate with the marginal utility of aggregate consumption at the economic equilibrium. Even if this rule is very simple, there is no consensus among economists about the parameters that should be considered, leading to very different discount rates.  But economists agree on the necessity of a sequential decision scheme that allows to revise the first decisions in the light of new  knowledge  and  direct experiences:  the utility criterion must be adaptative and adjusted to the information flow. In the classical optimization point of view, this adaptative criteria is called  consistency. 
In that sense, market-consistent progressive utilities,  studied in El Karoui and Mrad
\cite{MrNek01,MrNek02},  are the appropriate tools to study long term yield curves.

 Indeed, in  a dynamic and stochastic environment, the classical notion of utility function  is not
flexible enough  to help us to make good choices in the long run. M. Musiela and T. Zariphopoulou 
\cite{zar-07a,zar-08,zar-07}   were
the first to suggest to use instead of the classical criterion  the concept of
 progressive dynamic utility, that gives an adaptative way to  model possible changes over the time of individual preferences of an agent.   Obviouslly the dynamic utility must be  
 consistent with respect to a given investment universe; this question has been studied from a PDE point of view in  \cite{MrNek01}.
 Motivated by the Ramsey rule (in which the consumption rate is a key process), we extend the notion of market-consistent progressive  utility  
 with consumption: the agent invest in a financial market and consumes a part of her wealth at each instant. This progressive utilities of investment 
 and consumption were considered at first by Berrier and Tehranchi \cite{Mike02} in the  particular case of a zero volatility. This paper studies the general case with a different approach.

In a financial framework, it is natural to link yield curves and zero-coupon, whose pricing in incomplete market is a complex question. 
Utility functions are also the cornerstone in the utility indifference pricing method, for the pricing of non-replicable contingent claim. 
For a small amount of transaction, this pricing method  leads to a linear pricing rule (see \cite{Davis})  called the  {Davis price} or {marginal utility price}. 
As the zero-coupon bond market is highly illiquid for long maturity,  it is relevant to study  utility indifference pricing method for progressive
utility with consumption.  This paper also points out the similarities and the differences between progressive utilities and the value function of backward classical 
utility maximization problem. Although the backward classical value function is a progressive utility  (cf Mania and Tevzadze 
\cite{Mania} for the case without consumption), the way the classical optimization problem is
posed is very different from the progressive utility problem. In the classical approach,  the optimal  processes are   computed through a backward analysis, 
emphasizing their dependency  on the horizon  of the optimization problem, and leading to intertemporality issues. In the progressive approach, 
we propose regularity conditions on the utilities characteristics that ensure the existence of consistent utilities and of optimal solutions. 

 We illustrate those issues on the example of long term discount rate and yield curves. According to the Ramsey rule, we show that equilibrium 
interest rate and marginal utility interest rate coincide, being careful that this last curve is robust only for small trades. For replicable bonds,
equilibrium interest rate and  market interest rate are the same. Finally, we study the dynamics of the marginal utility yield curve, in the framework of
progressive and backward power utilities (since power utilities are the most commonly used in the economic literature).  Special attention is paid on the impact on the yield curves of the
maturity of the underlying optimization problem.

The paper is organized as follows, with a concern for finding a workable accommodation between intuition and technical results. For more technical details, 
the interested reader may refer to \cite{MrNek01}. 
Section 2 starts with  the definition
of  It\^o progressive   utilities and characterizes these concave
It\^o's random fields as primitives of  SDEs. A special attention is paid to the dynamics of the Fenchel conjuguate
utility random field, yielding to a very intuitive SPE for the marginal conjuguate utility.  Section 3 is a technical section where as in H.Kunita \cite{Kunita:01}, 
"Sobolev spaces" of processes are introduced, in order to study rigorously
the properties of monotonicity, differentiability and concavity, both for random
fields  and   solutions of SDEs. Then, the link between non linear SPDE and SDE is detailed, providing a path representation of solution of SDEs.

Section 4 introduces the investment universe and studies  market-consistent
progressive utilities of investment and consumption.  From consistency property we derive a SPDE of HJB type  satisfied by the dynamic utility of investment and consumption.  
Based on the  connection between SDEs and SPDEs developed in Section $3$  and using same  stochastic flows technics as  in \cite{MrNek01}, a closed formula 
for these forward consistency utilities is given, in term of the inverse flow of the optimal wealth. 
Special attention is paid to  the example of power consistent utility  This section ends with some results on
 marginal utility indifference pricing, as an application of utility maximization.   
 
 Application to yield curve dynamics is given in Section 5. 
After introducing the economic framework for the computation of long term discount rates, we give a financial interpretation of the Ramsey rule 
and we study the dynamics of the  marginal utility yield curve. More precise properties of the yield curve are given in the framework of power utilities and log-normal market, in particular on the impact of the terminal horizon.

\section{Progressive Utility}
 Motivated by the necessity of more flexible criterium with respect to the uncertainty of the universe, we introduce the notion of progressive utility.
All stochastic processes are defined on a standard filtered probability space
$(\Omega,{\mathbb F},\mathbb{P})$, where
the filtration ${\bF}=(\F_t)_{t\ge 0}$ is assumed to be right continuous and
complete. The probability measure $\mathbb{P}$ is a reference probability, often the historical probability.
\paragraph{Progressive utility and its Fenchel conjugate}
 We start with the definition of a progressive utility as progressive random
field on $\R^*_+$ concave and increasing with respect to the parameter. Given its importance in convex analysis, we introduce together 
its convex conjugate $\bf \widetilde U$ (also called conjugate progressive  utility (CPU)).

\begin{Definition}[{Progressive Utility}]\label{defPUF}${}$\\
\rmi \rma A {\em progressive utility} is a continuous progressive random field on $\R^*_+$,
$\bfU=\{U(t, x); t \geq 0, x > 0\}$ such that, for every $(t,\omega)$, $x\mapsto U(\omega, t, x)$
is a strictly concave, strictly increasing, and 
non negative utility function.\\
 \rmb {\sc Inada Condition} $\bf U$ is assumed to be ${\mathcal C}^2$-random field,  satisfying Inada conditions:
 for every $(t,\omega)$, $U(t,\omega,x)$ goes to $0$ when $x$ goes to $0$ and the
derivative $U_x(t,\omega,x)$ (also called marginal utility) decreases from $\infty$  to $0$.\\
\rmii  The progressive {\em convex conjugate} (also called Fenchel conjuguate) of the progressive utility $\bf U$ is
the progressive random field $\bf \tU$ defined on  $\R^*_+$ by 
\bc
${\bf \tU}=\{\tU(t,y); t\geq 0,y> 0\}$, where
$\tU(t,y)\stackrel{def}{=}\max_{x>0,x\in Q^+}\big(U(t,x)-x\, y\big)$.
\ec
Under  Inada condition, ${\bf \tU}$ is twice continuously
differentiable, strictly convex, strictly decreasing, with 
$\tU(.,0^+)=U(+\infty), \>
\tU(.,+\infty)=U(0^+), a.s.$\\
\rmiii The marginal  utility random field $\bfU_x$  is the inverse of the opposite
of the   marginal conjugate utility random field ${\bf \tU}_y$, that is
$U_x(t,.)^{-1}(y)=-\tU_y(t,y)$, with  $\tU_y(.,0^+)=-\infty$, 
$\tU_y(.,+\infty)=0$, under Inada condition.\\
\rmiv The bi-dual relation holds true $U(t,x)=\inf_{y>0,y\in
Q^+}\big(\tU(t,y)+x\, y\big)$. \\Moreover $\tU(t,y)=U\big(t,-\tU_y(t,y)\big) +\tU_y(t,y)\, y$,
and $U(t,x)=\tU\big(t, U_x(t,x)\big)+x\, U_x(t,x)$. 
\end{Definition}
Progressive utility is an example of stochastic process depending on a real parameter $x$, also called {\em progressive random
field} $\bf X$. It is useful to specify in some sense some properties that have to be  considered when this additional
parameter $x$ is taken into account. In particular, we say that the random field ${\bf X}\in  \F_{\infty}\otimes {\mathcal
B}(\R^+)\otimes {\mathcal B}(\R^+)$ satisfies a property  $\cal P$, if there exists $N\in \F_{\infty}$ with $\P(N)=0$, such
that the property is satisfied on $N^c$. For instance, a random field $X$ is said to be progressive, (predictable, optional)
if there exists  $N\in \F_{\infty}$ such that for every $\omega \in N^c$, for every $x \in \R+$, the process $t \mapsto
X_t(\omega, x)$ is progressively measurable. Another family of examples is given by properties relative to the parameter $x$:
for any $\omega\in N^c$, for every $t>0$, $x\mapsto X(t,x)(\omega)$ satisfies the
property $\cal P$. In particular, all previous properties as concavity, derivability and so on, may be understand in this
sense. The symbol $\P \>\>a.s. $ is used to said that the negligeable set is not depending on $x$. 

To highlight the intuition, Section $2$ presents the key ideas that will guide us throughout the rest of this work,  with little regard to the assumptions. Section  $3$ completes then the study by focusing on the conditions under which our assumptions are  satisfied.
 \section{It\^o's Progressive  Utility}\label{SPU}
This section uses tools developed in \cite{MrNek01}  and recalls some important results  on It\^o's progressive  utility that will be useful for  this work.
\subsection{It\^o 's progressive utility and SDE}\label{ISPU}
We focus  on  continuous progressive utilities $\bfU$ which are 
a collection of  It\^o's  semimartingales depending on a parameter driven by a  $n$-dimensional Brownian
motion $W=(W^1,..,W^n)$ defined on the probability space $(\Omega,{\mathbb
F},\mathbb{P})$. 
From H.Kunita \cite{Kunita:01},  there exist two progressive random fields $( \beta(t,x),
\gamma(t,x))$,  called {\em local characteristics} of $\bfU$
 so that $\P- a.s.,$
 \begin{equation}\label{dynbetagamma}
 dU(t,x)=\beta(t,x)dt + \gamma(t,x).dW_t
\end{equation}
As usual, the random field $\beta$ is called the drift characteristic, and the random field $\gamma$ is called the diffusion characteristic.  For $t=0$, the deterministic utilities $U(0,.)$ and $V(0,.)$ are denoted $u(.)$ and $v(.)$ and in the following small letters $u$ and $v$ design deterministic utilities while capital letters refer to progressive utilities.  \\
  A first step  is to give conditions 
   on the local characteristics $(\beta, \gamma)$ such that the progressive random field $\bfU$
defined by \eqref{dynbetagamma} is a progressive utility, that is  
 monotonic and concave with respect to $x$. It is often easier to prove that  the progressive marginal utility $\bfU_x$
is strictly decreasing and strictly positive, with range $(0,\infty)$.

\begin{Proposition}\label{UxSDE} 
\rmi We assume ${\bf U}$ is regular enough, so that the first and second derivative random fields  $\bfU_x$ and $\bfU_{xx}$
are also  It\^o's  random fields, with local characteristics
$(\beta_x, \gamma_x)$, and $(\beta_{xx}, \gamma_{xx})$. We recall that $-\bfU_x$  is equal to the derivative of the conjugate utility $\tU_y $. \\
\rmii {\sc Intrinsic SDE} The marginal stochastic utility $\bfU_x$ (up to the change of initial condition  $x=-\tu_y(z)$) is a
strong solution  $Z_.(z)=U_x(.,-\tu_y(z))$ of the following one dimensional
stochastic differential equation SDE$(\mu, \sigma)$,  that is $\P \>\>a.s.,$
\begin{equation}\label{eq:ItoSDE}
\left\{
\begin{array}{cllll}
 dZ_t&=\mu(t,Z_t)dt+\sigma(t,Z_t)\,dW_t, & \quad Z_0 = z\\
\mu(t,z)&:=\beta_x\big(t,-\tU_y(t,z)\big),& \quad\mu(t,0)=0\\
\sigma(t,z)&:=\gamma_x\big(t,-\tU_y(t,z)\big),&\quad \sigma(t,0)=0
  \end{array}
 \right .
\end{equation}
The solution $Z$ is monotonic with respect to its initial condition, with range $(0,\infty)$.\\
\rmiii {\sc Stochastic utility characterization as primitive of SDE}  Let consider  
a SDE$(\hat{\mu},\hat{\sigma})$, $dZ_t=\hat{\mu}(t,Z_t)dt+\hat{\sigma}(t,Z_t)\,dW_t,  \> Z_0 = z$ and assume the existence of
 a strong global solution $Z_.(z)$,  increasing and differentiable
 in $z$ with range $(0,\infty)$. 
Then, for any utility function $u$ such that $Z_.(u_x(x))$ is Lebesgue-integrable in a
neighborhood of $x=0$, 
the primitive $\bfU=\{U(t,x)=\int_0^xZ_t(u_x(z))dz, t \geq 0, x>0\}$ is a
progressive utility.
\end{Proposition}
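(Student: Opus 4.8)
The plan is to handle the three items separately, noting first that \rmi is a standing regularity hypothesis rather than an assertion to be proved: once $\bfU$ is smooth enough that one may differentiate \eqref{dynbetagamma} in $x$ under the stochastic integral (precisely the content of the ``Sobolev spaces of processes'' machinery announced for Section $3$), the fields $\bfU_x$ and $\bfU_{xx}$ are again It\^o random fields with local characteristics $(\beta_x,\gamma_x)$ and $(\beta_{xx},\gamma_{xx})$. The identity recalled there, namely that $y\mapsto -\tU_y(t,y)$ is the inverse of $x\mapsto U_x(t,x)$, is nothing but Definition \ref{defPUF}\,\rmiii, and it is the engine of the next step.

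For \rmii I would fix $z>0$ and introduce the \emph{deterministic} point $x_0:=-\tu_y(z)=U_x(0,\cdot)^{-1}(z)$, so that $U_x(0,x_0)=z$. Freezing $x=x_0$ in the differentiated dynamics gives that $Z_t:=U_x(t,x_0)$ is the It\^o process $dZ_t=\beta_x(t,x_0)\,dt+\gamma_x(t,x_0)\cdot dW_t$ with $Z_0=z$; it is automatically a \emph{strong} solution since it is built from the adapted field $\bfU_x$. The only genuine identification is that the frozen characteristics coincide with $\mu(t,Z_t)$ and $\sigma(t,Z_t)$, and this is exactly where the inverse-function identity enters: since $-\tU_y(t,\cdot)=U_x(t,\cdot)^{-1}$, one gets $-\tU_y(t,Z_t)=U_x(t,\cdot)^{-1}\big(U_x(t,x_0)\big)=x_0$, whence $\mu(t,Z_t)=\beta_x\big(t,-\tU_y(t,Z_t)\big)=\beta_x(t,x_0)$ and likewise $\sigma(t,Z_t)=\gamma_x(t,x_0)$, so $Z$ solves \eqref{eq:ItoSDE}. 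Monotonicity in $z$ follows by composition: $z\mapsto x_0(z)=U_x(0,\cdot)^{-1}(z)$ is decreasing and $U_x(t,\cdot)$ is decreasing by strict concavity, so $z\mapsto Z_t(z)$ is increasing with range $(0,\infty)$ by Inada; the normalizations $\mu(t,0)=\sigma(t,0)=0$ encode the boundary behaviour at $z=0^+$ (i.e. $x_0\to+\infty$, where the marginal utility and its characteristics degenerate to $0$), keeping $0$ invariant for the flow.

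For \rmiii I would differentiate the primitive to obtain $U_x(t,x)=Z_t(u_x(x))$, which is licit once $Z_.(z)$ is smooth in $z$. Non-negativity and $U(t,0^+)=0$ are immediate from the definition as the integral of a positive integrand starting at $0$. Concavity and monotonicity are then read directly off this formula: $u_x$ is strictly decreasing and $Z_t(\cdot)$ strictly increasing, so $x\mapsto U_x(t,x)$ is strictly positive (monotonicity of $U$) and strictly decreasing (strict concavity). The Inada conditions are the composition of the two ranges: as $x\to0$ one has $u_x(x)\to\infty$ and $Z_t(u_x(x))\to\infty$, while as $x\to\infty$ one has $u_x(x)\to0$ and $Z_t(u_x(x))\to0$, so the marginal utility decreases from $\infty$ to $0$. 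Progressivity and continuity of the random field $\bfU$ are inherited from those of $Z$ through the integral.

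The main obstacle is not the algebra above but the underlying It\^o--random-field calculus: justifying that \eqref{dynbetagamma} may be differentiated in $x$ term by term (so that $(\beta_x,\gamma_x)$ genuinely are the characteristics of $\bfU_x$), that freezing the spatial parameter at a point commutes with the stochastic integral, and, for \rmiii, that differentiation under the integral sign together with the $\mathcal C^2$-regularity of $U$ are valid given only $z$-differentiability of the flow $Z_.(z)$. These are exactly the regularity questions the functional framework of Section $3$ is designed to settle, and I would invoke that framework rather than re-derive it here.
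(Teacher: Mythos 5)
Your proof is correct and takes essentially the intended route: the paper states Proposition \ref{UxSDE} without proof (it is recalled from \cite{MrNek01}, with only a comment on existence and sufficient conditions), and your argument --- freezing the spatial parameter at the deterministic point $x_0=-\tu_y(z)$, identifying the coefficients via the inverse identity $-\tU_y(t,\cdot)=U_x(t,\cdot)^{-1}$ from Definition \ref{defPUF}, and for \rmiii differentiating the primitive to get $U_x(t,x)=Z_t(u_x(x))$ and reading off positivity, strict monotonicity, concavity and the Inada behaviour by composition of ranges --- is precisely the standard derivation behind the statement. The regularity caveats you delegate to the Section $3$ framework are exactly the ones the paper itself defers in assertion \rmi, so there is no gap.
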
 
\noindent {\bf Comment} \rmi The {\bf existence} of strong global solution of  SDE$(\mu, \sigma)$ is proved by using the same argument than in the
deterministic case, when the coefficients are uniformly Lipschitz, with (random) time depending Lipschitz bound, (Protter
\cite{Protter}, or for more exhaustive study, see Kunita \cite{Kunita:01}).  A constant Lipschitz bound $C$ corresponds to
the
classical framework of Lipschitz SDE,  and the range property is well-known. \\
\rmii The notion of "global solution" expresses that the solution $ (Z_t(z) )$  exists for all $ t \ge0
$. Under weaker assumptions, the solution may be defined  only up to a finite lifetime $ \zeta(z)$, before exploding. More
details will be
given in the next section.\\
\rmiii {\bf Sufficient conditions} on local characteristics $(\beta, \gamma)$ of an  It\^o's random
field $\bfU$ to be a progressive utility may be exhibited: in particular, 
 if there exist random Lipschitz bounds $C^i_t$ and $K^i_t$ with
$\int_0^TC^i_tdt<+\infty$ and  $\int_0^T|K^{i}_t|^{2}\,dt<+\infty$ for any $T$, such that
$\P \>a.s.$,
\begin{equation}
\left\{
\begin{array}{clll}
|\beta_x(t,x)\leq C_t \,|U_x(t,x)|, &        \|\gamma_x(t,x)\|\leq
K_t\,|U_x(t,x)|             \\
   |\beta_{xx}(t,x)|\leq C^1_t \,|U_{xx}(t,x)|, &\|\gamma_{xx}(t,x)\|\leq K^1_t
\,|U_{xx}(t,x)|\\
 \end{array}
 \right.
 \end{equation}
 The coefficients of the intrinsic SDE$(\mu,\sigma)$ are uniformly Lipschitz and $\bf U$ is a progressive utility.

\subsection{Dynamics of  Convex Conjugate Progressive Utility}\label{DynConjPU}
 The study of the convex conjugate $\bf \tU$ of a progressive utility 
 $\bf U$ is based on the well-known identity (Definition \ref{defPUF})
$\tU(t,y)=U(t,-\tU_y(t,y))+y \tU_y(t,y)$, and request to know the dynamics  of the ${\mathcal C}^2$-semimartingale
$ U (t, x) $ along the process $ - \tU_y (t, y) $. Calculations are based on  It\^o-Ventzel's formula, an
extension of the classical It\^o formula. We refer to Ventzel \cite{Ventzel} and Kunita \cite{Kunita:01} (Theorem $3.3.1$)  for different variants of this
formula. 
\begin{Proposition}[It\^o-Ventzel's Formula]\label{IVF}
Consider a ${\mathcal C}^2$-It\^o semimartingale $\bfF$ with local characteristics $(\phi,\psi)$, such that $\bfF_x$  is also
an It\^o semimartingale, with characteristics $(\phi_x,\psi_x)$.
 For any continuous  It\^o semimartingale $X$, 
$F(.,X_.)$ is an It\^o semimartingale, 
\begin{eqnarray}\label{ItoGen}
&&F(t,X_t)=F(0,X_0)+\int_0^t \phi(s,X_s)ds+\int_0^t\psi(s,X_s).dW_s \\
&+&\int_0^t F_x(s,X_s)dX_s +\frac{1}{2}\int_0^t F_{xx}(s,X_s)\langle dX_s\rangle
+ 
\int_0^t   \langle dF_x(s,x), dX_s \rangle|_{x=X_s} \nonumber
\end{eqnarray}
\end{Proposition}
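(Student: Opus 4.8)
The plan is to reprove this (which is Kunita's It\^o--Ventzel theorem) by time discretization, reducing it to the classical It\^o formula through careful bookkeeping of first- and second-order increments. Fix $t>0$ and a sequence of partitions $0=t_0<t_1<\cdots<t_N=t$ whose mesh tends to $0$, and telescope the increment as $F(t,X_t)-F(0,X_0)=\sum_{k}\big(F(t_{k+1},X_{t_{k+1}})-F(t_k,X_{t_k})\big)$. I would split each summand into a \emph{temporal} part at the frozen spatial point $X_{t_k}$ and a \emph{spatial} part at the frozen time $t_{k+1}$:
\[
F(t_{k+1},X_{t_{k+1}})-F(t_k,X_{t_k})=\underbrace{\big(F(t_{k+1},X_{t_k})-F(t_k,X_{t_k})\big)}_{\text{temporal}}+\underbrace{\big(F(t_{k+1},X_{t_{k+1}})-F(t_{k+1},X_{t_k})\big)}_{\text{spatial}}.
\]

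For the temporal part I would use that $s\mapsto F(s,x)$ is a semimartingale with characteristics $(\phi,\psi)$, so $F(t_{k+1},X_{t_k})-F(t_k,X_{t_k})=\int_{t_k}^{t_{k+1}}\phi(s,X_{t_k})\,ds+\int_{t_k}^{t_{k+1}}\psi(s,X_{t_k})\cdot dW_s$; summing over $k$ and letting the mesh shrink yields $\int_0^t\phi(s,X_s)\,ds+\int_0^t\psi(s,X_s)\cdot dW_s$, the first two genuinely new terms. For the spatial part I would Taylor expand in $x$ at fixed time $t_{k+1}$, writing it as $F_x(t_{k+1},X_{t_k})(X_{t_{k+1}}-X_{t_k})+\tfrac12 F_{xx}(t_{k+1},\xi_k)(X_{t_{k+1}}-X_{t_k})^2$ for an intermediate point $\xi_k$.

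The crux is the treatment of the first-order spatial coefficient $F_x(t_{k+1},X_{t_k})$, evaluated at the \emph{later} time $t_{k+1}$. I would decompose it as $F_x(t_k,X_{t_k})+\big(F_x(t_{k+1},X_{t_k})-F_x(t_k,X_{t_k})\big)$. The term $\sum_k F_x(t_k,X_{t_k})(X_{t_{k+1}}-X_{t_k})$ is a Riemann--It\^o sum converging to $\int_0^t F_x(s,X_s)\,dX_s$. The correction uses the hypothesis that $\bfF_x$ is itself an It\^o semimartingale with characteristics $(\phi_x,\psi_x)$: then $F_x(t_{k+1},X_{t_k})-F_x(t_k,X_{t_k})=\int_{t_k}^{t_{k+1}}\phi_x(s,X_{t_k})\,ds+\int_{t_k}^{t_{k+1}}\psi_x(s,X_{t_k})\cdot dW_s$, and after multiplication by $X_{t_{k+1}}-X_{t_k}$ the only part surviving in the limit is the martingale--martingale product, which accumulates to the joint bracket $\int_0^t\langle dF_x(s,x),dX_s\rangle|_{x=X_s}$ (the $ds\cdot dX$ contribution being negligible). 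Finally, the second-order Taylor term converges to $\tfrac12\int_0^t F_{xx}(s,X_s)\,\langle dX_s\rangle$ by the standard quadratic-variation argument, after replacing $\xi_k$ by $X_{t_k}$ using continuity of $F_{xx}$.

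The main obstacle is the rigorous passage to the limit in the stochastic-integral sums whose integrands are the random fields $\phi,\psi,F_x,\phi_x,\psi_x$ evaluated at the random, time-varying point $X_{t_k}$. This requires uniform control of the local characteristics and their $x$-derivatives on compacts, furnished by the $\mathcal{C}^2$ regularity of $\bfF$ together with local boundedness of $X$, so that the frozen-point integrands approximate their continuous counterparts in the appropriate $L^2$ or ucp sense. A standard localization by stopping times reduces everything to bounded integrands, after which dominated convergence for stochastic integrals (or the It\^o isometry applied increment-by-increment) secures each limit. The cross-variation term demands the most care: one must verify that the mixed products of the increments of $\bfF_x$ and of $X$ converge to the predictable bracket, and that all remaining cross terms (temporal $\times$ spatial, and the $\xi_k$-versus-$X_{t_k}$ discrepancy in the Hessian term) vanish, which again rests on joint continuity and integrability of the characteristics.
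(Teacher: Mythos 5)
Your discretization argument is correct in outline and is precisely the standard proof of this result: the paper itself states Proposition~\ref{IVF} without proof, referring to Ventzel \cite{Ventzel} and to Kunita \cite{Kunita:01} (Theorem~3.3.1), and Kunita's proof proceeds by exactly your decomposition --- telescoping over a partition, splitting each increment into a temporal part at the frozen point $X_{t_k}$ and a spatial Taylor expansion at frozen time, with the covariation term $\int_0^t \langle dF_x(s,x), dX_s\rangle|_{x=X_s}$ emerging from the product of the increment $F_x(t_{k+1},X_{t_k})-F_x(t_k,X_{t_k})$ with $X_{t_{k+1}}-X_{t_k}$, and the limits secured by localization and the regularity of the local characteristics (which in Kunita's framework is quantified by the $\|\cdot\|_{m,\delta:K}$ norms that the paper recalls in Section~3). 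So you have faithfully reconstructed the proof of the cited source rather than found a different route; no gaps beyond the standard technical estimates you already flag.
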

\vspace{2mm}

\noindent
{\bf Comment}  The first line of the right hand side of the equation corresponds
to the dynamics of the process $(F(t,x))_{t\ge 0}$ taken on $(X_t)_{t\ge 0}$, 
when in the second line, the first two terms come from the classical It\^o's
formula. The last term  represents the quadratic covariation between $dF_x(t,x)$
and $dX_t$, at $x=X_t$, which can be written  as $\psi_x(t,X_t).\sigma^X_t dt$
when the diffusion coefficient of $X$ is the vector $\sigma^X_t$.\\
It\^o-Ventzel's formula and monotonic change of variable  will help us to establish
the relationship between  local characteristics of the random fields $\bfU$ and $\bf \tilde U$.
\begin{Theorem}\label{ResPrA}
Let $\bfU$  a progressive utility  and $\bf \tU$ its progressive convex conjugate
utility assumed to be ${\mathcal C}^2 $-It\^o's  semimartingales with local characteristics
$(\beta, \gamma)$ and $(\tilde\beta, \tilde\gamma)$. We also assume that  their marginal utilities $\bfU_x$, and $\bf \tU_y$ 
are It\^o's semimartingales with local characteristics $(\beta_x, \gamma_x)$ and $(\tilde\beta_y, \tilde\gamma_y)$.\\
\rmi The dynamics of  $\bf \tU$ is driven by the non linear second order 
SPDE,
\begin{equation}\label{eq:conjugate}
d\tU(t,y)=\gamma(t,-\tU_y(t,y)).dW_t+\beta(t,-\tU_y(t,y))dt +\frac{1}{2}
\tU_{yy}(t,y)\|\gamma_x\big(t,-\tU_y(t,y)\big)\|^2\,dt.
\end{equation}
\rmii Assume $(\mu, \sigma)$ (the random coefficients of the SDE associated with
${\bf U}_x$) to be fairly regular for 
 the adjoint elliptic operator in divergence form is well defined,
\begin{equation}\label{eq:elliptic}
\widehat L_{t,y}^{\sigma,\mu}(f)=\demi\partial_y(\|\sigma(t,y)\|^2
\partial_yf(t,y))-\mu(t,y) \partial_y f(t,y).
\end{equation}
Then the marginal conjugate utility $\bf \tU_y$ is a monotonic solution of
the  forward SPDE
\begin{eqnarray}\label{AdjointSPDEA}
d\tU_y(t,y)= -\partial_y(\tU_y)(t,y)\sigma(t,y).dW_t+\wL_{t,y}^{\sigma,\mu}(\tU_y) dt, \quad \tU_y(0,y)=\tu_y(y).
\end{eqnarray}
{\rm Observe that the derivability of the local characteristics 
$(\tilde\beta, \tilde\gamma)$ of $\bf \tU$ requires the existence of a third
 derivative for $\bf \tU$, and thus for $\bf U$. Remark also that $(ii)$ characterizes the inverse of a SDE.}
\end{Theorem}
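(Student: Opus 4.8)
The plan is to derive both SPDEs from the Fenchel identity $\tU(t,y)=U(t,-\tU_y(t,y))+y\,\tU_y(t,y)$ of Definition \ref{defPUF} by freezing $y$, composing the random field $U$ with the one-dimensional It\^o semimartingale $X_t:=-\tU_y(t,y)$, and invoking the It\^o-Ventzel formula \eqref{ItoGen}. Writing the identity as $\tU(t,y)=U(t,X_t)-y\,X_t$, one gets, at fixed $y$, $d\tU(t,y)=dU(t,X_t)-y\,dX_t$. Applying \eqref{ItoGen} to $F=U$, with characteristics $(\beta,\gamma)$ and $F_x=U_x$ with characteristics $(\beta_x,\gamma_x)$, expands $dU(t,X_t)$ into the drift $\beta(t,X_t)\,dt$, the diffusion $\gamma(t,X_t).dW_t$, the transport term $U_x(t,X_t)\,dX_t$, and the two second order terms $\demi U_{xx}(t,X_t)\langle dX_t\rangle$ and $\langle dU_x(t,x),dX_t\rangle|_{x=X_t}$.

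The key simplification is the optimality relation $U_x(t,X_t)=U_x(t,-\tU_y(t,y))=y$ from Definition \ref{defPUF}(iii): the transport term becomes $y\,dX_t$ and cancels exactly against $-y\,dX_t$, so no $dX_t$ survives and the drift of $X$ is never needed. To evaluate the quadratic terms I would use that $t\mapsto U_x(t,X_t)\equiv y$ is constant: applying \eqref{ItoGen} to $U_x$ and setting its martingale part to zero forces $\gamma_x(t,X_t)+U_{xx}(t,X_t)\,\sigma^X_t=0$, whence the diffusion coefficient of $X$ is $\sigma^X_t=-\gamma_x(t,X_t)/U_{xx}(t,X_t)$. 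Substituting $\langle dX_t\rangle=\|\gamma_x(t,X_t)\|^2/U_{xx}(t,X_t)^2\,dt$ and $\langle dU_x(t,x),dX_t\rangle|_{x=X_t}=\gamma_x(t,X_t).\sigma^X_t\,dt=-\|\gamma_x(t,X_t)\|^2/U_{xx}(t,X_t)\,dt$ collapses the two second order contributions into $-\demi\|\gamma_x(t,X_t)\|^2/U_{xx}(t,X_t)\,dt$. Differentiating $U_x(t,-\tU_y(t,y))=y$ in $y$ gives $U_{xx}(t,X_t)\,\tU_{yy}(t,y)=-1$, i.e. $1/U_{xx}(t,X_t)=-\tU_{yy}(t,y)$, which turns that residual term into $+\demi\,\tU_{yy}(t,y)\|\gamma_x(t,X_t)\|^2\,dt$ and produces exactly \eqref{eq:conjugate}.

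For part (ii) I would differentiate \eqref{eq:conjugate} in $y$, the hypotheses that $\tU_y$ is again an It\^o semimartingale with characteristics $(\tilde\beta_y,\tilde\gamma_y)$ guaranteeing that $\partial_y$ commutes with the stochastic differential. With $\sigma(t,y)=\gamma_x(t,-\tU_y(t,y))$ and $\mu(t,y)=\beta_x(t,-\tU_y(t,y))$ as in \eqref{eq:ItoSDE}, the chain rule gives for the diffusion $\partial_y[\gamma(t,-\tU_y)]=-\tU_{yy}\,\sigma(t,y)=-\partial_y(\tU_y)\,\sigma(t,y)$, the martingale coefficient of \eqref{AdjointSPDEA}. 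For the drift, $\partial_y[\beta(t,-\tU_y)]=-\mu(t,y)\,\tU_{yy}$, while $\partial_y[\demi\,\tU_{yy}\|\sigma(t,y)\|^2]=\demi\,\tU_{yyy}\|\sigma\|^2+\demi\,\tU_{yy}\,\partial_y\|\sigma\|^2=\demi\,\partial_y(\|\sigma\|^2\,\partial_y\tU_y)$; their sum is precisely $\wL_{t,y}^{\sigma,\mu}(\tU_y)=\demi\,\partial_y(\|\sigma\|^2\partial_y\tU_y)-\mu\,\partial_y\tU_y$, giving \eqref{AdjointSPDEA}, while monotonicity of $\tU_y$ is inherited from the strict convexity of $\tU$.

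The step I expect to be the main obstacle is the rigorous It\^o-Ventzel bookkeeping in part (i): correctly handling the cross-variation term $\langle dU_x(t,x),dX_t\rangle|_{x=X_t}$ and, above all, identifying the diffusion coefficient of $X_t=-\tU_y(t,y)$ from the vanishing martingale part of the constant process $U_x(t,X_t)\equiv y$, rather than computing it directly from $\tilde\gamma_y$. The differentiation-in-$y$ of part (ii) is then routine, provided one has enough regularity to differentiate the SPDE in $y$ under the stochastic integral and to ensure the existence of $\tU_{yyy}$ (equivalently a third $x$-derivative of $U$), exactly as flagged in the remark following the statement.
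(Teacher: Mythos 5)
Your proposal is correct and follows essentially the same route as the paper's proof: It\^o--Ventzel applied to $F(t,x)=U(t,x)-y\,x$ along $X_t=-\tU_y(t,y)$, with the first-order identity $U_x(t,-\tU_y(t,y))\equiv y$ killing the $dX_t$ term, the relation $U_{xx}(t,-\tU_y(t,y))=-1/\tU_{yy}(t,y)$ collapsing the two second-order contributions into $\demi\,\tU_{yy}(t,y)\|\gamma_x(t,-\tU_y(t,y))\|^2\,dt$, and part (ii) obtained by term-by-term differentiation in $y$, exactly as printed. The single deviation is how you identify the diffusion coefficient of $X_t$: you apply It\^o--Ventzel a second time to the constant process $U_x(t,X_t)\equiv y$ and set its martingale part to zero (which tacitly requires $U_{xx}$ to be an It\^o semimartingale, slightly more regularity than the theorem states), whereas the paper reads the martingale part of the main expansion to get $\tilde\gamma(t,y)=\gamma(t,-\tU_y(t,y))$, differentiates this identity in $y$, and invokes the standing hypothesis that $(\tilde\beta_y,\tilde\gamma_y)$ are the characteristics of $\tU_y$, yielding $\tilde\gamma_y(t,y)=-\gamma_x(t,-\tU_y(t,y))\,\tU_{yy}(t,y)$ --- the same coefficient, obtained within the stated assumptions.
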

\begin{pf} 
 Let   apply  It\^o-Ventzel's formula to the
regular random field
$F(t,x)=U(t,x)-y\,x$ and to the semimartingale $X_t=-\tU_y(t,y)$. The following identities 
will be 
useful, 
$F(t,-\tU_y(t,y))=\tU(t,y)$, $U_{xx}(t,-\tU_y(t,y))=-1/\tU_{yy}(t,y)$.\\[1mm] 
\rmi a) Observe that $F_x(t,-\tU_y(t,y))=U_x(-\tU_y(t,y))-y\equiv 0$, so that
the term in $F_x(s,X_s) dX_s$ disappears in the  It\^o-Ventzel formula; then
the diffusion random field ${\tilde \gamma}$ of $\bf \tU$ is ${\tilde
\gamma}(t,y)=\gamma(t,-\tU_y(t,y))$. Its derivative ${\tilde
\gamma}_y(t,y)=-\gamma_x(t,-\tU_y(t,y))\tU_{yy}(t,y)$ is by assumption the
diffusion characteristic of $\bf \tU_y$. Hence the covariation term is driven
by $\langle dF_x(t,x),-d\tU_y(t,y)\rangle=-\langle \gamma_x(t,x),{\tilde
\gamma}_y(t,y)\rangle dt$.\\
 b)
The  It\^o-Ventzel's formula is then reduced to,
\begin{eqnarray*}
d \tU(t,y)&-&\beta(t,-\tU_y(t,y))dt-\gamma(t,-\tU_y(t,y)).dW_t \\
&&= \frac{1}{2}U_{xx}\big(t, -\tU_y(t,y)\big)\langle d\tU_y(t,y)\rangle -\langle
\gamma_x(t,-\tU_y(t,y)).{\tilde \gamma}_y(t,y)\rangle dt\\
&& =\frac{1}{2}U_{xx}(t, -\tU_y(t,y))\|{\tilde \gamma}_y(t,y)\|^2 dt-U_{xx}(t,
-\tU_y(t,y))\|{\tilde \gamma}_y(t,y)\|^2dt\\
U_{xx}\text{- formulation}
&& =-\frac{1}{2}U_{xx}(t, -\tU_y(t,y))\|{\tilde \gamma}_y(t,y)\|^2
dt\\
\tU_{yy}\text{-formulation}&& =\frac{1}{2} \tU_{yy}(t,y)\|\gamma_x\big(t,-\tU_y(t,y)\big)\|^2dt.
\end{eqnarray*}
\rmii The dynamics of $\bf \tU_y$ is obtained
(by assumption and Theorem \ref{DRules}) by differentiating term by term  in the previous equation.
The use of coefficients $\sigma(t,y)=\gamma_x\big(t,-\tU_y(t,y)\big)$ and
$\mu(t,y)=\beta_x\big(t,-\tU_y(t,y)\big)$ of the SDE associated with ${\bf U}_x$
allows us to express $\bf \tU_y$ as the solution of a SPDE driven by the
operator t
$\wL^{\sigma,\mu}_{t,y}=\demi\partial_y(\|\sigma(t,y)\|^2 \partial_y)-\mu(t,y)
\partial_y$,
\begin{eqnarray*}
d\tU_y(t,y)&=&-\tU_{yy}(t,y)[\mu(t,y)dt+\sigma(t,y).dW_t]+\partial_y(\frac{1}{2}
\tU_{yy}(t,y)\|\sigma(t,y)\|^2)dt\\
&=& -\partial_y\tU_{y}(t,y)\sigma(t,y).dW_t+\wL_{t,y}^{\sigma,\mu}(\tU_y)dt
\end{eqnarray*}
The proof is achieved.
\end{pf} 
\begin{Remark}\label{rq: inverseSDE}{\rm 
Obviously, we are also interested in the properties of the SDE$(\tilde{\mu},\tilde{\sigma})$
 associated with the monotonic random field
 $\bf \tU_y$, 
 $\tilde{\mu}(t,z)= {\tilde \beta}_y(t,(\tU_y)^{-1}(t,z))$ and $\tilde{\sigma}(t,z)= {\tilde
\gamma}_y(t,(\tU_y)^{-1}(t,z))$.
Given that $(\tU_y)^{-1}(t,z)=-U_x(t,z)$,\\[-8mm]
\bc
 $U_{xx}(t,z) {\tilde \sigma}(t,z)=\gamma_x(t,-z)$ and $
U_{xx}(t,-z){\tilde\mu}(t,-z)= \Big(\beta_x(t,z)-\frac{1}{2}\partial_x\big(\frac{\|{\gamma}_x(t,z)\|^2}{\,U_{xx
}(t,z)}\big)\Big)$.
\ec
 It is clear that these coefficients are not globally
Lipschitz. The problem in studying directly the SDE($\tilde
\sigma, \tilde
\mu$) is the existence of a possible explosion time $\tau(x)$ as it is  shown 
in the next section \ref{subsec:regularityKunita}, Theorem \ref{DRules}. Let us first introduce  some
additional tools about regularity issues.}
\end{Remark}
\section{Regularity of It\^o's random fields and SPDEs}\label{FRSDE}
In this section, we focus on the regularity of the local characteristics of It\^os random fields in order to justify flows properties, in particular in terms of derivatives, monotony....I
We also
establish a connection between SDEs and SPDEs, that will be useful  to characterize  the market-consistent progressive utilities 
 from their dynamics.
\subsection{Regularity issues} \label{subsec:regularityKunita}
We shall discuss the regularity of an It\^o semimartingale random field
$F(t,x)=F(0,x)+\int_0^t\phi(s,x)ds+\int_0^t\psi(s,x).dW_s$
in connection with the regularity of its local characteristics $(\phi,\psi)$ and conversely. We are also concerned with the
same questions concerning SDEs solutions, where the spatial parameter is the initial condition. As in the deterministic case,
it is necessary to introduce some spatial norms very similar to Sobolev norms.\\[-9mm]

\paragraph{Definition of norms  and spaces}\label{normdef} Let $\phi$ be a continuous $\R^k$-valued progressive random field
and let $m$ be
a non-negative integer, and $\delta$ a number in $(0,1]$ . We need to control the asymptotic behavior in $0$ and $\infty$ of
$\phi$, and the regularity of its H\"{o}lder derivatives when there exist. More precisely, 
let $\phi$ be in the class ${\mathcal C}^{m, \delta}(]0,+\infty[)$, i.e.
$(m, \delta)$-times continuously differentiable in  $x$ for any $t$, a.s. \\
\rmi For any  subset  $K\subset ]0,+\infty[$, we define the family of
 random (H\"{o}lder)  $K$-semi-norms 
 \begin{equation}\label{holdernorm}
 \left\{
  \begin{array}{cll}
&\|\phi\|_{m:K}(t,\omega)=\sup_{\substack{x\in
K}}\frac{\|\phi(t,x,\omega)\|}{x}+\sum_{1\le j\le m}\sup_{\substack{x\in
K}}\|\partial_x^{j}\phi(t,x,\omega)\|\\[2mm]
& \|\phi\|_{m,\delta:K}(t,\omega)=\|\phi\|_{m:K}(t,\omega)+\displaystyle{\sup_{\substack{x,
y\in
K}}\frac{\|\partial_x^{m}\phi(t,x,\omega)-\partial_x^{m}\phi(t,y,\omega)\|}{
|x-y|^\delta}}.
\end{array}
\right.
 \end{equation}
The case $(m=0, \delta=1)$
corresponds to the local version of the Lipschitz case used in Section 1.
When $K$ is  all the domain $ ]0,+\infty[$, we simply write
$\|.\|_{m}(t,\omega)$, or $\|.\|_{m,\delta}(t,\omega)$.\\
\rmii The previous semi-norms are related to the spatial parameter. We add the temporal
dimension in assuming
 these semi-norms (or the square of the semi-norm) to be
integrable in time with respect to the Lebesgue measure on $ [0, T] $ for all $
T $. Then, as in Lebesgue's Theorem, we can differentiate, pass to the limit, commute
limit and integral for the random fields.  Calligraphic notation recalls that these semi-norms are random. 
 
 \rma $\,{\K}^{m,\delta}_{loc}$ (resp. $\overline{{\K}}^{m,\delta}_{loc}$) denotes the set of all  $\Cc^{m,\delta}$-random
fields
such that  for any compact $K\subset]0,+\infty[$, and any $T$,
$\int_0^T\|\phi\|_{m,\delta:K}(t,\omega)dt<\infty$, (resp.$ \int_0^T\|\psi\|^{
2}_{m,\delta:K}(t,\omega)dt<\infty$ ). 

\rmb When these different norms are well-defined on the whole space $]0,+\infty[$, the
derivatives (up to a certain order) are bounded in the spatial parameter, with
integrable  (resp. square integrable) in time random bound. In this case, we use the notations
${\K}^{m}_b,\overline{{\K}}^{m}_{b}$ or ${\K}^{m,\delta}_b,\overline{{\K}}^{m,\delta}_{b}.$\\[-8mm]

\paragraph{Regularity properties of random fields and SDEs}
 The following proposition is a short presentation of technical results in Kunita \cite{Kunita:01}.

\begin{Proposition}[Differential rules for random fields]\label{DRules}
Let  $\bf F$  be an It\^o semimartingale random field with local characteristics  $(\phi,\psi)$, 
 $F(t,x)=F(0,x)+\int_0^t\phi(s,x)ds+\int_0^t\psi(s,x).dW_s$\\[1mm]
\rmi If $\mathbf F$ is a ${\K}^{m,\delta}_{loc}$-semimartingale for some $m\ge0,~\delta \in (0,1]$,  its local characteristics 
$(\phi,\psi)$ are of class ${\K}^{m,\eps}_{loc}\times \overline{{\K}}^{m,\eps}_{loc}$ for any $\eps<\delta$.\\[1mm]
 \rmii Conversely, if the local characteristics $\mathbf{ (\phi,\psi)}$ are of class  ${\K}^{m,\delta}_{loc}\times
\overline{{\K}}^{m,\delta}_{loc}$,
then $\bf F$ is a ${\K}^{m,\eps}_{loc}$-semimartingale for any $\eps<\delta$.\\
\rmiii In any cases, for $m\ge 1, \delta \in (0,1]$, the derivative random field $\bfF_x$ is an It\^o random field with local
characteristics $(\phi_x,\psi_x)$.
\end{Proposition}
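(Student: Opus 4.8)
The plan is to treat these as the random-field versions of the classical It\^o regularity theory and to establish them through Kolmogorov's continuity criterion coupled with the Burkholder--Davis--Gundy (BDG) inequalities, exactly as in Kunita \cite{Kunita:01}. The localization on compacts $K\subset]0,+\infty[$ built into the $\K^{m,\delta}_{loc}$ spaces reduces every estimate to a bounded spatial domain, while the weight $\|\phi\|/x$ in the $K$-semi-norms controls the behaviour near $0$. Since the $\K_{loc}$ conditions are pathwise integrability conditions, I would first perform a standard localization by stopping times that bounds $\int_0^\cdot\|\phi\|_{m,\delta:K}\,ds$ and $\int_0^\cdot\|\psi\|^2_{m,\delta:K}\,ds$, reducing to the case of deterministic bounds where the moment estimates below are legitimate.

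For the converse direction \rmii, I would start from $(\phi,\psi)\in\K^{m,\delta}_{loc}\times\overline{\K}^{m,\delta}_{loc}$ and prove the spatial H\"older regularity of $\bfF$ by moment estimates. Writing the increment of the top derivative as a drift integral plus a stochastic integral, I would bound, for $p$ large and $x,y\in K$,
\[
\E\big[\sup_{t\le T}|\partial_x^m F(t,x)-\partial_x^m F(t,y)|^p\big]\le C_p\,|x-y|^{p\delta},
\]
using Jensen/H\"older on the drift term and the BDG inequality on the martingale term $\int_0^t(\partial_x^m\psi(s,x)-\partial_x^m\psi(s,y)).dW_s$ together with the defining $\delta$-H\"older bound of $\partial_x^m\psi$. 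Kolmogorov's continuity theorem then produces a jointly continuous modification whose $m$-th spatial derivative is H\"older of every order $\eps<\delta$; the lower derivatives follow from the sup part of the norm. Because $p$ may be taken arbitrarily large, the only loss is the arbitrarily small $\eps$, which is intrinsic to Kolmogorov's criterion, yielding $\bfF\in\K^{m,\eps}_{loc}$.

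For the direct direction \rmi, I would invoke uniqueness of the semimartingale decomposition: $\phi$ is read off the finite-variation part of $\bfF$ and $\psi$ off the bracket through $\langle F(\cdot,x),F(\cdot,y)\rangle_t=\int_0^t\psi(s,x).\psi(s,y)\,ds$. The spatial regularity of $\bfF$ then transfers to $(\phi,\psi)$ by applying the same Kolmogorov/BDG estimates to these functionals, once again at the cost of an arbitrarily small $\eps$. For the differentiation statement \rmiii, assuming $m\ge1$, I would differentiate under the stochastic integral: the difference quotients $h^{-1}\big(F(t,x+h)-F(t,x)\big)$ are It\^o random fields with characteristics $h^{-1}\big(\phi(t,x+h)-\phi(t,x)\big)$ and $h^{-1}\big(\psi(t,x+h)-\psi(t,x)\big)$, which converge to $(\phi_x,\psi_x)$ in $\K^{m-1,\eps}_{loc}\times\overline{\K}^{m-1,\eps}_{loc}$ as $h\to0$ by the assumed regularity. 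The continuity of the map $\psi\mapsto\int_0^\cdot\psi.dW$ in these norms (a BDG estimate) then lets me pass to the limit and identify the limit with the spatial derivative $F_x$, an It\^o random field with characteristics $(\phi_x,\psi_x)$.

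The main obstacle is precisely this interchange of spatial differentiation with the stochastic integral in \rmiii: justifying $\partial_x\int_0^t\psi(s,x).dW_s=\int_0^t\psi_x(s,x).dW_s$ demands uniform-in-$h$ moment control of the difference quotients together with careful handling of the localization near $0$ and $\infty$, which is exactly what the weighted $K$-semi-norms are designed to supply. Everything else is a bookkeeping exercise in iterating the one basic Kolmogorov/BDG estimate over the derivative order $j\le m$.
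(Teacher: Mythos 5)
Your sketch is sound and follows essentially the route the paper relies on: the paper gives no proof of Proposition \ref{DRules}, presenting it explicitly as ``a short presentation of technical results in Kunita \cite{Kunita:01}'', and Kunita's proofs of these statements proceed precisely by your scheme --- localization by stopping times, BDG moment estimates plus the Kolmogorov continuity criterion for the transfer of $\K^{m,\delta}_{loc}$-regularity in both directions (reading $\phi$ off the finite-variation part and $\psi$ off the joint bracket for the direct implication), and convergence of difference quotients to justify $\partial_x\int_0^t\psi(s,x).dW_s=\int_0^t\psi_x(s,x).dW_s$ for \rmiii. You also correctly identify that the drop from $\delta$ to an arbitrary $\eps<\delta$ is intrinsic to the Kolmogorov criterion, which is exactly why the statement is formulated with that loss of H\"older exponent.
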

 The particular case of SDEs solutions is of major interest for the applications. The presentation follows \cite{MrNek01}.
\begin{Theorem}[Flows property of SDE] \rma {\bf Strong solution}  Consider a SDE$(\mu,\sigma)$, whith uniformly Lipschitz coefficients $(\mu, \sigma)\in  {\K}^{0,1}_{\bf b}\times \overline{{\K}}^{0,1}_{\bf b}$. 
There exists a unique strong solution $X$ such that 
 \\[1mm]
\centerline{$dX_t=\mu(t,X_t)dt+\sigma(t,X_t).dW_t,\quad X_0=x$.}\\[1mm]
\rmi If $ \mu \in {\K}^{m,\delta}_{\bf b}$ and $\sigma \in \overline{{\K}}^{m,\delta}_{\bf b}$ for some  $ m\ge 1,\delta\in
(0,1]$, the 
solution ${\bf X}=(X_t^x, x>0)$  is a $ \K^{m,\eps}_{loc}$ semimartingale
for any $\eps<\delta$. The
inverse ${\bf X^{-1}}$ of $\bf X$ is also of class ${\cal C}^{m}$. Then, the derivatives $\bf X_x$ and $\bf 1/X_x$ are $
{\K}^{m-1,\eps}_{loc}$-semimartingales.\\
\rmii The local characteristics of ${\bf X}$, $\lambda^X(t,x)=\mu(t,X^x_t)$ and
$\theta^X(t,x)=\sigma(t,X^x_t)$ have only local properties and belong to ${\K}^{m,\eps}_{loc}\times
\overline{{\K}}^{m,\eps}_{loc}$ for any $\eps<\delta$. \\[1mm]
\rmb {\bf Local SDEs} Assume only local property on the coefficients, 
$(\mu,\sigma)\in {\K}^{0,1}_{\bf loc}\times \overline{{\K}}^{0,1}_{\bf loc}$. \\
\rmi Then,
for any initial
condition $x$, the SDE  has a unique  maximal monotonic solution $(X^x_t)$ up to an
explosion time $\tau(x)$, and $(X^x_t)$  is a global solution if and only if the explosion time $\tau(x)$ is equal to
$\infty$
for all $x>0$ a.s..\\
\rmii If $(\mu,\sigma)\in{\K}^{m,\delta}_{loc}\times\overline{{\K}}^{m,\delta}_{loc},\>m\ge1,~0<\delta\leq 1
,$  $X_t(.)$ is of class $\Cc^{m,\eps}, \eps<\delta$ on $\{ \tau(x)>t\}$.

\end{Theorem}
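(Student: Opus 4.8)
The plan is to prove the global-Lipschitz statements (a) first and then obtain the local statements (b) by localization, invoking Proposition \ref{DRules} for the differential rules. First I would establish existence and uniqueness of the strong solution by Picard iteration. Since the Lipschitz bounds here are random and only integrable (resp.\ square-integrable) in time rather than constant, the contraction should be run in a Banach space of adapted processes equipped with a norm weighted by $\exp(-\lambda A_t)$, where $A_t=\int_0^t(C_s+K_s^2)\,ds$ is the random clock built from the Lipschitz bounds $C$ and $K$; Burkholder--Davis--Gundy together with a Gronwall argument driven by this random clock yield the contraction, and uniqueness follows identically.

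Next, for the $\K^{m,\eps}_{loc}$ regularity of the flow I would derive $L^p$ moment estimates, for every $p$, of the increments $X_t^x-X_t^{x'}$ and of the difference quotients $(X_t^x-X_t^{x'})/(x-x')$, again via Burkholder--Davis--Gundy and Gronwall. Kolmogorov's continuity criterion applied to the spatial parameter $x$ then produces a version of the flow that is jointly continuous and $(m,\eps)$-H\"older in $x$ for every $\eps<\delta$. Differentiability is obtained by showing that the difference quotients converge in $L^p$ to the solution $X_x$ of the linear variational SDE got by formally differentiating SDE$(\mu,\sigma)$ in $x$; iterating up to order $m$ gives $\bf X\in\K^{m,\eps}_{loc}$, which is precisely the regime in which Proposition \ref{DRules} applies.

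For the remaining claims of (i) and for (ii), the map $x\mapsto X_t^x$ is strictly increasing, since two solutions issued from distinct initial points cannot cross by uniqueness, and by the range property it is a bijection of $(0,\infty)$; because $X_x>0$, the pathwise inverse function theorem makes $\bf X^{-1}$ of class $\Cc^m$, and $\bf X_x$, $\bf 1/X_x$ are $\K^{m-1,\eps}_{loc}$-semimartingales ($1/X_x$ because $X_x$ solves a linear SDE, stays strictly positive, and Itô's formula applies). For (ii) the local characteristics $\lambda^X(t,x)=\mu(t,X_t^x)$ and $\theta^X(t,x)=\sigma(t,X_t^x)$ are compositions of the $\K^{m,\delta}$ coefficients with the $\K^{m,\eps}_{loc}$ flow; the chain rule delivers membership in $\K^{m,\eps}_{loc}\times\overline{\K}^{m,\eps}_{loc}$, and these are only local because composition with the (possibly unbounded) flow destroys the uniform-in-$x$ bounds.

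Finally, for the local case (b) I would localize: set $\tau_n(x)=\inf\{t: X_t^x\notin[1/n,n]\}$ and, on $[0,\tau_n)$, replace $(\mu,\sigma)$ by coefficients truncated outside $[1/n,n]$ so that they become globally Lipschitz; part (a) then applies inside each such region. Patching over $n$ yields a unique maximal monotonic solution up to $\tau(x)=\lim_n\tau_n(x)$, global precisely when $\tau(x)=\infty$ a.s., with monotonicity again from non-crossing, while (ii) is just the localized version of (i) on $\{\tau(x)>t\}$. \textbf{The hard part} will be the passage from the a priori $L^p$ estimates to genuine pointwise differentiability of the flow, together with the strict positivity and regularity of the inverse flow: everything downstream ($1/X_x$, $\bf X^{-1}$, the local characteristics, and the explosion analysis) rests on showing that the difference quotients really converge to the variational-equation solution and that $X_x$ never vanishes.
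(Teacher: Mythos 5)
Your proposal is correct and follows essentially the intended argument: the paper itself gives no proof of this theorem, presenting it as a summary of results from Kunita \cite{Kunita:01} (via \cite{MrNek01}), and your sketch --- Picard iteration under a random-clock weighted norm, $L^p$ estimates via Burkholder--Davis--Gundy and Gronwall, Kolmogorov's criterion plus the variational equation for the $\K^{m,\eps}_{loc}$ flow regularity, non-crossing for monotonicity, the strictly positive linear SDE for $X_x$ and the inverse function theorem for ${\bf X}^{-1}$, and truncation on $[1/n,n]$ for the local/maximal-solution case --- is exactly the standard Kunita route that the citation points to. You also correctly identify the delicate step (convergence of difference quotients to the variational solution and non-vanishing of $X_x$), so no gap to flag.
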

 
\subsection{Solvable SPDEs via SDEs}
Since we are only concerned with non explosive solution to SDEs, we give a name to this specific class.\\ 
{\bf Class  $ \S^{m,\delta}:$} A SDE$( \mu, \sigma$) with
$(\mu,\sigma)\in{\K}^{m,\delta}_{loc}\times\overline{{\K}}^{m,\delta}_{loc}$ whose   local solution is non explosive is
said
to be of {\bf class  $ \S^{m,\delta}$}.\\
The typical example of SDE in  $\S^{m,\delta}$ is the SDE associated with the marginal conjugate utility considered as the
inverse of a solution
of SDE$(\mu,\sigma$) as in Theorem \ref{ResPrA} for  which
a  SPDE has been  associated in a very natural way in Theorem \ref{ResPrA}, under the the assumption that  the inverse flow $-{\bf \tU}_y$ of ${\bf U}_x$ is a semimartingale.  This may seem obvious, but generally the inverse of a
semimartingale is not necessarily a semimartingale. A way to define the regularity  required on the coefficients $(\mu,\sigma)$  is to formally transform the SPDE into a SDE   and to apply previous result on SDE. 
We also point that the inverse flow is less regular than the flow itself. The SPDE point of view is more efficient to calculate the stochastic transformation of the solution or of its inverse, and
 allows us to establish an exact  connection between SDEs and SPDEs.
This last point of view is well-suited  to the study of  progressive
utilities developed in this paper. \\[-6mm]
\begin{Proposition}\label{SDEInv}
Let $(X(t,x))$ be the monotonic solution of a SDE$(\mu,\sigma)$ of  class $\S^{m, \delta},m\ge
2,~\delta\in ]0,1]$,
so that as
random
field
 $(X(t,x))$ and its local characteristics $(\lambda(t,x)=\mu(t,X(t,x))$ and $(\theta(t,x)=\sigma(t,X(t,x))) $ are of class
$\K^{m,\eps}_{loc}$  and  $ {\cal
L}^{m,\eps}_{loc}\times\overline{{\K}}^{m,\eps}_{loc}$ for any $0<\eps<1$.
 We are concerned with
the SDE$( \tilde\mu,\tilde\sigma)$ 
\begin{equation}\label{EqFInv}
d\xi_t=-\frac{1}{X_x(t,\xi_t)}\Big[\big(\lambda(t,\xi_t)-\frac{1}{2}
\partial_x\big(\frac{ \|\theta\|^2}{X_x}\big)(t,\xi_t)\big)dt+
\theta(t,\xi_t).dW_t\Big],~ \xi_0=z,
\end{equation}
where 
 $\displaystyle \tilde \sigma(t,z)=-\frac{\theta(t,z)}{X_x(t,z)}$ and 
$\displaystyle \tilde \mu(t,z)=\frac{1}{X_x(t,z)}\Big(\frac{1}{2}
\partial_x\big(\frac{ \|\theta\|^2}{X_x}\big)(t,z)-\lambda(t,z)\Big). $\\[3mm]
 \rmi The SDE$(\tilde \mu,\tilde \sigma)$ is of  class $ \S^{m-2,\eps}$ $(0<\eps <\delta)$ and its unique monotonic solution
$\xi^z$ is the  inverse flow $X^{-1}$ of $X$.\\
\rmii Consequently, the inverse $X^{-1}$ of $X$ is a semimartingale and 
belongs to the class ${\K}^{m-2,\eps}_{loc}\cap {\cal C}^{m}.$
\end{Proposition}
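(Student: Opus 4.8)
The plan is to show that the geometric inverse flow $z\mapsto X^{-1}(t,z)$ is precisely the unique monotonic non-exploding solution of SDE$(\tilde\mu,\tilde\sigma)$, the coefficients having been chosen exactly so that the dynamics of $X(t,\xi_t)$ vanishes. Fix $z>0$ and consider the candidate $\xi_t:=X^{-1}(t,z)$, the unique point with $X(t,\xi_t)=z$; this is well defined for every $t$ because, $X$ being of class $\S^{m,\delta}$ and monotonic, $x\mapsto X(t,x)$ is a $\mathcal C^{m}$ diffeomorphism of $]0,+\infty[$ onto $]0,+\infty[$, with inverse of class $\mathcal C^{m}$ by the Flows property of SDE theorem. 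The heart of the argument is to apply It\^o-Ventzel's formula (Proposition \ref{IVF}) to the $\mathcal C^2$-It\^o random field $F=X$ (legitimate since $m\ge 2$ makes $X$ and $X_x$ It\^o semimartingales, with characteristics $(\lambda,\theta)$ and $(\lambda_x,\theta_x)$) along an It\^o process $\xi$ with $d\xi_t=\tilde\mu(t,\xi_t)dt+\tilde\sigma(t,\xi_t).dW_t$. Forcing the martingale part of $X(t,\xi_t)$ to vanish gives $\theta(t,\xi_t)+X_x(t,\xi_t)\tilde\sigma(t,\xi_t)=0$, hence $\tilde\sigma=-\theta/X_x$; forcing the finite-variation part to vanish, where the covariation term contributes $\theta_x.\tilde\sigma\,dt$ alongside $\frac12 X_{xx}\|\tilde\sigma\|^2\,dt$, yields after substitution exactly $\tilde\mu=\frac{1}{X_x}\big(\frac12\partial_x(\|\theta\|^2/X_x)-\lambda\big)$, matching the announced coefficients (and those already obtained in Remark \ref{rq: inverseSDE}).

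Next I would check the regularity claim. Since $X\in\K^{m,\eps}_{loc}$ is strictly monotone with $X_x>0$, the Flows property of SDE theorem gives $X_x$ and $1/X_x$ in $\K^{m-1,\eps}_{loc}$ (the quotient staying in the same H\"older class because $1/X_x$ is bounded below on compacts). With $\theta\in\overline{\K}^{m,\eps}_{loc}$ and $\lambda\in\L^{m,\eps}_{loc}$, the products and quotients entering $\tilde\sigma=-\theta/X_x$ and $\tilde\mu$ are controlled by the multiplicative stability of the semi-norms $\|\cdot\|_{m,\delta:K}$. The only genuine loss is the extra spatial derivative $\partial_x(\|\theta\|^2/X_x)$ in the drift: inversion already costs one derivative through $1/X_x\in\K^{m-1,\eps}_{loc}$, and this $\partial_x$ costs a second, so $\tilde\mu\in\K^{m-2,\eps}_{loc}$, while $\tilde\sigma\in\overline{\K}^{m-1,\eps}_{loc}\subset\overline{\K}^{m-2,\eps}_{loc}$. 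Hence $(\tilde\mu,\tilde\sigma)\in\K^{m-2,\eps}_{loc}\times\overline{\K}^{m-2,\eps}_{loc}$ with $m-2\ge0$, so the local SDE part of the Flows property of SDE theorem applies and produces a unique maximal monotonic solution $\xi^z$ up to an explosion time $\tau(z)$, of class $\mathcal C^{m-2,\eps}$ on $\{\tau(z)>t\}$.

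It remains to identify $\xi^z$ with $X^{-1}(\cdot,z)$ and to rule out explosion, and this is the main obstacle, since a priori the merely locally regular coefficients $(\tilde\mu,\tilde\sigma)$ permit a finite lifetime. Running the It\^o-Ventzel computation of the first paragraph on the genuine solution $\xi^z$ over $[0,\tau(z))$, both the drift and the diffusion of $X(t,\xi^z_t)$ vanish by the very construction of $(\tilde\mu,\tilde\sigma)$, so $X(t,\xi^z_t)=X(0,\xi^z_0)=X(0,z)=z$ for all $t<\tau(z)$; thus $\xi^z_t=X^{-1}(t,z)$ on $[0,\tau(z))$. But $t\mapsto X^{-1}(t,z)$ is continuous and interior-valued in $]0,+\infty[$ on all of $[0,+\infty[$, so it cannot approach the boundary as $t\uparrow\tau(z)$; were $\tau(z)<\infty$, this would contradict the explosion of $\xi^z$. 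Therefore $\tau(z)=+\infty$, the solution is global, SDE$(\tilde\mu,\tilde\sigma)$ is of class $\S^{m-2,\eps}$, and $\xi^z=X^{-1}(\cdot,z)$, which proves (i).

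Finally, (ii) follows at once: as the solution of a class-$\S^{m-2,\eps}$ SDE, the inverse flow $X^{-1}$ is a semimartingale and, by the spatial-regularity part of the Flows property of SDE theorem, belongs to $\K^{m-2,\eps}_{loc}$; combined with the $\mathcal C^{m}$ regularity of the inverse already granted by that theorem applied to $X$ itself, this yields $X^{-1}\in\K^{m-2,\eps}_{loc}\cap\mathcal C^{m}$.
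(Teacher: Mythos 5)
Your proposal is correct and takes essentially the same route as the paper: identification of the coefficients $(\tilde\mu,\tilde\sigma)$ via It\^o-Ventzel (the paper imports them from Remark \ref{rq: inverseSDE}), local regularity of the coefficients yielding a unique maximal solution up to a lifetime $\tau(z)$, verification that $X(t,\xi_t)=z$ on $[0,\tau(z))$, and non-explosion by contradiction with the boundary behavior of the monotonic flow. The only difference is cosmetic: you write out the It\^o-Ventzel verification that the paper omits (it merely notes the computation is similar to the proof of Theorem \ref{SPDEInv}).
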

\begin{pf}
\rmi According to Theorem \ref{ResPrA}, $\bf X$ may be considered up to a change of initial variable as a marginal progressive
utility $\bfU_x$. From Remark \ref{rq: inverseSDE}, if its inverse $\xi^X$ is "regular", then $\xi^X$ is solution of
SDE($\tilde \mu, \tilde \sigma)$ with $X_{x}(t,z) \>\tilde\sigma(t,z)=\sigma_x(t,X(t,z))$
and \\[2mm]
\centerline{$X_{x}(t,z)\> \tilde \mu(t,z)=\frac{1}{2} \partial_x\big(\frac{ \|\sigma(t,X(t,x))\|^2}{X_x(t,x)}\big)(t,z)-\mu(t,X(t,z))$}\\[2mm]
%
  The coefficients of the local  SDE($\tilde \mu, \tilde \sigma)$ are of class $ {\K}^{m-2,\eps}_{loc}\times
\overline{{\K}}^{m-1,\eps}_{loc}.$
Then, the SDE has a unique maximal solution $\xi(t,z)$ up to a life time $\tau(z)$.  It remains to show that by the
It\^o-Ventzel formula 
$X(t,\xi(t,z))=z$ on $[0,\tau(z))$.  Assume this is proven. Then the continuous (in time) process $X(t,\xi^z_t)$ is constant
a.s. on $[0,\tau(z))$.
At time $t= \tau(z)<\infty$,  $\xi(t,z)=\infty$ and $X(t,\infty)=\infty$. On the other hand, by continuity, $X(t,\xi(t,z))=z$
if  $t=
\tau(z)<\infty$.
To avoid contradiction, necessarily $\tau(z)=\infty ,\>a.s.$. So  $\bf \xi$ is the inverse flow $\bf \xi^X$ of $\bf X$. The
proof of $X(t,\xi(t,z))=z$ is
very similar to  the next proof, so we omit it here. 
\end{pf}\\
We come back now to the SPDE point of view as in Section $2$
\begin{Theorem}\label{SPDEInv}
Let us consider a SDE $(\mu,\sigma)$ of class ${\cal S}^{m, \delta}$
 with $m\geq 2, \delta \in (0,1]$, and  its adjoint operator
$\wL^{\sigma,\mu}_{t,z}=\demi\partial_z(\|\sigma(t,z)\|^2 \partial_z)-\mu(t,z) \partial_z$. Denote by  $X$  its unique
solution. \\
\rmi The inverse flow 
$X^{-1}=\xi^X$ of $X$ is a  strictly monotonic solution of  class ${\K}^{m-2,\delta}_{loc}\cap\cal C^m$ of 
 SPDE$(\widehat L^{\sigma,\mu}, -\sigma \partial_z)$, with initial condition $\xi_0(z)=z$,
\begin{eqnarray}\label{AdjointSPDE}
d\xi(t,z)= -\xi_z(t,z)\sigma(t,z).dW_t+\wL_{t,z}^{\sigma,\mu}(\xi) dt.
\end{eqnarray}
\rmii Conversely,  $(m\geq 2)$, let $\xi$ be a ${\K}^{1,\delta}_{loc}\cap{\cal C}^{2}$-regular  solution of  
SPDE$(\wL^{\sigma,\mu},-\sigma \partial_z)$ \eqref{AdjointSPDE}. Then,
$\xi(t,X(t,x))\equiv x$ and $\xi$ is  the strictly monotonic inverse flow  $X^{-1}:=\xi^X$ of $X$.
 Moreover, uniqueness  holds true for the SPDE$(\wL^{\sigma,\mu},-\sigma \partial_z)$ in the class of 
${\K}^{1,\delta}_{loc}\cap \cal C^2$-regular solutions. 
\end{Theorem}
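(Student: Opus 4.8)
The plan is to exploit the defining relation $\xi(t,X(t,x))\equiv x$ of the inverse flow and to transfer information between the SDE and the SPDE by means of the It\^o-Ventzel formula (Proposition \ref{IVF}), exactly as was done for the pair $(\bfU_x,-\tU_y)$ in Theorem \ref{ResPrA}.

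For part (i), I would first invoke Proposition \ref{SDEInv}: under the class $\S^{m,\delta}$ hypothesis with $m\ge2$, the inverse flow $\xi^X=X^{-1}$ is a genuine semimartingale random field, strictly monotonic, of class ${\K}^{m-2,\eps}_{loc}\cap{\cal C}^m$, and in particular $\xi^X_z$ is itself a semimartingale (Proposition \ref{DRules}), so the covariation term in the It\^o-Ventzel expansion is meaningful. Writing the dynamics as $d\xi(t,z)=b(t,z)dt+a(t,z).dW_t$ with $(a,b)$ a priori unknown, and applying It\^o-Ventzel to $\xi(t,\cdot)$ along the It\^o process $X_t=X(t,x)$, whose diffusion coefficient is $\sigma(t,X_t)$, the constancy of $t\mapsto\xi(t,X(t,x))=x$ forces its differential to vanish. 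Matching the martingale part gives, at $z=X_t$, the identity $a(t,z)+\xi_z(t,z)\sigma(t,z)=0$; since $X(t,\cdot)$ is onto $(0,\infty)$ and $x$ is arbitrary, this yields $a=-\xi_z\sigma=-\sigma\partial_z\xi$ for every $z$. Matching the finite-variation part, and inserting the covariation contribution $\partial_z(-\xi_z\sigma)\cdot\sigma$, the drift regroups, after an elementary rearrangement of the terms in $\xi_z,\xi_{zz}$ and $\sigma\cdot\sigma_z$, into $\demi\partial_z(\|\sigma\|^2\partial_z\xi)-\mu\partial_z\xi=\wL^{\sigma,\mu}_{t,z}(\xi)$, which is precisely the announced SPDE \eqref{AdjointSPDE}.

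For the converse part (ii), I would run this computation backwards. Given a solution $\xi\in{\K}^{1,\delta}_{loc}\cap{\cal C}^2$ of the SPDE, the hypothesis guarantees that $\xi_z$ is a semimartingale, so It\^o-Ventzel applies again; substituting the explicit characteristics $a=-\xi_z\sigma$ and $b=\wL^{\sigma,\mu}(\xi)$ into $d[\xi(t,X(t,x))]$ and performing the same bookkeeping in reverse, all $dW$-terms and all $dt$-terms cancel, so $t\mapsto\xi(t,X(t,x))$ has zero differential. Evaluating at $t=0$ with $\xi_0(z)=z$ and $X(0,x)=x$ gives $\xi(t,X(t,x))=x$ for all $t$, which identifies $\xi(t,\cdot)$ with the inverse flow $X(t,\cdot)^{-1}$. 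Uniqueness in this regularity class is then immediate: any two solutions $\xi^1,\xi^2$ both satisfy $\xi^i(t,X(t,x))=x$, hence coincide on the range of $X(t,\cdot)$, which is all of $(0,\infty)$.

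The cancellations are routine; the delicate point is one of regularity. The whole argument rests on the covariation term $\langle d\xi_z(t,z),dX_t\rangle|_{z=X_t}$ being well defined, which is exactly why one must ask that $\xi$ lie in ${\K}^{1,\delta}_{loc}\cap{\cal C}^2$ --- so that its spatial derivative is a semimartingale, by Proposition \ref{DRules} --- rather than merely that $\xi$ be a ${\cal C}^2$ semimartingale; this is also why $m\ge2$ is imposed, ensuring $X(t,\cdot)$ is at least ${\cal C}^2$ and that the inverse flow provided by Proposition \ref{SDEInv} carries enough derivatives. I expect the main obstacle to be the careful verification that the composition $\xi(t,X(t,x))$ and the evaluation at $z=X_t$ remain within the domain where these semimartingale-random-field manipulations are licit, i.e. justifying the interchange of differentiation with the It\^o-Ventzel expansion on the (random, compact-in-$z$) sets dictated by the local $\K$-norms.
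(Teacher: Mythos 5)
Your proposal is correct and takes essentially the same route as the paper: part (i) rests on Proposition \ref{SDEInv} (via Proposition \ref{DRules}) for the semimartingale regularity of the inverse flow, and both directions reduce to the single It\^o--Ventzel computation on $H(t,x)=\xi(t,X(t,x))$, where cancelling the martingale part forces the diffusion $-\xi_z\sigma$ and the drift regroups into $\wL^{\sigma,\mu}_{t,z}(\xi)$, exactly the cancellation the paper records, with uniqueness following as you say from surjectivity of $X(t,\cdot)$. The only cosmetic difference is that you also run the computation in the derivation direction for (i) with unknown characteristics $(a,b)$, whereas the paper writes out only the verification direction (ii) and lets (i) follow from the earlier results; the algebra is identical.
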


\begin{pf}
\rmi We  start with a 
monotonic solution  $\xi$ of class $\K^{1,\delta}_{loc}\cap \cal C^2$ of the SPDE: $d\xi(t,z)=
-\xi_z(t,z)\sigma(t,z).dW_t+\wL_{t,z}^{\sigma,\mu}(\xi)
dt$. \\
\rmii From Theorem \ref{DRules}, $\xi$ is regular enough to use It\^o-Ventzel's formula with the
 solution  $X(t,x)=X^x_t$ of the SDE$(\mu,\sigma)$ to compute the dynamics of $H(t,x)=\xi(t,X(t,x))$. In the next equation,
we do not recall the parameter $x$.
\begin{eqnarray*} 
 dH_t&=&\big(-\xi_z(t,X_t)\sigma(t,X_t)-\xi_z(t,X_t)\sigma(t,X_t) \big).dW_t\\ \nonumber
 &+&\Big(\wL^{\sigma,\mu}(\xi)+  \demi \xi_{zz}\|\sigma\|^2 +\mu \>\xi_z+\partial_z(-\xi_z\sigma).\sigma\Big)(t,X_t)dt\\
 &=&\Big(\xi_{zz}\|\sigma\|^2+\demi \xi_z(\partial_z\|\sigma\|^2)-\partial_z(\xi_z)\|\sigma\|^2-\demi
\xi_z(\partial_z\|\sigma\|^2)\Big)(t,X_t)dt{\bf =0}
 \end{eqnarray*}
 The random field $H(t,x)=\xi(t,X(t,x))$ is constant in time and equal to its initial condition $x$. This finishes the proof
that $X$ is the inverse flow of $\xi$. The $\cal S^{m,\delta}$-SDE$(\mu,\sigma)$ has only one solution $X$. Then any
"regular" solution 
$\xi$ of the SPDE is the inverse of $X$ and then is unique. 
\end{pf}
\\
The next result, useful for applications, is a slight extension of the previous one.
It establishes a connection between a more general second order  SPDE and two SDEs. It is based on the observation
that
if $\xi$ is the inverse of the monotonic solution $X$ of SDE$(\mu^X,\sigma^X)$  and if $\phi\in \Cc^2$ a regular monotonic
function, the process $X(.,\phi(x))$ satisfies the same SDE$(\mu^X,\sigma^X)$, and so  its inverse 
$\phi^{-1}(\xi_.(z))$ satisfies the same SPDE than $\xi$. The extension describes the  SPDEs associated 
with the compound processes $Y(t,\xi(t,z))$, identified as the unique solution.

\begin{Theorem}\label{pp: composition formula}  Let $X$ be a solution of SDE$(\mu^X,\sigma^X)$ and $\xi$ a 
${\K}^{1,\delta}_{loc}\cap \cal C^2$-regular solution $(\delta>0)$ of the
SPDE$(\widehat L^{X},-\sigma^X \partial_z)$, where $\widehat L_{t,z}^{X}=\demi\partial_z(\|\sigma^X(t,z)\|^2
\partial_z)-\mu^X(t,z) \partial_z$.\\[1mm] 
\rmi Let $Y$ be a solution of class ${\K}^{1,\delta}_{loc}\cap \cal C^2$ of SDE$(\mu^Y,\sigma^Y)$   and 
$\phi$ any function in $\Cc^2$. Then the random field $Y(t,\phi(\xi(t,z)))=G(t,z)$ evolves as,
\begin{eqnarray}\label{eq: composition formulaA}
 dG(t,z)&=&\sigma^Y(t,G(t,z)).dW_t+\mu^Y(t,G(t,z))dt\nonumber \\
 &-& \partial_zG(t,z)\sigma^X(t,
z)\big[dW_t+\sigma^Y_y(t,G(t,z))dt]+ {\widehat L}_{t,z}^{X}(G)(t,z))dt
\end{eqnarray}
with initial condition $G(0,z)=\phi(z)$.\\
  \rmii {\bf Solvable SPDE:} Conversely, let $G$ be a  solution of class ${\K}^{1,\delta}_{loc}\cap \cal C^2$ of the
SPDE \eqref{eq: composition formulaA}; then
the process $G(t,X_t(z))$ with initial condition $\phi(z):=G(0,z)$ is solution of the SDE$(\mu^Y,\sigma^Y)$.
 If uniqueness holds true for this equation, then
$G(t,z)=Y_t(t,\phi(\xi(t,z)))$ and uniqueness also holds true for the SPDE \eqref{eq: composition formulaA}.
  \end{Theorem}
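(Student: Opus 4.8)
The engine throughout is the It\^o-Ventzel formula (Proposition \ref{IVF}), applied in two opposite directions for the two parts. The preliminary observation, already announced before the statement, is that for any $\phi\in\cal C^2$ the random field $\eta(t,z):=\phi(\xi(t,z))$ solves \emph{the same} SPDE$(\widehat L^X,-\sigma^X\partial_z)$ as $\xi$. I would check this first by a plain It\^o expansion: since $\phi$ is deterministic, $d\eta=\phi'(\xi)\,d\xi+\frac12\phi''(\xi)\langle d\xi\rangle$, and substituting the SPDE \eqref{AdjointSPDE} for $d\xi$ together with $\langle d\xi\rangle=\xi_z^2\|\sigma^X\|^2\,dt$, the diffusion part becomes $-\eta_z\sigma^X\,dW_t$ (using $\eta_z=\phi'(\xi)\xi_z$) while the drift part recombines, via $\eta_{zz}=\phi''(\xi)\xi_z^2+\phi'(\xi)\xi_{zz}$, exactly into $\widehat L^X(\eta)$. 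This reduces (i) to the statement: if $\eta$ solves SPDE$(\widehat L^X,-\sigma^X\partial_z)$ then $G(t,z)=Y(t,\eta(t,z))$ solves \eqref{eq: composition formulaA}.

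For (i) I would then apply Proposition \ref{IVF} with the random field $F=Y$ --- whose local characteristics are $\mu^Y(t,Y(t,y))$ and $\sigma^Y(t,Y(t,y))$ by the flow property --- composed with the semimartingale $\eta$. The frozen-field line of the formula gives $\mu^Y(t,G)\,dt+\sigma^Y(t,G).dW_t$ since $Y(t,\eta)=G$. The $F_x\,dX$ term contributes $Y_y(t,\eta)\,d\eta$, whose diffusion is $-Y_y(t,\eta)\eta_z\sigma^X\,dW_t=-\partial_zG\,\sigma^X\,dW_t$ (chain rule $\partial_zG=Y_y\eta_z$), and the covariation term $\langle dY_y,d\eta\rangle|_{y=\eta}$ produces $-\partial_zG\,\sigma^X\sigma^Y_y(t,G)\,dt$. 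It then remains to collect the genuine drift contributions $Y_y(t,\eta)\widehat L^X(\eta)+\frac12 Y_{yy}(t,\eta)\eta_z^2\|\sigma^X\|^2$ and to recognise, using $\partial_{zz}G=Y_{yy}\eta_z^2+Y_y\eta_{zz}$, that this equals $\widehat L^X(G)$; this is precisely the extra second-order drift in \eqref{eq: composition formulaA}. The initial condition follows from $\xi(0,z)=z$ and $Y(0,\cdot)=\mathrm{Id}$, giving $G(0,z)=\phi(z)$.

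For the converse (ii) I would run Proposition \ref{IVF} the other way: take the SPDE-solution $G$ as the random field $F$, reading its local characteristics directly off \eqref{eq: composition formulaA}, and compose it with the semimartingale $X_t(z)$ solving SDE$(\mu^X,\sigma^X)$. Writing out the It\^o-Ventzel contributions for $\widehat G(t,z):=G(t,X_t(z))$, the $dW_t$ terms collapse to $\sigma^Y(t,\widehat G)$, while in the drift one checks term-by-term that the two $\mu^X\partial_zG$ terms cancel, the two $\sigma^X\sigma^Y_y$ terms cancel, the two $\sigma^X\sigma^X_z\,\partial_zG$ terms cancel, and the $\frac12,\frac12,-1$ multiples of $G_{zz}\|\sigma^X\|^2$ sum to zero, so that only $\mu^Y(t,\widehat G)\,dt$ survives. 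Hence $\widehat G$ solves SDE$(\mu^Y,\sigma^Y)$ with datum $\widehat G(0,z)=G(0,z)=\phi(z)$; by SDE uniqueness $G(t,X_t(z))=Y(t,\phi(z))$, and substituting $z\mapsto\xi(t,z)$ and using $X_t(\xi(t,z))=z$ (Theorem \ref{SPDEInv}) yields $G(t,z)=Y(t,\phi(\xi(t,z)))$; uniqueness for the SPDE is then inherited from that of the SDE.

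The main obstacle is not the cancellations, which are routine once the chain-rule expansions are in place, but the correct handling of the It\^o-Ventzel covariation term $\langle dF_x,dX\rangle$: it requires that the spatial derivatives $Y_y$ and $G_z$ be themselves It\^o random fields with the expected diffusion characteristics. This is exactly what the regularity hypotheses $\K^{1,\delta}_{loc}\cap\cal C^2$ and Proposition \ref{DRules} (differentiation of random fields and of their characteristics) are there to guarantee, and some care is needed to track all the dot products in the genuinely $n$-dimensional case.
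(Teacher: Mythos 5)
Your proof is correct and follows exactly the route the paper intends: the paper states Theorem \ref{pp: composition formula} without a written proof (only the remark that $\phi^{-1}(\xi)$, equivalently your $\phi(\xi)$, solves the same SPDE as $\xi$), and your argument is precisely the two-directional It\^o-Ventzel composition used in the paper's proof of Theorem \ref{SPDEInv}, of which this theorem is the announced ``slight extension''. All your cancellations and the identification $Y_y\widehat L^X(\eta)+\frac12 Y_{yy}\eta_z^2\|\sigma^X\|^2=\widehat L^X(G)$ check out, and your closing remark on the regularity needed for the covariation term (via Proposition \ref{DRules}) correctly locates the role of the $\K^{1,\delta}_{loc}\cap{\cal C}^2$ hypotheses.
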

\noindent  Note the different nature of assumptions (which  may be equivalent) in the
assertions of this theorem. In $\rm (i)$, we assume that the
coefficients are regular enough such that $Y$ satisfies the It\^o-Ventzel
assumptions and such that  the inverse $ \xi $ of $ X $ is an It\^o semimartingale,
while in $\rm (ii)$ we  only suppose the existence of $X$ (without regularity), but in
return we assume the existence of a smooth solution $G$ of the SPDE 
\eqref{eq: composition formulaA}.
%
%
 
\section{Market-Consistent progressive utilities of investment and consumption}\label{sec:conso}

The notion of progressive utility is very general and  should be  specified so
as to represent more realistically the dynamic evolution of the individual
preferences of an investor in a given financial market. As in statistical learning, the utility criterium is  dynamically
adjusted to be the best given the market past information.  So, the market inputs may be viewed as a calibration universe and
gives  a  test-class of processes on which the utility is chosen to provide the best satisfaction. The market input
is described by a vector space $\GX^c$ of portfolios and consumption
incorporating feasibility and trading constraints and high liquidity. 

The existence of an
admissible strategy giving the maximal satisfaction to the investor, which will
be preserved at all times in the future, explains  {\em the martingale property} in
Definition \ref{def:conso}.  On the other hand if the strategy in $\GX^c$ fails to be
optimal then it is better  not to  make investment. The optimal strategy may be
viewed as a benchmark for the investor using the progressive utility $\bf U$.
  Once his consistent progressive utility  is defined,  an investor can then turn
to a portfolio optimization  problem  in a larger financial market or  to
calculate indifference prices. 
Before extending, in a framework with consumption, the definition of    a  consistent dynamic utility,
introduced in \cite{MrNek01,MrNek02,MrNek03} and  following  Musiela and
Zariphopoulo \cite{zar-07a,zar-07}, we first define the investment universe and the set of test processes.

\subsection{The investment universe with consumption.}\label{descmarket}
 We consider an incomplete It\^o market, defined on the filtered probability space $(\Omega,{\cal F}_{t},\mathbb{P})$ driven by the  $n$-standard Brownian motion
$W$. As usual, the market  is characterized by a short rate $(r_t)$ and a
$n$-dimensional risk premium vector 
$(\eta_t)$. 

The agent  may invest in this financial market and  is allowed to consume a part of his wealth at the progressive
rate $c_t\geq 0$.  To be short, we  give the mathematical definition of  the
class of admissible strategies $(\kappa_t, c_t)$, without specifying the risky assets. The incompleteness of the market is expressed by restrictions on the risky portfolios $\kappa_t$ constrained to live in a given progressive vector space $(\sigR_t)$.\\
To avoid technicalities, we assume throughout that all the processes
 satisfy the necessary measurability and integrability conditions such that the following formal manipulations and statements are meaningful.

\begin{Definition}[Test processes] \label{TestP} \rmi The self-financing dynamics of a  wealth process with risky portfolio $\kappa$ and consumption rate $c$  is given by
\begin{equation}\label{eq:DynamX}
dX^{\kappa,c}_t= X^{\kappa,c}_t[r_tdt +\kappa_t(dW_t+\eta_tdt)] -c_t\,dt, ~ \quad
\kappa_t \in \sigR_t.
\end{equation}
where $c$ is a positive adapted process,  $\kappa$ is a  progressive $n$-dimensional vector  measuring the
volatility vector of the wealth $X^{\kappa,c}$, such that $\int_0^T c_t + \|\kappa_t\|^2dt<\infty, a.s.$.\\
\rmii A self-financing strategy $(\kappa_t,c_t)$ is admissible if it is stopped with  the bankruptcy of the investor (when the wealth process
reaches $0$) and if  the  portfolio $\kappa$ lives in a given progressive family of vector spaces $(\sigR_t) a.s.$. \\ 
\rmiii The set of the wealth processes with admissible $(\kappa_t, c_t)$, also called test processes, is denoted by $\GX^c$. When  portfolios are starting from $x$ at time $t$, we use the notation $\GX^c_t(x).$
 \end{Definition}
\noindent The following short notations will be used extensively. 
Let $\sigR$ be a vector subspace of  $\R^n$. For any $x\in \R^n$,
$x^\sR$ is the orthogonal projection of the vector $x$
onto $\sigR$ and $x^{\perp}$ is the orthogonal projection
onto $\sigoR$. \\

The existence of  a risk premium  $\eta$ is a possible
formulation of the absence of arbitrage opportunity. Since from
\eqref{eq:DynamX}, the impact of the risk premium on the wealth dynamics only
appears through the term $\kappa_t. \eta_t $ for  $\kappa_t\in \sigR_t$, there is
a "minimal" risk premium $(\eta^\sR_t)$, the projection of $\eta_t$  on
the space $\sigR_t$ $(\kappa_t. \eta_t =\kappa_t. \eta^\sR_t )$,
to which we refer in the sequel. Moreover, the existence of $\eta^\sR$ is not
enough to insure the existence of equivalent martingale measure, since in
general we do not know if the exponential local martingale
$L^{\eta^\sR}_t=\exp(-\int_0^t \eta^\sR_s.dW_s-\frac{1}{2} \int_0^t
||\eta^\sR_s||^2\,ds)$ is a uniformly integrable martingale, density of an
equivalent martingale measure. In the following definition,  we are interested in the class of
the so-called state price density processes $Y^\nu$ (taking into account the
discount factor) who will play the same role for the progressive conjugate utility,
than the test processes $X^{\kappa,c}$ for the progressive  utility.
\begin{Definition}[State price density process] \label{SPDP} \rmi An It\^o semimartingale
$Y^\nu$ is called a state price density process if  for any admissible test process
$X^{\kappa,c}$, $(c\geq 0\kappa \in \sigR)$, $Y_.^\nu X_.^{\kappa,c} + \int_0^. Y^\nu_s c_s ds $ is a local martingale. It
follows that $Y^\nu$ satisfies,
\begin{equation}\label{Ynu}
dY^\nu_t=Y^\nu_t[-r_tdt+ (\nu_t-\eta^\sR_t).dW_t],\quad
\nu_t \in \sigoR_t, \quad Y^\nu_0=y
\end{equation}
\noindent
\rmii Denote $\GY$ the convex family of all state density processes $Y^\nu$ where $~\nu\in
\sigoR$ and observe that
 $Y^\nu$ is the product of $Y^0$ $(\nu=0)$ by the density martingale
$L^{\nu}_t=\exp\big(\int_0^t \nu_s.dW_s-\frac{1}{2}\int_0^t ||\nu_s||^2ds\big)$.
\end{Definition}
 Interesting discussion on the links between these assumptions and the market numeraire $N_t=(Y^0)^{-1}_t$, also called GOP (growth optimal portfolio) can be found in the book by D.Heath \& E.Platen \cite{Platen} and in D.Filipovic\& E.Platen \cite{FilPlat2009}.
Nevertheless, the use of change of numeraire in our framework is reported in order to limit the size of the paper.

\subsection{$\GX^c$-consistent Utility and Portfolio optimization with consumption}\label{subsec:4.2}

As we are interested in optimizing both the terminal wealth and the consumption rate, we introduce two progressive utilities: the first
one, 
$\bf U$, for the terminal wealth and the second one, $\bf V$, for the consumption rate. From a dynamic point of view,
 $\bf U$ and $\bf V$ will play different roles, only $\bf U$ will need to be an It\^o progressive utility.  To express that the adaptative criteria $(U,V)$ are  well-adapted to the investment universe,
 we introduce the following conditions: 
\begin{Definition}\label{def:conso}
A $\GX^c$-consistent progressive utility  system of investment and consumption
is a pair of progressive utilities $\bf U$ and $\bf V$ on $\Omega\times[0,+\infty)\times\R^+$
with the following additional properties:\\
\rmi {\sc Consistency with the test-class:}  For
any admissible wealth process  $X^{\kappa,c}\in \GX^c$, and any couple of dates $t<T$,
\begin{equation*}
 \mathbb{E}\big(U(T,X^{\kappa,c}_T)+\int_t^T V(s,c_s)ds /{\cal F}_t\big)\leq
U(t,X^{\kappa,c}_t), \>~ 
a.s. 
\end{equation*}
In other words, the value process $\b({\cal G}^{\kappa,c}_t=U(t,X^{\kappa,c}_t)+\int_0^tV(s,c_s)ds\b)$ is a positive supermartingale.\\
\rmii {\sc Existence of optimal strategy:} 
For any initial wealth $x>0$, there exists an optimal strategy $(\kappa^*,c^*)$ such that the associated non negative
wealth process $X^{*}=X^{\kappa^*,c^*}\in \GX^c$ issued from $x$ satisfies  $\b({\cal G}^{*}_t=U(t,X^*_t)+\int_0^tV(s,c^*_s)ds\b)$ is a local
martingale.
 \end{Definition}

\paragraph{$\GX^c$-consistent It\^o progressive utilities and HJB constraint}
Theorem \ref{UxSDE} characterizes It\^o
progressive utilities in terms of their local characteristics 
$(\beta, \gamma)$ as well as in terms of the parameters $(\mu, \sigma)$ of the intrinsic SDE \eqref{eq:ItoSDE} satisfied by
$U_x$. In this section, we are concerned with the constraints induced on the  characteristics $(\beta, \gamma)$ of 
$\bf U$
 by the  $\GX^c$ consistency property.

\noindent The supermartingale/martingale property  of processes ${\cal G}^{\kappa,c}_t$ implies 
negative drift for these It\^o processes 
for all $\kappa \in  \sigR, c\geq 0$, and  $0$ drift for some $(\kappa^*,c^*)$. This property yields to Hamilton-Jacobi-Bellman  type constraints on the drift $\beta(t,x)$ of $\bf U$.\\
We proceed by verification as usual by introducing a non standard Hamilton-Jacobi-Bellman Stochastic PDE.
 Observe that the consumption optimization contributes only through the Fenchel-Legendre
random field $\tilde{\bf V}$ of the dynamic utility $\bf V$.

\begin{Theorem}[Utility-SPDE]\label{EDPGacons}
Let $(\bf U,\bf V)$  be a utility system where $\bf U$ is  regular enough to apply It\^o's Ventzel formula. Define a  monotonic random field
$\bar \zeta(t,x)$ by $\bar \zeta(t,x)=(V_c)^{-1}(t,U_x(t,x))=-\tilde V_y(t,U_x(t,x))$, and a  "policy" random field $x \bar \kappa_t(x)$ by 

\begin{equation}\label{portconso}
x \bar \kappa_t(x)=-\frac{1}{\,U_{xx}(t,x)}\big(U_x(t,x)\eta^\sR_t+\gamma_x^\sR(t,x)\big).
\end{equation}
 Then the $\GX^c$- consistency property of $(\bf U,\bf V)$ is implied by the following two assertions: \\
\rma The drift $\beta$ satisfies the following HJB-constraint
\begin{eqnarray}\label{BETAcons}
\beta(t,x)=-U_x(t,x)xr_t+\frac{1}{2}U_{xx}(t,x)\|x \bar \kappa_t(x)\|^2-\tilde V(t,U_x(t,x)).
\end{eqnarray}
\rmb The 
 SDE$(\bar \mu^{c}, \bar \sigma)$, with $\bar \mu^{c}_t(x):=r_t x+x \bar \kappa_t(x).\eta_t^\sR-\bar \zeta(t,x)$       and   $\bar \sigma(t,x):= x \bar \kappa_t(x)$
admits  a non negative solution $\bar X \geq 0$. 
Furthermore,  $(x \bar\kappa(x),\bar c_t=\bar \zeta(t,\bar X_t))$ is the  optimal strategy of investment and consumption (in general denoted with a $(*)$) with monotonic optimal wealth $X^*=\bar X$.
\end{Theorem}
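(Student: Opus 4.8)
The plan is to proceed by verification, exactly as the statement suggests. The goal is to show that if the drift $\beta$ satisfies the HJB constraint \eqref{BETAcons} and the associated SDE$(\bar\mu^c,\bar\sigma)$ has a nonnegative solution $\bar X$, then the pair $(\bfU,\bfV)$ is $\GX^c$-consistent, i.e. the value process ${\cal G}^{\kappa,c}_t=U(t,X^{\kappa,c}_t)+\int_0^t V(s,c_s)ds$ is a supermartingale for every admissible $(\kappa,c)$, with the specific strategy $(\bar\kappa,\bar c)$ turning it into a local martingale. First I would fix an arbitrary admissible test process $X^{\kappa,c}\in\GX^c$ with dynamics \eqref{eq:DynamX}, and compute the drift of ${\cal G}^{\kappa,c}_t$ by applying It\^o's formula to $U(t,X^{\kappa,c}_t)$. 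Since $\bfU$ is an It\^o random field with local characteristics $(\beta,\gamma)$, this calls for the It\^o-Ventzel formula (Proposition \ref{IVF}): the drift of $U(t,X^{\kappa,c}_t)$ collects the intrinsic drift $\beta(t,X^{\kappa,c}_t)$, the transport term $U_x(t,X^{\kappa,c}_t)\,dX^{\kappa,c}_t$, the second-order term $\tfrac12 U_{xx}(t,X^{\kappa,c}_t)\langle dX^{\kappa,c}_t\rangle$, and the covariation term $\langle dU_x(t,x),dX^{\kappa,c}_t\rangle|_{x=X_t}=\gamma_x^{\sR}(t,X_t)\cdot(X^{\kappa,c}_t\kappa_t)\,dt$. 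Adding the contribution $V(t,c_t)$ from the integral term, the total drift of ${\cal G}^{\kappa,c}_t$ becomes an explicit expression in $(\kappa_t,c_t)$.

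The heart of the argument is then a pointwise optimization of this drift over the admissible controls. The consumption variable $c$ and the portfolio variable $\kappa$ separate. For fixed $x$, the $c$-dependent part of the drift is $V(t,c)-c\,U_x(t,x)$, which by the definition of the Fenchel conjugate is bounded above by $\tilde V(t,U_x(t,x))$, with the supremum attained at $c=(V_c)^{-1}(t,U_x(t,x))=\bar\zeta(t,x)$; this is precisely why the consumption enters only through $\tilde{\bfV}$. For the portfolio, the $\kappa$-dependent part is a quadratic in $\kappa$ of the form $U_x(t,x)\,x\,\kappa\cdot\eta^{\sR}_t+\tfrac12 U_{xx}(t,x)\,x^2\|\kappa\|^2 + x\,\kappa\cdot\gamma_x^{\sR}(t,x)$, where I use that $\kappa\in\sigR_t$ forces $\kappa\cdot\eta_t=\kappa\cdot\eta^{\sR}_t$ and $\kappa\cdot\gamma_x=\kappa\cdot\gamma_x^{\sR}$. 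Because $U_{xx}<0$ by strict concavity, this quadratic is concave in $\kappa$ and its unconstrained maximum over $\sigR_t$ is attained exactly at the policy $\bar\kappa_t(x)$ defined in \eqref{portconso}. Substituting $\bar\kappa_t(x)$ and $\bar\zeta(t,x)$ into the drift and simplifying should reproduce, after completing the square, precisely the negative of the right-hand side of the HJB constraint \eqref{BETAcons}; hence under \eqref{BETAcons} the maximal drift of ${\cal G}^{\kappa,c}$ equals zero, and for every other admissible $(\kappa,c)$ the drift is nonpositive.

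From the sign of the drift I would conclude the supermartingale property. The nonpositive-drift It\^o process ${\cal G}^{\kappa,c}$ is a local supermartingale; combined with its nonnegativity (both utilities are nonnegative) it is a genuine supermartingale, which is the consistency inequality $\rm(i)$ of Definition \ref{def:conso}. For the optimal strategy, I substitute the candidate $\bar X$ solving SDE$(\bar\mu^c,\bar\sigma)$: by construction its drift $\bar\mu^c$ and volatility $\bar\sigma=x\bar\kappa$ are exactly those realizing the maximizers above, so the drift of ${\cal G}^{\bar\kappa,\bar c}$ vanishes and the process is a local martingale, giving $\rm(ii)$. The nonnegativity of $\bar X$ is what makes $(\bar\kappa,\bar c)$ an admissible test process, so it is legitimately the optimal wealth $X^*$, with optimal consumption $\bar c_t=\bar\zeta(t,\bar X_t)$.

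The main obstacle is not the optimization, which is a routine concave-quadratic computation, but the justification of the It\^o-Ventzel expansion and the separation of the drift: one must verify that $\bfU$ is regular enough for Proposition \ref{IVF} to apply (this is granted by hypothesis, ``regular enough to apply It\^o-Ventzel''), and that the integrability conditions on $(\kappa,c)$ ensure all stochastic integrals are well defined local martingales so that ``nonpositive drift'' genuinely yields the supermartingale property rather than merely a local statement. The delicate point is the constrained optimization over the vector space $\sigR_t$: the reduction of $\eta_t$ and $\gamma_x$ to their projections $\eta^{\sR}_t$ and $\gamma_x^{\sR}$ must be handled carefully, since $\bar\kappa_t(x)$ as written lies in $\sigR_t$ only because the bracketed term is projected, and one should check that the completed square indeed matches \eqref{BETAcons} term by term, including the cross term $U_x\eta^{\sR}\cdot\gamma_x^{\sR}$ hidden inside $\|x\bar\kappa_t(x)\|^2$.
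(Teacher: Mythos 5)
Your proposal is correct and follows essentially the same route as the paper's own proof: verification via the It\^o--Ventzel formula, pointwise maximization of the drift with the consumption part bounded by the Fenchel inequality $V(t,c)-c\,U_x(t,x)\le \tilde V(t,U_x(t,x))$ and the portfolio part a concave quadratic in $\kappa\in\sigR_t$ minimized (in the form $\mathcal{Q}$) at $\bar\kappa$ with $\mathcal{Q}^*=-\|x\bar\kappa_t(x)\|^2$, so that the HJB constraint \eqref{BETAcons} makes the drift nonpositive (supermartingale) and zero along the solution $\bar X$ of SDE$(\bar\mu^c,\bar\sigma)$ (local martingale). Your added remark that a nonnegative local supermartingale is a true supermartingale is a point the paper leaves implicit, but it does not change the argument.
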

\begin{proof}
\rmi  By It\^o-Ventzel's formula (Theorem \ref{IVF}), for any test process
$X^{\kappa,c}$,
\begin{eqnarray*}
dU(t,X^{\kappa,c}_t)+V(t,c_t)dt &=&\Big(U_x(t,X^{\kappa,c}_t)X^{\kappa,c}_t\>
\kappa_t+\gamma(t,X^{\kappa,c}_t)\Big).dW_t\\
&+&\Big(\beta(t,X^{\kappa,c}_t)+U_x(t,X^{\kappa,c}_t )r_t
X^{\kappa,c}_t+\frac{1}{2}U_{xx}(t,X^{\kappa,c}_t)\mathcal{Q}(t,X^{\kappa,c}_t,
\kappa_t)\Big)dt\\
&+&\big(V(t,c_t)-U_x(t,X^{\kappa,c}_t)c_t\big)dt\\
\mbox{where}\>\mathcal{Q}(t,x,\kappa)&=&\|x \kappa\|^2+2 x
\kappa.\big(\frac{U_x(t,x)\eta^\sR_t +
\gamma_x(t,x)}{\,U_{xx}(t,x)}\big).
\end{eqnarray*}
Since $\kappa \in \sigR$, $\mathcal{Q}(t,x,\kappa)$ is only depending on 
$\gamma_x^\sR(t,x)$, the orthogonal projection of $\gamma_x(t,x)$ on
$\sigR_t$. 
The minimum $\mathcal{Q}^*(t,x)= \inf_{\kappa \in \sigR}\mathcal{Q}(t,x,\kappa)$
of the quadratic form $\mathcal{Q}(t,x,\kappa)$ is achieved at the  minimizing policy
$\bar \kappa $  given by
\begin{equation}\label{kappaQ*}
\left\{
\begin{array}{cll}
x \bar\kappa_t(x) &=-\frac{1}{\,U_{xx}(t,x)}\big(U_x(t,x)\eta^\sR_t
+\gamma_x^\sR(t,x)\big)\\[1mm]
\mathcal{Q}^*(t,x)&=-\frac{1}{U_{xx}(t,x)^2}\|U_x(t,
x)\eta^\sR_t+\gamma_x^\sR(t,x))\|^2=\,-\|x \bar \kappa_t(x) \|^2.
\end{array}
\right.
\end{equation}

\rmii  By the Fenchel convexity inequality, the term in the third line is bounded by above by   $V(t,c_t)-U_x(t,X^{\kappa,c}_t)c_t\leq \tilde V(t,-U_x(t,X^{\kappa,c}_t))$.\\
For the second line, since  $\mathcal{Q}(t,x,\kappa_t)\geq  \mathcal{Q}^*(t,x)= - \|x \bar \kappa_t(x) \|^2$ and $U_{xx}\leq 0$,
 the term may be bounded by above by
$(\beta(t,X^{\kappa,c}_t)+U_x(t,X^{\kappa,c}_t )r_t
X^{\kappa,c}_t-\frac{1}{2}U_{xx}(t,X^{\kappa,c}_t) \|X^{\kappa,c}_t \bar \kappa_t(X^{\kappa,c}_t ) \|^2  $. \\

Then, if $\beta$ satisfies the HJB constraint (\ref{BETAcons}),  the drift term is nonpositive  for any $\kappa
\in \sigR$ and $c\ge0$,  and the process $\big(U(t,X^{\kappa,c}_t)+\int_0^tV(s,c_s)ds\big)$ is a
supermartingale. \\
\rmiii  Assume now that the wealth SDE associated
with $(\bar \kappa, \bar \zeta)$ admits a positive solution $\bar X$. Then, the non positive drift in the previous equation is equal to $0$, so that $\big(U(t,\bar X_t)+\int_0^t V(s, \bar \zeta(s,\bar X_s ))ds\big)$ is a local
martingale. This equality proves the existence of an optimal strategy $x\bar \kappa$ and $\bar\zeta$, and that $\bar X$ is an optimal process. So, we have no  reason to distinguish between processes with $^-$ and processes with $^*$.
\end{proof}

\paragraph{Conjugate of consistent progressive utility with consumption. }
\noindent  Let $(\bfU,\bfV)$ be a pair of  stochastic $\GX^c$-consistent utilities with optimal strategy $(\kappa^*,c^*)$
leading to the non negative
wealth process $X^{*}=X^{\kappa^*,c^*}$. 
Convex analysis  showed  the interest to study the convex conjugate utilities $\tilde{\bf U}$  and  $\tilde{\bf V}$.  
Indeed, under mild regularity assumption, we have the following results (Karatzas-Shreve \cite{KaratzasShreve:01}, Rogers \cite{Rogers}).
\begin{itemize}
\item[(i)] For any admissible state price density process $Y^\nu\in \GY$ with $\nu\in \sigoR$,
$\Big(\tU(t,Y^\nu_t)+\int_0^t\tV(s,Y^\nu_s)ds\Big)$ is a  submartingale, and there exists a unique optimal process $Y^*:=Y^{\nu^*}$  with $\nu^* \in \sigoR$ such that  $\Big(\tU(t,Y^*_t)+\int_0^t\tV(s,Y^*_s)ds\Big)$
is a local martingale. To summarize $\U(s,Y_s)=\esssup_{Y\in \GY}\E\Big(\big(\tU(t,Y^\nu_t)+\int_0^t\tV(\alpha,Y^\nu_\alpha)d\alpha\big)/\F_s
\Big)$, \quad a.s.
\item[(ii)] {\it Optimal Processes characterization} Under regularity assumption, first order conditions imply some links between optimal processes, including their initial conditions,
\begin{eqnarray}\label{first order condition}
 Y^*_t(y)=&U_{x}(t,X^*_t(x))=  V_{c}(t,c^*_t(c_0)), \quad &y=u_x(x)=v_c(c_0).
\end{eqnarray}
\end{itemize}
The characteristics of the consistent conjugate progressive utility $\bf \tU$ 
with consumption can be also  computed directly from Theorem \ref{ResPrA}, using a PDE approach.
Given  that the drift 
$\beta$ is associated with an optimization program, it is easy to show that
$\tilde{\beta}$ is also constrained by  a HJB type relation in the new
variables,  and the
convex conjugate utility system $(\bf \tU,\bf \tV)$ is  consistent (in a sense to be precised) with a family of state price
density processes (Definition \ref{SPDP}). 
\begin{Theorem}\label{thEDPDuale}
Let $(\bfU,\bfV)$  a consistent progressive utility system with consumption, such that  $U$ is 
$\K^{3,\delta}_{loc}$-regular $(\delta>0)$  with  local characteristics $(\beta,\gamma)$ satisfying  Assumptions of
Theorem \ref{EDPGacons}. Then\\
\rmi The progressive convex conjugate utility $\bf \tU$ and its marginal conjugate utility ${\bf \tU_y}$ are It\^o
random fields with local characteristics $(\tilde\beta, \tilde\gamma)$ and $(\tilde\beta_y, \tilde\gamma_y)$ respectively.\\
 \rmii The local characteristics of the convex conjugate $\bf \tU$  are given by:
 \begin{eqnarray}\label{EDPSDuale'}
\left\{
\begin{aligned}
&~\tilde{\gamma}(t,y):=\gamma(t,-\tU_y(t,y)), \quad
\tilde{\gamma}_y(t,y):=-\gamma_x(t,-\tU_y(t,y)). \tU_{yy}(y)\\
&\tilde{\beta}(t,y)=y\tU_{y}(t,y) r_t+\frac{1}{2}\tU_{yy}(t,y)\|\tilde \sigma^*(t,y)\|^2-
 \sigma_t^*(-\tU_{y}(t,y)).y\eta^\sR_t-\tV(t,y)
\end{aligned}
\right .
\end{eqnarray} 
\rmiii For any admissible state price density process $Y^\nu\in \GY$ with $\nu\in \sigoR$,
$\Big(\tU(t,Y^\nu_t)+\int_0^t\tV(s,Y^\nu_s)ds\Big)$ is a
 submartingale, and a local martingale for any solution $Y^*$ (if there exists) of the equation $dY^*_t=Y^*_t[-r_t
dt+(\nu^*(t,Y^*_t)-\eta^\sR_t).dW_t]=\tmu(t,Y^*_t)dt-\tsigma(t,Y^*_t).dW_t$,
with $\tilde \sigma^*(t,y)=y(\nu_t^*(y)-\eta^\sR_t)$ and  $ \tmu(t,y)=-r_t \,y.$
 \end{Theorem}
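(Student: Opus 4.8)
The plan is to handle the three assertions in order: (i) is a regularity statement, (ii) an algebraic identification resting on Theorem \ref{ResPrA} and the HJB constraint \eqref{BETAcons}, and (iii) a verification argument dual to the one used in Theorem \ref{EDPGacons}. For (i) I would propagate regularity from $\bfU$ to its conjugate through the intrinsic SDE \eqref{eq:ItoSDE}. Since $\bfU$ is $\K^{3,\delta}_{loc}$, Proposition \ref{DRules} makes $\bfU_x$ an It\^o random field with characteristics $(\beta_x,\gamma_x)$ and the intrinsic SDE$(\mu,\sigma)$ of class $\S^{2,\delta}$, whose monotone solution is (up to a change of initial condition) $\bfU_x$ and whose inverse flow is exactly $-\tU_y$. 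Proposition \ref{SDEInv} (equivalently Theorem \ref{SPDEInv}), applicable because $m=2\ge 2$, then guarantees that $-\tU_y$, hence $\tU_y$, is an It\^o semimartingale of class $\K^{0,\delta}_{loc}\cap\Cc^2$, and integrating in $y$ gives the same for $\tU$. The extra order of smoothness carried by the assumption $m=3$ is exactly what renders $\tU_y$ differentiable with well-defined characteristics $(\tilde\beta_y,\tilde\gamma_y)$, as flagged in the remark after Theorem \ref{ResPrA}.

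For (ii) the diffusion part is immediate from Theorem \ref{ResPrA}: the chain rule along $X_t=-\tU_y(t,y)$ yields $\tilde\gamma(t,y)=\gamma(t,-\tU_y(t,y))$ and $\tilde\gamma_y(t,y)=-\gamma_x(t,-\tU_y(t,y))\,\tU_{yy}(t,y)$. For the drift I would start from the conjugate dynamics \eqref{eq:conjugate}, i.e. $\tilde\beta(t,y)=\beta(t,-\tU_y)+\tfrac12\tU_{yy}\|\gamma_x(t,-\tU_y)\|^2$, and insert the HJB constraint \eqref{BETAcons} for $\beta$. Using the duality identities $U_x(t,-\tU_y)=y$ and $U_{xx}(t,-\tU_y)=-1/\tU_{yy}$, together with the optimal policy \eqref{portconso}, the term $\tfrac12 U_{xx}\|x\bar\kappa\|^2$ at $x=-\tU_y$ becomes $-\tfrac12\tU_{yy}\|y\eta^\sR_t+\gamma_x^\sR(t,-\tU_y)\|^2$. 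Splitting $\gamma_x=\gamma_x^\sR+\gamma_x^\perp$ along $\R^n=\sigR_t\oplus\sigoR_t$, expanding both squares and regrouping then produces precisely the two terms of \eqref{EDPSDuale'}: a quadratic $\tfrac12\tU_{yy}\|\tsigma\|^2$ with $\tsigma(t,y)=\gamma_x^\perp(t,-\tU_y)-y\eta^\sR_t$ (the volatility of the optimal state price density), and a cross term $-\sigma^*_t(-\tU_y)\cdot y\eta^\sR_t$ with $\sigma^*_t(x):=x\bar\kappa_t(x)$ (the optimal wealth volatility).

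For (iii) I would apply It\^o-Ventzel's formula (Theorem \ref{IVF}) to $\tU(t,Y^\nu_t)$ for $Y^\nu$ solving \eqref{Ynu}, add the running term $\tV(t,Y^\nu_t)$, and compute the resulting drift. The $r_t$ and $\tV$ contributions cancel against those in \eqref{EDPSDuale'}, and the covariation term $\tilde\gamma_y\cdot Y^\nu(\nu-\eta^\sR)=-\gamma_x\,\tU_{yy}\cdot Y^\nu(\nu-\eta^\sR)$ combines with $\tfrac12\tU_{yy}\|Y^\nu(\nu-\eta^\sR)\|^2$ and the two terms of the drift \eqref{EDPSDuale'} into a single square: the whole drift collapses to $\tfrac12\tU_{yy}(t,Y^\nu)\,\|\gamma_x^\perp(t,-\tU_y)-Y^\nu\nu\|^2$. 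Since $\tU_{yy}>0$ by convexity this is nonnegative for every $\nu\in\sigoR$, proving the submartingale property; it vanishes exactly when $Y^\nu\nu=\gamma_x^\perp(t,-\tU_y)$, that is for the process $Y^*$ solving the SDE of (iii), which therefore makes $\tU(t,Y^*_t)+\int_0^t\tV(s,Y^*_s)ds$ a local martingale.

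The main obstacle is bookkeeping rather than conceptual: essentially all the content of (ii) and (iii) is carrying the orthogonal splitting $\R^n=\sigR_t\oplus\sigoR_t$ through two successive completions of squares and matching the outcome to the compact notation $\sigma^*$ and $\tsigma$ --- in particular recognizing that the formula \eqref{EDPSDuale'} is written in terms of the optimal wealth volatility $x\bar\kappa$ and the optimal state-price-density volatility $\gamma_x^\perp-y\eta^\sR_t$. A more delicate point hides in (i): the inverse of a semimartingale need not be a semimartingale, so the differentiability of $-\tU_y$ cannot be read off the Fenchel transform directly but must be obtained through the $\S^{2,\delta}$ structure of \eqref{eq:ItoSDE} and Proposition \ref{SDEInv}.
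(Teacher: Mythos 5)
Your proposal is correct and follows essentially the route the paper intends: the paper gives no in-text proof of Theorem \ref{thEDPDuale} beyond citing the consumption-free analogue in \cite{MrNek01}, and the machinery you assemble --- Theorem \ref{ResPrA} for the conjugate dynamics, the HJB constraint \eqref{BETAcons} evaluated along $x=-\tU_y(t,y)$, the $\S^{m,\delta}$/inverse-flow results of Section 3 for the semimartingale property of $\tU_y$, and a dual It\^o-Ventzel verification --- is precisely the blueprint the surrounding sections lay out, with consumption entering only through the extra $-\tV(t,y)$ in the drift and its cancellation in (iii). I checked your two completions of squares: at $x=-\tU_y(t,y)$ one has $\tfrac12 U_{xx}\|x\bar\kappa_t(x)\|^2+\tfrac12\tU_{yy}\|\gamma_x\|^2=\tfrac12\tU_{yy}\big[\|\gamma_x^\perp\|^2-y^2\|\eta^\sR_t\|^2-2y\,\eta^\sR_t\cdot\gamma_x^\sR\big]$, which by the orthogonality of $\gamma_x^\perp$ and $y\eta^\sR_t$ equals $\tfrac12\tU_{yy}\|\tilde\sigma^*(t,y)\|^2-\sigma^*_t(-\tU_y(t,y))\cdot y\eta^\sR_t$ as in \eqref{EDPSDuale'}, and in (iii) the drift indeed collapses to $\tfrac12\tU_{yy}(t,Y^\nu_t)\|\gamma_x^\perp(t,-\tU_y(t,Y^\nu_t))-Y^\nu_t\nu_t\|^2\ge 0$, vanishing exactly at $\nu=\nu^*$, so your argument is a valid (and more detailed) substitute for the omitted proof.
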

 \begin{proof} A similar proof in the framework without consumption can be found in \cite{MrNek01}. \end{proof}

\noindent From now on, either the notation $\sigma^*_t(y) $ or $\sigma^*(t,y)$ will be used. To fiw the idea, we now give the example of consistent  Power Utilities, for which we prove the existence of optimal processes without any additional regularity conditions. 

\subsection{ Consistent  Power Utilities}\label{sec:expuissance}
Power utilities with constant risk aversion are widely used in economics, in particular for the Ramsey rule established in the next Section. It is also a useful example in the framework of  forward utilities for its simplicity and its easy interpretation of the coefficient. To characterise such utilities,
we start with a problem without consumption.\\[-9mm]
\paragraph*{Consistent Progressive  Power Utility without consumption}  see \cite{MrNek02} for more details.\\
\rmi 
Let us consider a consistent power utility
$U^{(\alpha)}(t,x)=Z^{(\alpha)}_t \frac{x^{1-\alpha}}{1-\alpha}\>$  where $\alpha\in (0,1)$ is the risk aversion coefficient 
 and $Z^{(\alpha)}$ a semimartingale allowing
 to satisfy the consistency property. Then, the conjugate function $\tU^{(\alpha)}(t,y) $ satisfied $\tU^{(\alpha)}(t,y) =\tilde{Z}^{(\alpha)}_t\frac{y^{1-\frac{1}{\alpha}}}{\frac{1}{\alpha}-1}$. Since  the risk aversion coefficient is given, we do not recall it if it not necessary.\\
\rmii
Thanks to  the consistency property, there exists an optimal wealth process $X^*(x)$ such that 
$U^{(\alpha)}(t,X^*_t(x))=\frac{1}{1-\alpha}\,Z_t\> \b(X^*_t(x)\b)^{1-\alpha}$ is a martingale, and such that
$U^{(\alpha)}_x(t,X^{*}_t(x))=Y^{*}_t(x^{-\alpha})$ is a state price density process with initial condition $x^{-\alpha}$. \\
\rmiii In particular, using the intuitive factorization $Z_t=Z^{\sigR}_t.Z^{\perp}_t$ where $Z^{\perp}_t$ is an
exponential martingale $\mathcal{E}_t(\delta^{\perp}.W)$  with $\delta^{\perp} \in \sigoR$,

 we see that
$Z^{\sigR}_t (X^*_t(x))^{-\alpha}=x^{-\alpha}Y_t^0$, where $Y_t^0$ is the minimal state price density.\\ {\em The optimal wealth
$X^{*}_t(x)$ and the optimal dual process $Y^{*}_t(y)=yY^{*}_t$ are linear  with respect to their initial condition.} So,

\begin{equation*}
(I)\left\{
\begin{aligned}
Z_t&=&Z^{\perp}_t\,Y_t^0\,(X^{*}_t)^{\alpha},\hspace{15mm} &\tilde{Z}_t=X^{*}_t\big(Y^{*}_t\big)^{\frac{1}{\alpha}}\\
X^{*}_t(x)&=&x X^{*}_t,\hspace{30mm}&Y^{*}_t(y)=yY^{*}_t=yZ^{\perp}_t\,Y_t^0\hspace{15mm}\\
U^{(\alpha)}(t,x)&=&\frac{Y^*_t X^*_t}{1-\alpha}\b(\frac{x}{X^*_t}\b)^{1-\alpha},\hspace{10mm} &
\tU^{(\alpha)}(t,y) =\frac{Y^*_tX^*_t}{\frac{1}{\alpha}-1}\big(\frac{y}{Y^*_t}\big)^{1-\frac{1}{\alpha}}
\end{aligned}
\right.
\end{equation*}
Since the characteristics of the power utility $U^{(\alpha)}(t,x)=Z_t \frac{x^{1-\alpha}}{1-\alpha}\>=Z_t \>u^{(\alpha)}(x)$ are only dependent of the characteristics $(\beta^Z, \gamma^Z)$ of $Z$, \\[1mm]
\centerline{$\gamma^{(\alpha)}(t,x)=\gamma^Z_t \>u^{(\alpha)}(x), \quad  \beta^{(\alpha)}(t,x)=\beta^Z_t \>u^{(\alpha)}(x)$, \quad with \quad $dZ_t =\beta^Z_t dt+\gamma^Z_t.dW_t.$ \quad \quad }\\[1mm]
Equations \eqref{portconso} and  \eqref{BETAcons} of Theorem \ref{EDPGacons} are easily verified (with $V=\tilde V=0$), from the formula $Z_t=Y^{*}_t\,
(X^{*}_t)^{\alpha}$ whose differential characteristics are $\gamma^Z_t=Z_t (\alpha \kappa^*_t+(\nu^*_t-\eta_t^\sigR))$ and $\beta^Z_t=(1-\alpha)Z_t(-r_t+ \demi \alpha \|\kappa_t^*\|^2).$
%
\\[-8mm]
\paragraph*{Consistent Progressive Power  Utility with consumption}  
When the problem consists in optimizing also  a consumption process,  we have to precise what stochastic utility for the consumption we must choose to satisfy the consistency of the utility system $(U^{(\alpha)}, V)$ when $U^{(\alpha)}_t(x)=\widehat Z_t u^{(\alpha)}(x)$ is a power progressive utility, and $\widehat Z$ a semimartingale with local characteristics ($\hat \gamma, \hat \beta)$. A useful tool is the system of equations (\ref{eq:HJBConst}), since the equation characterizing the  process $\gamma_x(t,x)=\widehat \gamma_t u_x^{(\alpha)}(x)$ does not depend explicitly on $V$. To make the distinction between the two problems, we introduce the symbol 
$\hat.$ in the quantities relative to the problem with consumption. \\
\rmi Equation \eqref{portconso}, after dividing the both sides by $u_x^{(\alpha)}(x)$, yields to 
\bc $\widehat \gamma_t =\widehat Z_t \b(\alpha x^{-1}\widehat \sigma^*_t(x)+( \widehat \nu^*_t(U_x(t,x)) -\eta_t^\sigR).$
\ec
Since $\widehat \gamma_t $ does not depends on $x$, this equality implies as above that $x^{-1}\widehat \sigma^*_t(x)=\widehat \kappa^*_t(x)$ and $\widehat \nu^*_t(U_x(t,x))$ does not depend on $x$, so as in the situation without consumption $\widehat \gamma_t=\widehat Z_t (\alpha \widehat \kappa^{*}_t+(\widehat \nu^*_t-\eta_t^\sigR))$.\\
\rmii The drift equation  (\ref{BETAcons}) becomes 
 \bc $\widehat \beta_t=\widehat Z_t(-(1-\alpha)\,r_t+ \demi \alpha(1-\alpha)\|\widehat \kappa^{*}_t\|^2)-\widetilde V(t,\widehat Z_t u_x^{(\alpha)}(x))/u^{(\alpha)}(x)$.
 \ec
Thus, by the same argument as before, the progressive conjuguate utility $\widetilde V(t,y)$ must be chosen in such a way that 
$\widetilde V(t,\widehat Z_t u_x^{(\alpha)}(x))=\alpha \>\widehat \psi_t \> \widehat Z_t \>u^{(\alpha)}(x)$, where $\widehat \psi_t$
 is a positive adapted process with good integrability property.  As a consequence, $\widetilde
V(t,y)=\widehat \psi_t (\widehat Z_t )^{1/\alpha} \>\tilde u^{(\alpha)}(y).$ So, $\widetilde V$ is the
Fenchel transform of  a power utility $V(t,x)=(\widehat \psi_t )^\alpha \widehat Z_t \>u^{(\alpha)}(x)= (\widehat \psi_t
)^\alpha\,U^{(\alpha)}(t,x)$.\\
\rmiii Then, the process $\widehat Z$  is a solution of the stochastic differential equation 
\bc
$ \hspace{5mm}	d\widehat Z_t= \widehat Z_t\b[(\alpha \widehat \kappa^{*}_t+(\widehat \nu^*_t-\eta_t^\sigR).dW_t+\b(-(1-\alpha)\,r_t+ \demi \alpha(1-\alpha)\|\widehat \kappa^*_t\|^2-\alpha \> \widehat \psi_t \b)dt\b] $
\ec
 where the optimal strategies are the same as in the case without consumption ($\widehat \kappa^{*}\equiv \kappa^{*},\>\widehat \nu^*\equiv \nu^*)$, 
$\widehat Z_t= Z_t e^{-\alpha\int_0^t \widehat \psi_s  ds}$. The process $\widehat \psi_t$  plays in this formula the  role of an additional spread to the interest rate $r_t$  for the wealth but not for the sate price density. This interpretation is justified by the closed form of the optimal consumption $c^*_t(x)=-\widetilde V_y(t,\widehat Z_t u_x^{(\alpha)}(x))$ given after some tedious calculation by $c^*_t(z)=z\,\widehat \psi_t$. 
\begin{Corollary} \label{consforwardpower}A consumption consistent progressive power utility system is  necessarily a pair of power utilities with the same risk aversion coefficient $\alpha$ such that \\[1mm]
\centerline{$U^{(\alpha)}(t,x)=\widehat Z_t \frac{x^{1-\alpha}}{1-\alpha}=\widehat Z_t u^{(\alpha)}(x)\quad$ and $\>V^{(\alpha)}(t,x)=(\hat \psi_t)^\alpha U^{(\alpha)}(t,x)$.  }\\[1mm]

\rmi The  optimal processes are linear with respect of their initial condition, i.e. 
\\[1mm]
\centerline{$\widehat X^*_t(x)=x \widehat X^*_t, \> Y^*_t(y)=y Y^*_t, \>\text{and}\quad c^*_t(z)=z\>\widehat \psi_t.$}\\
\rmii The coefficient  $\widehat Z_t$ is determined by the optimal processes via $\widehat Z_t=Y^*_t (\widehat X^*_t)^\alpha$, while the coefficient $\widehat \psi_t$
 is only assumed to be positive.\\
 \rmiii The optimal processes (with initial condition $1$) are driven by the system  $c^*_t=\widehat \psi_t$ and 
\begin{equation}
d\widehat X^*_t=\widehat X^*_t\b((r_t-\widehat \psi_t)dt+ \kappa_t^*.(dW_t+\eta^\sigR_t)\b), \> dY^*_t=Y^*_t\b(-r_t dt +(\nu^*_t-\eta^\sigR_t)dW_s\b)
\end{equation}
\end{Corollary}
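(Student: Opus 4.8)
The plan is to assemble into the three assertions the computations carried out in the discussion preceding the statement, using at every step the homogeneity of the power function $u^{(\alpha)}(x)=x^{1-\alpha}/(1-\alpha)$. I start from a consumption consistent system in which $U^{(\alpha)}(t,x)=\widehat Z_t\,u^{(\alpha)}(x)$, with $\widehat Z$ an It\^o semimartingale whose local characteristics $(\widehat\beta_t,\widehat\gamma_t)$ are free of $x$, and I feed this ansatz into the HJB characterization of Theorem \ref{EDPGacons}, i.e. the policy equation \eqref{portconso} and the drift constraint \eqref{BETAcons}, both of which factor through $u^{(\alpha)}$.

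First I would pin down the form of $V$ forced by consistency. Dividing \eqref{portconso} by $u^{(\alpha)}_x(x)$ rewrites it as $\widehat\gamma_t=\widehat Z_t\b(\alpha x^{-1}\widehat\sigma^*_t(x)+(\widehat\nu^*_t(U_x(t,x))-\eta^{\sigR}_t)\b)$; since the left member does not depend on $x$, homogeneity forces $x^{-1}\widehat\sigma^*_t(x)=\widehat\kappa^*_t$ and $\widehat\nu^*_t$ to be $x$-free, so the optimal policy is the one of the problem without consumption. Dividing \eqref{BETAcons} by $u^{(\alpha)}(x)$ and again using that $\widehat\beta_t$ is $x$-free shows that $\widetilde V(t,\widehat Z_t u^{(\alpha)}_x(x))/u^{(\alpha)}(x)$ must be constant in $x$; writing this common value as $\alpha\,\widehat\psi_t\,\widehat Z_t$ with $\widehat\psi$ positive and adapted, I recognize $\widetilde V(t,y)=\widehat\psi_t(\widehat Z_t)^{1/\alpha}\,\tilde u^{(\alpha)}(y)$, the Fenchel transform of the power utility $V^{(\alpha)}(t,x)=(\widehat\psi_t)^\alpha\,U^{(\alpha)}(t,x)$. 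This yields the "necessarily a pair of power utilities with the same $\alpha$" part of the statement.

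Next I would read off \rmi, \rmii and \rmiii. Since $\widetilde V$ is now explicit, the optimal consumption $c^*_t(x)=-\widetilde V_y(t,\widehat Z_t u^{(\alpha)}_x(x))$ evaluates to $c^*_t(x)=x\,\widehat\psi_t$; inserting this together with the $x$-free $\kappa^*_t$ into the wealth dynamics \eqref{eq:DynamX} gives the linear equation $d\widehat X^*_t=\widehat X^*_t\b((r_t-\widehat\psi_t)dt+\kappa^*_t.(dW_t+\eta^{\sigR}_t dt)\b)$, homogeneous of degree one in the initial datum, whence $X^*_t(x)=x\,\widehat X^*_t$; the same reasoning on \eqref{Ynu} with the $x$-free $\nu^*$ gives $Y^*_t(y)=y\,Y^*_t$ and the stated dual SDE, which is \rmiii. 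For \rmii I combine the first order condition \eqref{first order condition}, $Y^*_t(y)=U_x(t,X^*_t(x))$ with $y=x^{-\alpha}$, and $U_x(t,x)=\widehat Z_t\,x^{-\alpha}$: then $x^{-\alpha}Y^*_t=\widehat Z_t\,(x\widehat X^*_t)^{-\alpha}$, and cancelling $x^{-\alpha}$ leaves $\widehat Z_t=Y^*_t(\widehat X^*_t)^\alpha$.

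The only genuinely substantive point, and the reason the result holds "without any additional regularity conditions" in contrast with the hypothesis of Theorem \ref{EDPGacons} that the SDE$(\bar\mu^{c},\bar\sigma)$ admit a non negative solution, is the existence and strict positivity of this candidate optimal wealth. I expect this to be where care is needed, but the obstacle dissolves: after substituting the homogeneous policy and consumption, the wealth equation is a one dimensional linear SDE, so under mere local integrability of $r$, $\kappa^*$, $\eta^{\sigR}$ and $\widehat\psi$ it admits the explicit strictly positive Dol\'eans exponential $\widehat X^*_t=\exp\b(\int_0^t(r_s-\widehat\psi_s-\demi\|\kappa^*_s\|^2)ds+\int_0^t\kappa^*_s.(dW_s+\eta^{\sigR}_s ds)\b)$, and likewise $Y^*_t$ and $\widehat Z_t=Z_t e^{-\alpha\int_0^t\widehat\psi_s ds}$ are well defined. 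This verifies the standing hypothesis of Theorem \ref{EDPGacons}, so the candidate strategy is optimal and the three assertions follow.
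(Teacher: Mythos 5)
Your proposal is correct and follows essentially the same route as the paper's own derivation (the discussion preceding the Corollary): dividing \eqref{portconso} by $u_x^{(\alpha)}$ and \eqref{BETAcons} by $u^{(\alpha)}$ to force an $x$-free policy $(\widehat\kappa^*_t,\widehat\nu^*_t)$ and the power form $\widetilde V(t,y)=\widehat\psi_t(\widehat Z_t)^{1/\alpha}\,\tilde u^{(\alpha)}(y)$, then computing $c^*_t(x)=x\,\widehat\psi_t$, reading off the linear optimal processes from the resulting one-dimensional linear SDEs, and identifying $\widehat Z_t=Y^*_t(\widehat X^*_t)^\alpha$ via the first order condition. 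Your only addition is to make explicit the existence step through the strictly positive Dol\'eans exponential, which the paper leaves implicit when asserting existence without additional regularity conditions.
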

\noindent In the general case of consistent progressive utilities, additional regularity conditions are needed, but it is still possible to give a closed form of the forward utility in terms of initial condition and optimal processes.

\subsection{Regularity issues for existence of consistent progressive utility and closed form characterization via optimal processes.}
In Subsection \ref{subsec:4.2}, we have assumed the consistent progressive utility $\bf U$ sufficiently regular to apply It\^o's Ventzel formula in view of establishing HJB constraint; then we have shown the links between local characteristics and coefficients of the SDE associated with an optimal portfolio, without proving the existence. The same kind of assumptions are made on the conjugate
$\tilde{\bf U}$, implying the dual HJB constraint in the same way  than for the primal problem. But it is well-known that in all generality these assumptions are not satisfied.\\
Assuming the existence of regular progressive utility satisfying HJB constraint, we show that 
 $(\bf U, \bf V)$ is a $\GX^c$-consistent stochastic utility system, associated with a regular optimal dual SDE$( \tilde \mu^*, \tilde \sigma^*)$ whose coefficients are based only on the diffusion characteristics $\gamma$ of $\bfU$, and do not depend on the  utility of consumption process $V$. 
 The existence of this strong dual solution is very important in view to apply Theorem \ref{pp: composition formula} not directly to $\bf U$ but to ${\bf U}_x$ whose the diffusion characteristic $\gamma_x$ has the same form than the diffusion characteristic of the random field $G$, where $\sigma^Y$ is replaced by  $\tilde \sigma^*_t$, and $\sigma^X$  by $\sigma^*_t(x)=\kappa^*_t(x)$. 
 In addition to consistency, under this HJB constraint, we show that such  utility system can be represented in a
closed  form. 
\\[1mm]
 To be closed to the notation of Theorem \ref{pp: composition formula}, we recall all the coefficients of optimal SDEs  associated
with the primal and dual problems,
\begin{equation}\label{eq:optcoeff}
\left\{
\begin{array}{llll}
\tilde \sigma^*_t(y)&:=y(\nu_t^*(y)-\eta^\sR_t), &\tmu_t(y):=-r_t \,y,  \quad \\[1mm]
\sigma^*_t(x)&:=x\kappa^*_t(x)&\mu^{*,c}_t(x):=r_t x+x\kappa^*_t(x).\eta_t^\sR-\zeta^*(t,x),\hspace{22mm}  \\[1mm]
\widehat L^{(\mu^*, \sigma^*,c)}_{t,x}&:=\frac{1}{2}\partial_x(\|\sigma^*_t(x)\|^2 \partial_x)-\mu^{*,c}_t(x)\partial_x,&
L^{(\tmu, \sigma^*)}_{t,x}=\demi \|\sigma^*_t(x)\|^2\partial_{xx}+\tmu_t(x)\partial_x
\end{array}
\right.
\end{equation}
%
%

\begin{Proposition}\label{ppU_xconso}
 Let ${\bf U}$ be a $\K^{2,\delta}_{loc}$-regular $(\delta>0)$ progressive utility ${\bf U}$,
whose  local characteristics $(\beta,\gamma)$ satisfy  the HJB constraints,
\begin{equation}\label{eq:HJBConst}
\left\{
\begin{array}{llll}
\gamma_x(t,x)&:=-U_{xx}(t,x)x\kappa^*_t(x)+\tilde \sigma^*_t(U_x(t,x)), \quad \gamma^\perp_x(t,-\tU_y(t,y))=y\nu^*_t(y)\\
\beta(t,x)&:=-U_x(t,x)x\,r_t-\tV(t,U_x(t,x))+\frac{1}{2}U_{xx}(t,x)\|x\kappa^*_t(x)\|^2
\end{array}
\right.
\end{equation}
\rmi The marginal utility ${\bf U}_x$ is a decreasing solution of the SPDE\eqref{eq:
composition formulaA}  with coefficients $(\mu^{*,c}, \sigma^*)$ and $(\tilde \mu^*,\tilde \sigma^*)$\\[-7mm]
\begin{eqnarray}\label{eq: composition formulaU}
 dU_x(t,x)&=&\tilde \sigma^*_t(U_x(t,x)).dW_t+\tilde \mu^*_t(U_x(t,x))dt \nonumber\\
 &-&\partial_xU_x(t,x)\sigma^*_t(x).\big(dW_t+\tilde{\sigma}^*_y(t,U_x(t,x))dt)+\wL^{*,c}_{t,x}(U)dt
\end{eqnarray}
\rmii  Assume that the SDE$(\mu^{*,c},\sigma^*_t)$ and SDE$(-r_t y, \tilde \sigma^*(t,y))$ admit a monotonic solution $(X^*_t(x), Y^*_t(y)).$
Then, the marginal forward utility at time $t$ is the non linear transportation of the marginal utility at time $0$ through the optimal dual processes,
\begin{equation}\label{eq:margutrepres}
U_x(t,x)=Y^*_t(u_x((X^*_t)^{-1}(x)))
\end{equation}

\end{Proposition}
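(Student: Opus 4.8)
The plan is to establish the two assertions in turn. Part (i) is a verification that the marginal utility $\bfU_x$ solves the announced SPDE, obtained by reading off its local characteristics and comparing coefficients; part (ii) then identifies that solution with the compound process $Y^*_t(u_x((X^*_t)^{-1}(\cdot)))$ via the converse statement of the composition Theorem~\ref{pp: composition formula}. For part (i), since $\bfU$ is $\K^{2,\delta}_{loc}$-regular, Proposition~\ref{DRules}(iii) guarantees that $\bfU_x$ is again an It\^o random field with local characteristics $(\beta_x,\gamma_x)$, so that $dU_x(t,x)=\beta_x(t,x)\,dt+\gamma_x(t,x).dW_t$; it then suffices to match $\gamma_x$ and $\beta_x$ with the diffusion and drift coefficients of \eqref{eq: composition formulaU}. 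The diffusion part is immediate from the first line of \eqref{eq:HJBConst}: since $\sigma^*_t(x)=x\kappa^*_t(x)$ and $\partial_xU_x=U_{xx}$, the identity $\gamma_x(t,x)=-U_{xx}(t,x)\,x\kappa^*_t(x)+\tilde\sigma^*_t(U_x(t,x))$ is exactly $\tilde\sigma^*_t(U_x(t,x))-\partial_xU_x(t,x)\,\sigma^*_t(x)$.

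The drift identity is the substantial computation. I would differentiate the second line of \eqref{eq:HJBConst} in $x$, using $\zeta^*(t,x)=-\tV_y(t,U_x(t,x))$, to get $\beta_x=-r_tU_x-r_tx\,U_{xx}+\zeta^*U_{xx}+\demi\,\partial_x\big(U_{xx}\|\sigma^*_t\|^2\big)$. On the other side I would expand the drift of \eqref{eq: composition formulaU}, namely $\tmu_t(U_x)-\partial_xU_x\,\sigma^*_t(x).\tilde\sigma^*_y(t,U_x)+\wL^{*,c}_{t,x}(U_x)$, with $\tmu_t(y)=-r_ty$ and $\mu^{*,c}_t(x)=r_tx+\sigma^*_t(x).\eta^\sR_t-\zeta^*(t,x)$. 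The contributions $-r_tU_x$, $-r_tx\,U_{xx}$, $\zeta^*U_{xx}$ and $\demi\,\partial_x(\|\sigma^*_t\|^2U_{xx})$ then match term by term. The main obstacle is disposing of the remaining pair $-\partial_xU_x\,\sigma^*_t(x).\tilde\sigma^*_y(t,U_x)$ and $-(\sigma^*_t(x).\eta^\sR_t)\,U_{xx}$ (this last one stemming from $-\mu^{*,c}_tU_{xx}$): I would show they cancel using the orthogonality $\sigR_t\perp\sigoR_t$. Indeed $\sigma^*_t(x)=x\kappa^*_t(x)\in\sigR_t$, whereas differentiating $\tilde\sigma^*_t(y)=y\nu^*_t(y)-y\eta^\sR_t$ gives $\tilde\sigma^*_y(t,y)+\eta^\sR_t=\partial_y\big(y\nu^*_t(y)\big)$, which lies in $\sigoR_t$ because $y\nu^*_t(y)=\gamma^\perp_x(t,-\tU_y(t,y))\in\sigoR_t$ by the second line of \eqref{eq:HJBConst} and differentiation in $y$ preserves that fixed subspace. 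Hence $\sigma^*_t(x).\tilde\sigma^*_y(t,U_x)=-\sigma^*_t(x).\eta^\sR_t$, the two terms cancel after multiplication by $-U_{xx}$, and $\beta_x$ is recovered exactly.

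For part (ii), I would read \eqref{eq: composition formulaU} as the SPDE \eqref{eq: composition formulaA} of Theorem~\ref{pp: composition formula} under the identifications $(\mu^Y,\sigma^Y)=(\tmu,\tilde\sigma^*)$, $(\mu^X,\sigma^X)=(\mu^{*,c},\sigma^*)$ and $G=U_x$, with initial datum $G(0,\cdot)=u_x$. Since by hypothesis the primal SDE$(\mu^{*,c},\sigma^*)$ and the dual SDE$(-r_ty,\tilde\sigma^*)$ admit monotonic solutions $X^*$ and $Y^*$, the converse statement Theorem~\ref{pp: composition formula}(ii) applies: the process $U_x(t,X^*_t(z))$ solves the dual SDE started at $u_x(z)$, hence equals $Y^*_t(u_x(z))$, and the substitution $z=(X^*_t)^{-1}(x)$ yields $U_x(t,x)=Y^*_t(u_x((X^*_t)^{-1}(x)))$. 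The strictly decreasing character of $U_x$ is inherited from the strict concavity of $\bfU$, while $X^*_0=Y^*_0=\mathrm{Id}$ returns the initial condition $U_x(0,x)=u_x(x)$.

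I expect the orthogonality cancellation in the drift to be the delicate point; a secondary bookkeeping task is to confirm that all random fields lie in the regularity classes ($\K^{2,\delta}_{loc}$ for $\bfU$, and accordingly $\K^{1,\eps}_{loc}\cap\Cc^2$ for $\bfU_x$) needed to legitimately invoke Proposition~\ref{DRules}(iii) and Theorem~\ref{pp: composition formula}(ii), and that uniqueness holds for the dual SDE so that the representation formula is unambiguous.
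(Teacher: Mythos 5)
Your proof is correct and takes essentially the same route as the paper: differentiate the drift constraint in \eqref{eq:HJBConst} using $\tV_y(t,U_x(t,x))=-\zeta^*(t,x)$, rewrite the result as $\wL^{*,c}_{t,x}(U_x)+\tmu_t(U_x)+\partial_xU_x\,\sigma^*_t(x).\eta^\sR_t$, and dispose of the cross term via the same orthogonality observation (that $\tilde\sigma^*_y(t,y)+\eta^\sR_t\in\sigoR_t$ while $\sigma^*_t(x)\in\sigR_t$, whence $\sigma^*_t(x).\tilde\sigma^*_y(t,y)=-\sigma^*_t(x).\eta^\sR_t$), before invoking Theorem~\ref{pp: composition formula}(ii) for the representation \eqref{eq:margutrepres}. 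Your term-by-term bookkeeping is in fact slightly cleaner than the paper's intermediate display (which carries a compensating slip in the $\sigma^*.\eta^\sR$ term), and your closing remark that uniqueness for the dual SDE is needed to make the identification unambiguous is a point the paper's proof also passes over silently.
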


\begin{proof} 
 First, as ${\bf U}$ is assumed to be $ \K^{2,\delta}_{loc}\cap \Cc^3$-regular, ${\bf U}_x$ is of class
$\K^{1,\delta}_{loc}$ and
its
local characteristics $(\beta_x,\gamma_x)$ are of class  $\Cc^1$ in $x$; then, the vectors $
\sigma^*_t(x)=-(\gamma_x^\sR(t,x)+\eta^\sR _t U_x(t,x))/U_{xx}(t,x)$ and $\tilde \sigma^*_t(y)=
\gamma_x^\perp(t,-\tU_y(t,y))-y\eta^\sR_t$ are also of class $\Cc^1$, necessary condition to define $\widehat L^{*,c}$. \\
 By derivation of  the local characteristics of the regular progressive utility ${\bf U}$, we see that\\[1mm] 
\centerline{$\beta_x(t,x)=- \partial_x(U_x(t,x)xr_t)-U_{xx}(t,x)\tV_y(t,U_x(t,x))+\partial_x\big(\demi
U_{xx}(t,x)\|\sigma^*_t(x)\|^2\big).\qquad$}\\[2mm]
Observing that $\tV_y(t,U_x(t,x))=-(V_c)^{-1}(t,U_x(t,x))=-\zeta^*(t,x)$, it follows that \\[-4mm]
$$
\begin{array}{clll}
- \partial_x(U_x(t,x)xr_t)-U_{xx}(t,x)\tV_y(t,U_x(t,x))&=- \partial_x(U_x(t,x)xr_t+U_{xx}(t,x)\zeta^*(t,x)\\
&=-\partial_xU_x(t,x)\mu^{*,c}(t,x)-r_tU_x(t,x).
 \end{array}
 $$
It remains to make some slight transformations on the drift characteristic: \\[-4mm]
 \begin{equation*}\label{pf:ppU_x}
\begin{array}{llll}
\beta_x(t,x)&=- \partial_xU_x(t,x)\mu^{*,c}(t,x)-r_tU_x(t,x)        +\partial_x\big(\demi U_{xx}(t,x)\|\sigma^*_t(x)\|^2\big)\\[1mm]
&=\widehat L^{*,c}_{t,x}(U)-r_tU_x(t,x)+\partial_xU_x(t,x) \sigma^*_t(x).\eta_t^\sR\\[1mm]
&=\widehat L^{*,c}_{t,x}(U)+\tilde \mu^*_t(U_x(t,x)) +\partial_xU_x(t,x) \sigma^*_t(x).\eta_t^\sR
\end{array}
\end{equation*}
Let us give another interpretation of $ \sigma^*_t(x).\eta_t^\sR$. Since  $\tilde
\sigma^*(t,y)+\eta^\sR_t y$ belongs to the vector space $\sigoR_t$ 
 the spatial derivative $\tilde \sigma^*_y(t,y)+\eta^\sR_t$
 is also in $\sigoR_t$, yielding to the relation on the scalar products $-
\sigma^*_t(x).\eta_t^\sR=\sigma^*_t(x).\tilde\sigma^*_y(t,y)$. Then, Identity \eqref{eq: composition formulaU} holds true.\\
\rmii If we know the existence of monotonic solution of  SDE$(\mu^{*,c},\sigma^*_t)$ and SDE$(-r_t y, \tilde \sigma^*(t,y)),$ from the form of the SPDE associated with $U_x$ and the assertion $(ii)$ of Theorem \ref{pp: composition formula}, we easily obtained the representation   $U_x(t,x)=Y^*_t(u_x((X^*_t)^{-1}(x)))$.
\end{proof}

 The next theorem gives sufficient condition for the existence of (monotonic) optimal solutions for the optimisation problem.
\begin{Theorem} Let ${\bf U}$ be a $\K^{2,\delta}_{loc}\cap\Cc^3$-regular $(\delta>0)$ progressive utility ${\bf U}$,
whose  local characteristics $(\beta,\gamma)$ satisfy HJB constraints \ref{eq:HJBConst}
.\\
{\sc Main result} Suppose the existence of two adapted bounds $(K^1,K^2)\in \L^2(dt)$ such that the regular random field $\gamma^\perp_x$ satisfy  
\begin{equation}\label{eq:gammaderivatives}
\|\gamma^\perp_x(t,x)\|\leq
K^1_t\,|U_x(t,x)|,  \> \|\gamma^\perp_{xx}(t,x)\|\leq K^2_t \,|U_{xx}(t,x)|, \>a.s., 
\end{equation}
\rmi As $ y \nu_t^*(y)=\gamma^\perp_x(t,U_x^{-1}(t,y))$, and $\tilde \sigma^*_t(y):=y(\nu_t^*(y)-\eta^\sR_t)$, the SDE$(-r_t y, \tilde \sigma^*(t,y))$ is uniformly Lipschitz and its
 unique strong solution $Y^*_t(y)$ is increasing, with range $[0,\infty).$\\
\rmii  Moreover, assume the existence of an adapted bound $K^3$ such that process $V_c(t,K^3 x) \ge U_x(t,x) \>a.s.$ for
any $x$. Using the notations $\sigma^*_t(x):=x\kappa^*_t(x)$ and $\mu^{*,c}_t(x):=r_t
x+x\kappa^*_t(x).\eta_t^\sR-\zeta^*(t,x)$, 

a) The SDE$(\mu^{*,c}, \sigma^*)$ is locally Lipschitz and admits a
maximal positive monotonic solution $X^*$ such that  $U_x(.,X^*_.(x))$ is distinguishable from the solution $Y^*_.(u_x(x))$.

b) The optimal consumption along the optimal wealth process   is\\[-7mm]
\bc
$c^*_t(x)=\zeta^*(t,X^*_t(x))=-\tV_y(t,U_x(t,X^*_t(x)))=-\tV_y(t,Y^*_t(u_x(x))$. 
\ec
 {\sc Reverse solution} Denote by $\mu^*(t,x)=r_t x+x\kappa^*_t(x).\eta_t^\sR$ the drift of some portfolio without consumption, by $\bar \zeta(t,x)$ some increasing adapted positive random field, and by ${\bar \mu}(t,x)=\mu^*(t,x)-\zeta(t,x)$. Assume the existence  $({\bar X},Y^*)$ of two monotonic solutions of SDE(${\bar \mu},\sigma^*)$ and SDE($\tmu,\tsigma$) with range $(0,\infty)$.\\
\rma For any determinisitc utility function $(u,v)$ such that $v_c(\zeta(0,x))=u_x(x)$, \\[1mm]
\centerline{
$U_x(t,x)=Y^*_t(u_x({\bar X}_t^{-1}(x)), \quad V_c(t,c)=U_x(t,\zeta^{*,-1}(t,c))$}\\[1mm]
\rmb Moreover, if $Y^*_t(u(x))\partial_x {\bar X_t(x)} $ is Lebesgue-integrable in a neighborhood of $0$, then
\begin{equation}\label{intutility}
U(t,\bar X_t(x))=\int_0^x Y^*_t(u_x(z))\partial_x {\bar X_t(z)} dz,\quad V(t,\zeta^{*}(t,c))=\int_0^c U_x(t,z)d_z\zeta^{*}(t,z)
\end{equation}
Then, with these additional integrability assumptions, $({\bf U}_x, {\bf V}_c)$ are the marginal utilities of a consistent utility system with
consumption.
\end{Theorem}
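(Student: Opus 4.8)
The plan is to establish the four assertions in order, leaning on the SDE/SPDE dictionary of Section~\ref{FRSDE} and on Proposition~\ref{ppU_xconso}.

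\emph{Dual equation (i).} First I would read off the regularity of the dual diffusion from the bounds \eqref{eq:gammaderivatives}. Since $y\nu_t^*(y)=\gamma^\perp_x(t,U_x^{-1}(t,y))$ and $U_x^{-1}(t,\cdot)=-\tU_y(t,\cdot)$, the first bound gives $\|y\nu_t^*(y)\|\le K^1_t\,|U_x(t,U_x^{-1}(t,y))|=K^1_t\,y$, i.e.\ linear growth. Differentiating in $y$ and using $\partial_y U_x^{-1}(t,y)=1/U_{xx}(t,U_x^{-1}(t,y))$ together with the second bound yields $\|\partial_y(y\nu_t^*(y))\|\le K^2_t$. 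Hence $\tilde\sigma^*_t(y)=y\nu_t^*(y)-y\eta^\sR_t$ is uniformly Lipschitz with an $\L^2(dt)$ random bound, the drift $-r_t y$ is linear, and both vanish at $y=0$. The flow theorem of Section~\ref{FRSDE} then delivers a unique strong solution $Y^*$, increasing in its initial condition with range $[0,\infty)$, the point $0$ being absorbing so that positivity is preserved.

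\emph{Primal equation (ii).} The $\K^{2,\delta}_{loc}\cap\Cc^3$-regularity of ${\bf U}$ makes $\sigma^*_t(x)=x\kappa^*_t(x)$ (from \eqref{portconso}) and $\mu^{*,c}$ locally Lipschitz; because $V_c(t,\cdot)$ is decreasing, the hypothesis $V_c(t,K^3x)\ge U_x(t,x)$ yields the comparison $\zeta^*(t,x)\ge K^3x$ for the consumption field $\zeta^*=(V_c)^{-1}(\cdot,U_x)$, which caps the growth of $\mu^{*,c}$; together with local Lipschitzness the flow theorem then produces a maximal positive monotonic solution $X^*$. The key step is the identification: by Proposition~\ref{ppU_xconso} the field $U_x$ solves the SPDE \eqref{eq: composition formulaU} with coefficients $(\mu^{*,c},\sigma^*)$ and $(\tilde\mu^*,\tilde\sigma^*)$, so feeding $X^*$ into Theorem~\ref{pp: composition formula}(ii) shows $U_x(t,X^*_t(x))$ solves the dual SDE$(\tilde\mu^*,\tilde\sigma^*)$ from the initial value $u_x(x)$; uniqueness from (i) then forces $U_x(\cdot,X^*_\cdot(x))$ and $Y^*_\cdot(u_x(x))$ to be indistinguishable. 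Inverting the decreasing bijection $U_x(t,\cdot)$ gives $X^*_t(x)=U_x^{-1}(t,Y^*_t(u_x(x)))$, finite and positive because $Y^*$ is. Assertion (ii-b) is then immediate: $c^*_t(x)=\zeta^*(t,X^*_t(x))=-\tV_y(t,U_x(t,X^*_t(x)))=-\tV_y(t,Y^*_t(u_x(x)))$.

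\emph{Reverse solution.} Here I would \emph{define} $U_x(t,x):=Y^*_t(u_x(\bar X_t^{-1}(x)))$ and show it is a marginal progressive utility. Applying Theorem~\ref{pp: composition formula}(i) with $Y=Y^*$ (solving SDE$(\tmu,\tsigma)$), inverse flow $\xi=\bar X^{-1}$ of the solution $\bar X$ of SDE$(\bar\mu,\sigma^*)$, and $\phi=u_x$ --- the inverse flow being a semimartingale by Proposition~\ref{SDEInv} --- shows $U_x$ solves an SPDE of the form \eqref{eq: composition formulaU}; reading off its diffusion and drift characteristics recovers $(\beta,\gamma)$ and checks the HJB constraints \eqref{eq:HJBConst}. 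Setting $V_c(t,c):=U_x(t,\zeta^{*,-1}(t,c))$ prescribes the consumption utility, and its first-order condition matches at $t=0$ exactly under $v_c(\zeta(0,x))=u_x(x)$. Under the integrability hypothesis the primitives \eqref{intutility} are genuine progressive utilities by Proposition~\ref{UxSDE}(iii). Finally, since $(\beta,\gamma)$ satisfy \eqref{eq:HJBConst}, which is the HJB constraint of Theorem~\ref{EDPGacons}, and $\bar X,\bar\zeta$ solve the associated optimal SDEs, Theorem~\ref{EDPGacons} certifies that $({\bf U},{\bf V})$ is $\GX^c$-consistent with optimal wealth $\bar X$ and optimal consumption $\bar\zeta(\cdot,\bar X_\cdot)$.

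\emph{Main obstacle.} The crux is the identification and the non-explosion it buys in (ii-a): the primal coefficients are only locally Lipschitz, so $X^*$ could a priori explode and the flow theorem cannot be applied globally. The remedy is to push the question to the dual side, where $Y^*$ is global and confined to $(0,\infty)$ by the uniform estimate of (i); the SPDE-to-SDE passage of Theorem~\ref{pp: composition formula} then pins $X^*_t(x)$ to $U_x^{-1}(t,Y^*_t(u_x(x)))$ and excludes explosion. A secondary delicate point, in the reverse direction, is verifying that the characteristics read off the transported SPDE are exactly those in \eqref{eq:HJBConst}; here the monotonicity of $V_c$ and the matching condition $v_c(\zeta(0,x))=u_x(x)$ are what make the reconstructed consumption utility $V$ consistent.
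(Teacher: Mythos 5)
Your proof follows essentially the same route as the paper's: in (i) you read the uniform Lipschitz property of $\tilde\sigma^*$ off the bounds on $\gamma^\perp_x$ and $\gamma^\perp_{xx}$; in (ii) you build a maximal solution $X^*$ from locally Lipschitz coefficients and obtain globality not from growth estimates but from the identification $U_x(t,X^*_t(x))=Y^*_t(u_x(x))$, proved by verification through the SPDE of Proposition~\ref{ppU_xconso} and Theorem~\ref{pp: composition formula}; and you treat the reverse construction via Theorem~\ref{pp: composition formula} together with the change-of-variable formula for \eqref{intutility}. This is exactly the paper's argument (whose own proof is only a few lines), fleshed out in considerably more detail.

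One small point of divergence is worth recording. From the hypothesis $V_c(t,K^3x)\ge U_x(t,x)$ and the fact that $V_c(t,\cdot)$ is decreasing you correctly deduce $\zeta^*(t,x)\ge K^3 x$; but this lower bound does not, as you claim, ``cap the growth'' of $\mu^{*,c}$ --- it bounds the drift from above only. The paper's proof instead asserts $\zeta^*(t,x)\le K^3_t x$ and deduces linear growth, which would follow from the reversed hypothesis $V_c(t,K^3x)\le U_x(t,x)$; so there is a sign inconsistency between the statement and the proof in the paper, which your derivation inadvertently exposes. The slip is harmless to your write-up: as you yourself observe in the ``main obstacle'' paragraph, existence of the maximal monotonic solution needs only local Lipschitzness, and it is the identification with the globally defined, strictly positive $Y^*$ --- not any growth bound --- that excludes both explosion and absorption at $0$ (since $U_x(t,0^+)=\infty$ and $U_x(t,\infty)=0$ while $Y^*_t(u_x(x))\in(0,\infty)$). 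This is precisely the mechanism the paper invokes, so your proof stands.
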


\begin{proof} The proof of the {\sc main result} is easy, given the previous results.\\
\rmi This assertion is a simple consequence of assumptions on the orthogonal diffusion characteristics.\\
\rmii a) We start by solving the wealth SDE with coefficients $\sigma_t^*(x)$ and $\mu_t^{*,c}(x)$. These coefficients are locally
Lipschitz, with linear growth  since $\zeta^*(t,x)=(V_c)^{-1}(t,U_x(t,x))\le K^3_t x$. Then a strong solution $X^*$ exists
up to a
explosion time $\tau(x)$. But, by verification from the SPDE,  $U_x(t,X_t^*(x))=Y^*_t(u_x)$ on $[0, \tau(x))$. Since
$Y^*_t(u_x)$ is well defined,
$\tau(x)=+\infty \quad a.s.$.\\
The formulation of the {\sc Reverse problem with consumption } is a more complex, since we have to take into account the incresainf function $\zeta$. The assertion \rma is proved by the same argument as before using Theorem \ref{pp: composition formula}. The assertion \rmb
gives an intuitive form to the construction of the forward utility itself by application of the change of variable formula.
\end{proof}

\vspace{-6mm}

%

\subsection{ Value function of backward classical utility maximization problem as consistent progressive utility}\label{transition:classic} 
This subsection points out the similarities and the differences between consistent progressive 
utilities and backward classical value functions, and their corresponding portfolio/consumption optimization problems. \\[-8mm]
\paragraph{Classical portfolio/consumption optimization problem and its conjugate problem}
{ The classic problem of optimizing consumption and terminal wealth  is determined by a  fixed horizon $T_H$ and two deterministic utility functions $u(.)$ and $v(t,.)$  defined up to this horizon.}
Using the same notations as in Section \ref{sec:conso}, the classical optimization problem  is formulated as the following maximization problem,
\begin{eqnarray}\label{pbopticlassic}
\sup_{(\kappa,c) \in  \GX^c  }  \mathbb{E} \Big( u(X^{\kappa,c}_{T_H})+\int_{0}^{T_H} 
v(t,{c^\mathbb{}_t}) dt \Big).
\end{eqnarray}
For any $[0, T_H]$-valued  $\mathbb{F}$-stopping $\tau$  and  for any positive random variable 
$\mathcal{F}_{\tau}$-mesurable $\xi_{\tau}$,  $ \GX^c(\tau,\xi_{\tau})$ denotes the set of admissible strategies  starting at time $\tau$ with an initial positive wealth  $\xi_{\tau}$, stopped when the wealth process reaches 0.
The corresponding value system (that is a family of random variables indexed by $(\tau, \xi_{\tau})$) is defined as, 
\begin{eqnarray}\label{pbopticlassic}
\mathcal{U}(\tau,\xi_{\tau})= \esssup_{(\kappa,c) \in  \GX^c(\tau,\xi_{\tau})  } \mathbb{E}  \Big( u(X^{\kappa,c} _{T_H}(\tau,\xi_{\tau}))+\int_\tau^{T_H}
v(s,{c^\mathbb{}_s}) ds | {\cal
F}_\tau \Big), \>a.s.
\end{eqnarray}
with terminal condition $\mathcal{U}(T_H,x)=u(x)$. \\
{\em We assume the existence of  a  progressive utility (still denoted $\mathcal{U}(t,x)$) aggregating  these system}:  this result  is more or less implicit in the { literature} and  has been proven by Englezos and Karatzas  \cite{KarEng} in the case of a complete market. The proof for an incomplete market will be done in  a future work. \\
 As it is classical in such stochastic control problems (\cite{ElKaroui}) and shown by W. Schachermayer in \cite{MR2014244} for problem without consumption, the dynamic programming principle reads as follows:
 for  any pair $\tau
\leq \vartheta$  of $[0,T_H]$-valued stopping times
\bc
$\mathcal{U}(\tau, \xi_{\tau})=\esssup_{(X,c) \in  \GX^c(\tau, \xi_{\tau})  }\mathbb{E}\Big(\mathcal{U}(\vartheta,X_{\vartheta}(\tau,\xi_{\tau}) +   \int_{\tau}^{\vartheta} v(s,c_s) ds|{\cal F}_\tau\Big)\>a.s.$
\ec
Under mild assumptions on the asymptotic elasticity of utility functions  $(u,v)$, it is also proved in \cite{Kramkov4} and \cite{MR2014244} the existence for any initial wealth of an optimal solution (portfolio, consumption). Then, {\em
$(\mathcal{U}(t,x), v(t,c))$ is a $\GX^c$-consistent dynamic utility system in the sense of Definition \ref{def:conso} up to
time $T_H$. } The same property was proved by Mania and Tevzadze \cite{Mania} in a problem without consumption under
strong  regularity assumption on the value function $\mathcal{U}$ by using backward SDPE.\\
Similarly,  let  $(\widetilde{\mathcal{U}}(t,y), \tilde v(t,y))$ be the  convex conjugate
 utilities  of $(\mathcal{U}(t,x), v(t,c))$. $\widetilde{\mathcal{U}}(t,y)$ aggregates the dynamic version of 
the equivalent backward dual  problem (Karatzas-Lehoczky-Shreve \cite{Karatzas04}) defined, for any $\mathbb{F}$-stopping time $\tau\leq T_H$  and  for any positive random variable $\mathcal{F}_{\tau}$-mesurable $\psi_{\tau}$,
from  the family $\GY^c(\tau, \psi_\tau)$ of  the state price density processes $\{ Y^\nu,  \nu \in \sigoR   \} $ (see
(\ref{Ynu})) with dynamics $dY^\nu_t(y)=Y^\nu_t(y)[-r_tdt+ (\nu_t-\eta^\sR_t).dW_t],\>
\nu_t \in \sigoR_t$, starting from $\psi_{\tau}$ at time $\tau$. The value function of the dual backward optimization problem is then
\begin{eqnarray}\label{dualopti}
 \widetilde{\mathcal{U}}(\tau, \psi_\tau)=  {\rm ess\,inf}_{Y^\nu \in  \GY^c(\tau,\psi_{\tau})  } \mathbb{E} \Big(
\tilde{u}(Y^{ \nu}_{T_H}) + \int_\tau^{T_H} \tilde{v}(s, Y^{ \nu}_{s}) ds  | {\cal
F}_\tau \Big), \>a.s.
\end{eqnarray}
\vspace{-10mm}
\paragraph{Optimal processes}\label{optimalproce}
The process $\b(Y_t^0(y)\b)$ associated with $\nu=0$ is linear in $y$, and  denoted $Y_t^0(y)=yY_t^0(1)=yY_t^0$, for simplicity. We also frequently used the shorthand notation $Y^0_{s,t}=Y^0_{t}/Y^0_{s}, \>s<t$.\\
\rmi  In a complete market, $\b(yY_t^0\b)$ is the unique state price density process,  and the value function is given by 
\bc
$\widetilde{\mathcal{U}}(\tau, \psi_\tau)= \mathbb{E} \Big(
\tilde{u}( \psi_\tau Y^{0}_{\tau, T_H}) + \int_\tau^{T_H} \tilde{v}(s, \psi_\tau Y^{0}_{\tau,s}) ds  | {\cal
F}_\tau \Big), \>\>\P.a.s.$
\ec
Then the optimal state price density $Y^*$ does not depend on the utility functions $u$ et $v$, and on the horizon  $T_H$, to the difference of the optimal processes $(X^*,c^*)$.\\
\rmii In an incomplete market,
we refer to  \cite{Kramkov4}
 to ensure   the existence of an optimal state price density  $Y^{\nu^*,T_H}$ minimizing the criterium
 (\ref{dualopti}) (observe that this can be done under weaker assumptions than in the forward case). Now, the optimal choice depends on the horizon and obviously on the utility functions $(u,v)$. Since we are essentially interested in the horizon dependency, we do not recall the influence of the utility criterium. To avoid too cumbersome  formula, $Y^{\nu^{*,T_H}}$ is often denoted $Y^{*,H}.$\\
Since $(\widetilde U, \tilde v)$ is a progressive conjugate utility system, the results of Theorem \ref{ppU_xconso}, (obtained directly in the { aforementioned } works by the maximum principle), we known that  $U_x(t,X^{*,H}_t(x))$, $Y_t^{*,H}(y)$ and the
 optimal consumption rate $c^{*.H}_t(c_0)$ are linked by their initial conditions:  $\mathcal{U}_x(0,x)=y= v_c(0, c_0)$
 \begin{equation}\label{maxconso}
 \left\{
 \begin{array}{cll}
 c^{*,H}_t(c_0)&= -   \tilde{v}_y(t, Y^{{*,H}}_{t}(y))  \quad   \mbox{ i.e. } \quad   &v_c(t, c^{*,H}_t(y)) =  Y^{*,H}_t(y), \\
{\mathcal{U}_x(t, X^{*,H}_t(x))} &=  Y^{*,H}_t(\mathcal{U}_x(0,x))=v_c(t, c^{*,H}_t(- \tilde{v}_y(0, y))) \quad& \mathcal{U}_x(0,x)=y= v_c(0, c_0)
 \end{array}
 \right .
 \end{equation}
  But 
the way  the classical  optimization   problem is posed is completely different 
from the progressive utility problem: the initial value of $\mathcal{U}$  is computed through a backward analysis, starting from its given terminal value, whereas for  progressive utilities  the initial value is given. Moreover, the optimal wealth is characterized only from its terminal value $X^{*,H}_{T_H}$. To compute its value at any time $t$, we have to use pricing techniques based on the fact that $( X^{*}_{T_H} +\int_0^{T_H} c^*_s ds)$ is a replicable asset, and its market value at time $t\leq T_H$ is given by
\begin{eqnarray}\label{Pricingrule} 
X^{*,H}_t(x) = \mathbb{E} \B( Y^{*,H}_{T_H}(y)   X^{*,H}_{T_H} + \int_t^T Y^{*,H}_s(y)  c^{*,H}_s(c_0) ds | \mathcal{F}_t \B).
 \end{eqnarray}
%
An other major difference is that in the backward point of view no attention is paid to the monotony of  optimal strategies.\\[-8mm]

\paragraph{Example of horizon dependency} In contrast to the progressive utility framework, the optimal solution in the
 classical setting highly depends on the horizon $T_H$, which leads to intertemporality issues. 
To illustrate this
 time-inconsistency, let us consider  an intermediate horizon $T$ between $0$ and $T_H$ and the following two scenarios.
 \bit
 \item[$-$]
 In the first one, the investor computes his optimal strategy for  the horizon $T$ and the utility functions $(u,v)$,
 and then reinvests at time $T$ its  wealth  $X^*_T$ to realize an optimal strategic policy $(X^{*,H}_{T_H}(T, X^*_T), c^{*,H}_t(T, X^*_T))$, 
optimal between the dates $(T,T_H)$ for  the problem with utility functions  $(u^H,v)$.
 \item[$-$]
 In the second one, the investor  computes his optimal strategy, denoted $(\hat{X}_{T_H}, \hat c^H_t)$, 
  directly   for  the horizon $T_H$ and the utility functions $(u^H,v)$. 
  \eit
By uniqueness of preferences, often implicitly assumed by the investors,
 the terminal value of both scenarios must coincide, that is  $X^*_{T_H}(T, X^*_T)=\hat{X}_{T_H}\> a.s.$. 
for any $T$ and $T_H$ ($T<T_H$).  This is impossible in general.  Indeed, between $(T,T_H)$ the investor is using the same utility functions, $(u^H, v)$ applied to different initial wealths at time $T$,  $X^*_T$ for the first strategy, and $\hat{X}^*_T$ for the second strategy, since ${\hat X}_{T_H}(T, {\hat X}_T)=\hat{X}_{T_H}\> a.s.$ In particular, if $\hat{X}$ is monotonic with respect to the initial wealth, the final time consistency can be done if and only if $\hat{X}_T=X^*_T,\>\P-a.s.$. If we are looking for the same property at any time $T$, the wealth process $\hat{X}$ and $X^*$ are the same.
On the other hand, the dynamic programming principle implies that  ${\hat X}_T$ is the optimal wealth for the classical problem with horizon $T$, but stochastic utility $(\mathcal{U}_x(T,x),v)$. In any case, the optimal strategies can not be the same.

Therefore, progressive utilities processes are an alternative to classical utilities functions that gives time-consistency
properties,  and motivate to reconsider problems issued from classical utility framework, with the light of  intertemporal consistency.  Section \ref{sec:taux} focus on the example of long term discount rates and yield curves. 
But before this, as an application of utility maximization, we recall some results on the pricing of contingent claim in finance.
%
%
%
\subsection{Risk neutral pricing and marginal utility (with consumption) indifference pricing}\label{subsec:pricing} 
In the backward point of view, we have found the market value of the optimal wealth, by the so-called pricing rule (\ref{Pricingrule}). This question is  related to a more general issue in finance, that consists in  the pricing of a bounded contingent claim $\zeta_T$, paid at date $T$, 
 $T \leq T_H$. \\[-8mm]
\paragraph{Risk neutral pricing of hedgeable payoffs}
\rmi In the study of optimal state price density in \ref{optimalproce}, we have seen the "universal" rule played by the so-called minimal density process $Y_t^0(y)=yY^0_t$. In particular, since $\sigR_t$ is a vector space, money market strategies $(\kappa \equiv 0)$ are admissible, and $L_t^{0}=e^{\int_0^tr_s ds} Y^{0}_t $ is a local martingale. 
We now assume that $\b(L_t^{0}\b)$ is a uniformly integrable martingale on $[0,T_H]$, which allows us to introduce a minimal, also called risk-neutral, martingale measure,
\bc 
$d\Q=L^0_{T_H}.d\P$ on the $\>\sigma$-field $\>\cal F_{T_H}$. 
\ec
More generally, for any admissible $\nu \in \sigoR_t$, $L_t^\nu(y)=e^{\int_0^tr_s ds} Y_t^{\nu}(y):=L_t^0\,L_t^{\bot,\nu}(y)$ is also a local martingale, product of the martingale $L^0$  and the orthogonal local martingale $L_.^{\bot,\nu}(y).$ So,  $L_.^{\bot,\nu}(y)$ is a $\Q$-local martingale, with $\Q$-expectation smaller than $y$.
When $\E\b(L_{T_H}^{\nu}(y)\b)=y $, then  $L_{T_H}^{\nu}(y)/y $  is the density of a probability measure $\Q^\nu$ with respect to $\P$, and $L_{T_H}^{\bot,\nu}(y)/y$ is the density of $\Q^\nu$ with respect to $\Q.$\\
\rmii In complete market, or more generally in incomplete market without arbitrage opportunity, the market price $p^m(\zeta_T)$ ($p^m$ when it is not ambiguous) of any bounded contingent claim $\zeta_T$ paid at date $T$
that is replicable by an admissible self-financing portfolio is a bounded process $p^m_t$ such that  $Y^{\nu}_t p^m_t$ is a local martingale for any admissible state price density, in particular for $yY^0_t$ and $Y^*_t(y)$. 
Since $L^0$ is a true martingale, and $\zeta_T$ is bounded, $\b(Y^{0}_t p^m_t\b)$ is also a true martingale given by the conditional expectation of its terminal value; this  observation yields to the classical pricing formula (in a complete market) as the minimal risk neutral conditional expectation of the discounted claim between $t$ and $T$, 
 \begin{equation}\label{zcbond}
 p^m_t=\mathbb{E}\big[\frac{Y^{0}_{T}(y)}{Y^{0}_{t}(y)}\zeta_T\b|\mathcal{F}_t\big]=\mathbb{E}^{\Q}\b[e^{-\int_t^Tr_s ds}\zeta_T\b|\mathcal{F}_t\big].
 \end{equation}
 %
 Moreover since for any admissible process $\nu \in \sigoR$,
 $\b(L^{\bot,\nu}_t(y) e^{-\int_t^Tr_s ds} p^m_t\b)$ is also a positive $\Q$-local martingale, and then a $\Q$-supermartingale, the following inequality (with equality if $L^{\bot,\nu}$ is a $\Q$-martingale) holds true
 \begin{equation}\label{ZCincomplete}
 \mathbb{E}\big[\frac{Y^{\nu}_{T}(y)}{Y^{\nu}_{t}(y)} \zeta_T\b|\mathcal{F}_t\big]=\mathbb{E}^{\Q}\b[\frac{L^{\bot,\nu}_{T}(y)}{L^{\bot,\nu}_{t}(y)} e^{-\int_t^Tr_s ds}\zeta_T\b|\mathcal{F}_t\big]\leq p^m_t, \> \>\P-a.s.
 \end{equation}
 The same pricing formula may be used for pricing bounded hedgeable pay-off. {\em The minimal risk-neutral pricing rule gives the maximal seller price for bounded hedgeable contingent claim.} \\
\rmiii In the forward point of view, we know, from the regularity assumption, that the optimal state price $Y^*$ admits the following decomposition $Y^*_t(y)=yY^0_t L_t^{\bot,*}(y)$,  where $L_t^{\bot,*}(y)$ is a $\Q$-uniformly integrable martingale. Then, all the previous inequalities are equalities and in particular, for hedgeable payoff $\zeta_T$,
\bc
$\mathbb{E}\big[\frac{Y^{0}_{T}(y)}{Y^{0}_{t}(y)}\zeta_T\b|\mathcal{F}_t\big]=\mathbb{E}^{\Q}\b[e^{-\int_t^Tr_s ds}\zeta_T\b|\mathcal{F}_t\big]=p^m_t.
 \> \>\P-a.s.$
\ec
 The same property holds true in the backward case, on the assumption that $L_t^{\bot,*,H}(y)$ is a $\Q$-uniformly integrable martingale.\\[-8mm]
\paragraph{Marginal utility indifference pricing} When the payoff $\zeta_{T_H}$ is not replicable 
 in incomplete market,
there are different ways  to evaluate the risk coming from the unhedgeable part,   yielding to a bid-ask spread.  A way is the pricing by indifference.\\
 When the investors  are  aware of their sensitivity to 
the unhedgeable risk, they can try to transact for only a little amount in the risky
contract. In this case, the buyer wants to transact at the buyer's "fair price" (also called {\emph{Davis price} or {\emph{marginal utility price} \cite{Davis}), which corresponds to the zero marginal rate of substitution ${p}^{u}_t$.
In other words, considering the two following backward maximization problems (with and without the claim $\zeta_{T_H}$):
\begin{eqnarray}
\label{claim problem}
\mathcal{U}^{\zeta}(t,x,q)
&
:= \sup_{(\kappa,c) \in\GX^{c} (t,x)} \E[U(T_H,X^\kappa_{T_H}+q \>\zeta_{T_H}) + \int_t^{T_H} V(s,c_s)ds |\F_t],
\\ 
\label{basic primal problem}
\mathcal{U}(t,x)
&
:= \sup_{(\kappa,c) \in\GX^c(t,x)} \E[U(T_H,X^\kappa_{T_H})   +  \int_t^{T_H} V(s,c_s)ds|\F_t],\quad t\leq T_H
\end{eqnarray}
 the marginal utility {\em indifference price} { is the price at which the investor is indifferent from investing or not in the contingent claim}:  it is the  $\F_t$-adapted process $(p^{u}_t(x))_{t\in[0,T_H]}$ determined at any time $t$ by the non linear relationship
\begin{align}\label{abstract indifference price problem}
\partial_q\mathcal{U}^{\zeta}(t,x,q)|_{q=0}=\partial_q\mathcal{U}(t,x+q p^{u}_t(x)|_{q=0},\quad \text{for all }t\in[0,T_H].
\end{align}
The marginal utility price is a linear pricing rule.  Using this pricing rule
 means that there exists a consensus { on} this price for a small amount, but investors are not sure to have liquidity at this price. 
{ In  the backward case, the marginal utility price, such as the optimal state price density,  depends on the horizon $T_H$.  
 In particular,  if  the contingent claim  $\zeta_T$  is delivered at time $T < T_H$, then  $\zeta_T$ can be invested  between time $T$ and $T_H$ into any admissible portfolio $X_.(T,\zeta_{T} )$ (martingale under $Y^*$) and computing the marginal utility price  with terminal payoff $\zeta_{T_H}=X_{T_H}(T,\zeta_{T} )$ leads to the same  price,  as explained in the following proposition. 
 When needed, we use the notation $Y^{*,H}_.$ and $p^{u,H}_0(x, \zeta_{T})$ to emphasize the time horizon dependency of the backward optimization problem}.

\begin{Proposition}
Let $(U,V)$ be  the progressive utilities associated with a consumption consistent  optimization problem with optimal state price density process $Y^{*}_.(y)$.\\
\rmi For any non negative contingent claim $\zeta_{T_H}$ delivered at time $T_H$, the marginal utility  price (also called Davis-price) is given via the dual parametrization $y$
\begin{equation}\label{Davis price}
p^u_t(x, \zeta_{T_H}) =\E \b[\zeta_{T_H}Y^{*}_{T_H}(t,y)/y|\F_t\b],  \quad y=\mathcal{U}_x(t,x).
\end{equation}
\rmii In the forward case, the pricing rule may be defined for any maturity $T\leq T_H$ in the same way. Then, the pricing rule is time-consistent,$$\> p^u_t(x, \zeta_{T_H})= p^u_t(x, \zeta_T(t,x))    \quad \text{where}      \>  \>       \zeta_T(t,x)=p^u_T(X_{T}^*(t,x), \zeta_{T_H}).$$
\rmiii In  the backward case, the marginal utility {\em indifference price} is only defined for cash-flow paid at horizon $T_H$.  When the claim  $\zeta_{T}$ is delivered at time $T$ before $T_H$,  $\zeta_{T}$ may be considered as the (indifference) price at $T$ of any admissible portfolio starting from $\zeta_{T}$ at $T$ with terminal wealth $X_{T_H}(T, \zeta_T)=\zeta_{T_H}$. 
 The marginal utility price of $\zeta_{T}$, denoted $p^{u,H}_0(x, \zeta_{T})$ to recall its dependency in $T_H$ is then,
 \begin{equation}\label{Davis price at T}
p^{u,H}_t(x, \zeta_{T})=p^{u,H}_t(x, \zeta_{T_H})=\E \b[ \zeta_{T_H}Y^{*,H}_{T_H}(t,y)/y|\F_t\b]
=\E \b[\zeta_{T}Y^{*,H}_{T}(t,y)/y|\F_t\b].
  \end{equation}
\rmiv The backward marginal utility  pricing is a well-posed pricing rule, since it  is not depending on the choice of the admissible extension on $\zeta_{T}$. Moreover, the  rule is also time-consistent.
\end{Proposition}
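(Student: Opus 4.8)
The plan is to prove \rmi by the standard envelope computation behind the Davis (marginal utility) price, and then to obtain \rmii--\rmiv as consequences of the flow (cocycle) property of the optimal state price density together with the (super)martingale characterization of admissible portfolios recorded in Definition \ref{SPDP} and the tower property.

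First I would treat \rmi. The right-hand side of the defining relation \eqref{abstract indifference price problem} is immediate: $\partial_q\mathcal{U}(t,x+q\,p^u_t(x))|_{q=0}=\mathcal{U}_x(t,x)\,p^u_t(x)=y\,p^u_t(x)$, with $y=\mathcal{U}_x(t,x)$. For the left-hand side I would differentiate $\mathcal{U}^\zeta(t,x,q)$ in $q$ at $q=0$ by an envelope argument: the strategy optimal for the unperturbed ($q=0$) problem is admissible for the perturbed one and the perturbation enters only through the terminal term $U(T_H,X^\kappa_{T_H}+q\zeta_{T_H})$, so differentiating under the expectation gives $\partial_q\mathcal{U}^\zeta(t,x,q)|_{q=0}=\E[U_x(T_H,X^*_{T_H})\,\zeta_{T_H}\mid\F_t]$, where $X^*_{T_H}$ is the optimal terminal wealth of the base problem. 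The first-order condition $U_x(T_H,X^*_{T_H})=Y^*_{T_H}(t,y)$ (see \eqref{first order condition} and \eqref{maxconso}) then turns this into $\E[\zeta_{T_H}\,Y^*_{T_H}(t,y)\mid\F_t]$. Equating the two sides and dividing by $y$ yields \eqref{Davis price}. The main obstacle is exactly this differentiation step: one must justify interchanging $\partial_q$ with both the supremum (the vanishing of the derivative in the direction of the optimizer) and the conditional expectation, which requires differentiability of the value function in $x$ and a domination bound on $U_x(T_H,X^*_{T_H}+\theta q\zeta_{T_H})\,\zeta_{T_H}$ for $\theta\in(0,1)$ near $q=0$. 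I would secure this from the regularity standing behind Theorem \ref{ppU_xconso} and the cited results of Kramkov--Schachermayer.

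For the time-consistency in \rmii, I would use the flow property $Y^*_{T_H}(t,y)=Y^*_{T_H}\bigl(T,Y^*_T(t,y)\bigr)$, valid since $Y^*_\cdot(t,y)$ is the strong solution of SDE$(\tmu,\tsigma)$ started at $y$ at time $t$. Along the optimal wealth path \eqref{maxconso} gives $\mathcal{U}_x(T,X^*_T(t,x))=Y^*_T(t,y)=:y_T$, so applying \rmi at the intermediate maturity $T$ to price $\zeta_{T_H}$ from the wealth $X^*_T(t,x)$ produces $\zeta_T(t,x)=p^u_T(X^*_T(t,x),\zeta_{T_H})=\E[\zeta_{T_H}\,Y^*_{T_H}(t,y)\mid\F_T]/y_T$, using the flow identity to replace $Y^*_{T_H}(T,y_T)$ by $Y^*_{T_H}(t,y)$. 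Pricing this $\F_T$-claim back to $t$ by \rmi gives $p^u_t(x,\zeta_T(t,x))=\E[\zeta_T(t,x)\,Y^*_T(t,y)/y\mid\F_t]$; since $y_T=Y^*_T(t,y)$ these factors collapse and the tower property yields $p^u_t(x,\zeta_T(t,x))=\E[\zeta_{T_H}\,Y^*_{T_H}(t,y)/y\mid\F_t]=p^u_t(x,\zeta_{T_H})$, the asserted consistency.

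Finally, for \rmiii and \rmiv I would exploit that any admissible investment extension $X_\cdot(T,\zeta_T)$ of $\zeta_T$ (with $X_T=\zeta_T$, $X_{T_H}=\zeta_{T_H}$) makes $Y^{*,H}_\cdot X_\cdot(T,\zeta_T)$ a local martingale by Definition \ref{SPDP} (take $c\equiv0$ on $[T,T_H]$), hence a true martingale under the standing integrability, so that $\E[\zeta_{T_H}\,Y^{*,H}_{T_H}(t,y)\mid\F_T]=\zeta_T\,Y^{*,H}_T(t,y)$. Substituting into the backward formula of \rmi and using the tower property gives the chain \eqref{Davis price at T}; the identity $p^{u,H}_t(x,\zeta_T)=p^{u,H}_t(x,\zeta_{T_H})$ is precisely the definition of the price of the $T$-claim through its extension. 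Well-posedness in \rmiv is then immediate: the final expression $\E[\zeta_T\,Y^{*,H}_T(t,y)/y\mid\F_t]$ depends only on $\zeta_T$ and not on the chosen extension, because every admissible extension shares the same martingale relation above; and the same computation, conditioned at any intermediate date in $[T,T_H]$, delivers time-consistency.
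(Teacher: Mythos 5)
Your proposal is correct and follows essentially the same route as the paper: the envelope (Davis) differentiation of the perturbed value function combined with the first-order identity $U_x(T_H,X^{*}_{T_H}(x))=Y^{*}_{T_H}(y)$ for \rmi, and the martingale property of $Y^{*,H}_\cdot X_\cdot(T,\zeta_T)$ plus the tower property for \rmiii and \rmiv. The only difference is cosmetic: the paper keeps the terms $\partial_q X^{*,q}_{T_H}$ and $\partial_q c^{*,q}_s$ explicitly and cites Davis's regularity result that they vanish at $q=0$, whereas you fix the base optimizer from the start and you spell out the forward time-consistency \rmii via the flow property of $Y^{*}$, which the paper leaves implicit.
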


\begin{proof}
 Following Davis \cite{Davis}, we compute the marginal indifference price of any contingent claim as follows. Denote by $(X^{*,q}(z),c^{*,q}(z))$ the optimal strategy of the optimization program (\ref{claim problem}) ($q$ quantity
of claim $\zeta_{T_H}$), i.e. 
\bc
$\E\b[U(T_H,X^{*,q}_{T_H}(x)+q \zeta_{T_H}) + \int_t^{T_H} V(s,c^{*,q}_s(x))ds\b]=\mathcal{U}^{\zeta}(t,x,q)$. 
\ec
Formally, we can derive with respect to $q$ under the expectation, and take the value of the derivative at $q=0$ (known as the  envelope theorem  in economics)
\begin{eqnarray}
\partial_q\,\mathcal{U}^{\zeta}(0,x,q)|_{q=0}&=\E\B[
\b(U_x(T_H,X^{*,q}_{T_H}(x)) (\partial_q\, X^{*,q}_{T_H}(x)+ \zeta_{T_H})|_{q=0}\b)\nonumber\\ 
&+ \int_0^{T_H} \b(V_c(s,c^{*,q}_s(x))\partial_q\, c^{*,q}_{s}(x)|_{q=0}\b)ds\B].
\end{eqnarray}
Under regularity assumption, it is shown in \cite{Davis} that the optimal  processes ($X_{T_H}^{*,q}, c^{*,q}_s(x)$) are continuously
differentiable  with respect to the quantity $q$ satisfying $\lim\limits_{q\rightarrow 0}X_{T_H}^{*,q}=X_{T_H}^{*}, $
and $\lim\limits_{q\rightarrow 0}\partial_qX_{T_H}^{*,q}=0,  a.s.$;    $\lim\limits_{q\rightarrow 0}c^{*,q}_s(x)=c^{*}_s(x)$
and $\lim\limits_{q\rightarrow 0}\partial_qc^{*,q}_s(x)=0,  a.s..$ \\
This implies that the marginal indifference price satisfies\\[1mm]
 \centerline{$ \E\b[\zeta_{T_H} U_x(T_H,X_{T_H}^{*}(x))\b] 
= p^{u}_0(x)\> \mathcal{U}_x(0,x).$}\\[1mm]
In the forward and backward case, the marginal utility of the optimal wealth at the horizon $T_H$,  $U_x(T_H,X_{T_H}^*(x))$, is the optimal state price density $Y^{*}_{T_H}(y)$ with initial condition $y=\mathcal{U}_x(0,x)$.

The main difference is that in the forward case, the process $Y^{*}$ does not depend of $T_H$ in contrast to the backward setting.  In the forward case,
\begin{equation}\label{Davis price}
p^u_0(x)=  \frac{1}{  \mathcal{U}_x(0,x)}\E\b[U_x(T_H,X_{T_H}^*(x))\> \zeta_{T_H}\b]=\E \b[\zeta_{T_H}Y^{*}_{T_H}(y)/y\b].
\end{equation}

In the backward case, if the maturity of the claim is $T\leq T_H$, then investing the amount $\zeta_T$ in any admissible portfolio $X_.(T,\zeta_T)$ such that $(X_t(T,\zeta_T)\>Y_t^{*,H}(y))_{t\ge T}$ is a martingale and  taking $\zeta_{T_H}=X_{T_H}(T,\zeta_T) $, it follows that in any case
\\[-6mm]
\begin{eqnarray}\label{capitalisation}
 p^{u,H}_0(x,\zeta_T) &=& p^{u,H}_0(x,\zeta_{T_H}) =  \E\b[ \E( X_{T_H}(T,\zeta_T)    Y^{*,H}_{T_H}(y)/y  |\mathcal{F}_T) \b] \nonumber \\
&= & \E\b[\zeta_{T}Y^{*,H}_{T}(y)/y\b], \quad \quad  y=\mathcal{U}_x(0,x)
\end{eqnarray}
which proves that the  backward marginal utility  pricing is a well-posed pricing rule.
\end{proof}
The same argument may be used at any date $t$ to define the marginal utility price, using the conditional distribution with respect to the filtration $\F_t$,
\begin{eqnarray*}
p^{u}_t(z)&=  \frac{1}{  \mathcal{U}_x(t,z)}\E\b[U_x(T_H,X_{T_H}^*(z))\> \zeta_{T_H}\b|\F_t\b], &\quad z=X_t^*(x)\\
&=\E \b[\zeta_{T_H}Y^{*}_{T_H}(\phi_t)/\phi_t\b|\F_t\b], &\quad \phi=\mathcal{U}_x(t,X_t^*(x))=Y^{*}_t(\mathcal{U}_x(0,x)).
\end{eqnarray*}

\section{Application to  yield curves dynamics}\label{sec:taux}
 For  financing of ecological projects reducing global warming, for
longevity issues or any other investment with a long term impact, it is necessary to model accurately long run interest rates. The answer cannot find in financial market, since for longer maturities (30 years and more), the bond market becomes highly illiquid and standard financial interest rates models cannot be easily extended. 
\subsection{General macroeconomics consideration}
In general, these issues are  addressed at macroeconomic level, where  long-run  interest rates has not necessary the same meaning than in financial market.They are  called {\em socially efficient or economic} interest rates, 
because they would be only affected by structural characteristics of the economy, and to be low-sensitive to monetary policy. Nevertheless, correct estimates of these rates are therefore useful for long term decision making, and understanding their determinants is important. \\[-10mm]
\paragraph{Ramsey rule and equilibrium interest rates} The macroeconomics literature typically relates the {\em economic equilibrium} rate to the time preference rate and to the average rate of productivity growth. 
A typical example is the Ramsey rule proposed in the seminal paper of Ramsey \cite{Ramsey}  in 1928  where  economic interest rates  were linked with the marginal utility of the aggregate consumption at the economic equilibrium.
More precisely, the economy is represented by the strategy of a risk-averse representative agent, whose utility function on 
consumption rate at date $t$ is the function $v(t,c)$. 
 Using an equilibrium point of view, the Ramsey rule  at time $0$ connects the equilibrium rate for maturity $T$ with the
marginal utility $v_c(t,c)$ of the random optimal consumption rate  $(c^*_t)$ by
\begin{equation}\label{RamseyRule}
R^e_0(T)=-\frac{1}{T}\ln\frac{\E[v_c(T,c^*_T)]}{v_c(0, c_0)}.
\end{equation}
An usual setting is to assume separable in time  utility function with exponential decay at rate $\beta>0$ and constant risk aversion $\alpha, (0<\alpha < 1)$, that is $v(t,c)=Ke^{-\beta t}\frac{c^{1-\alpha}}{1-\alpha}$.  $\beta$ is the pure time preference parameter, i.e. $\beta$ quantifies the agent preference of immediate goods versus future ones. The optimal consumption rate is then exogeneous modelled as a geometric Brownian motion, $c^*_t=c_0 \exp((g-\frac{1}{2}\sigma^2) t + \sigma W_t)$ with $g$ the growth rate of the economy. The Ramsey rule induces a flat curve
\begin{equation}\label{RamseyRuleflat}
R^e_0(T)=\beta+\alpha g -\frac{1}{2} \alpha (\alpha+1)   \sigma^2. 
\end{equation}
The Ramsey rule is still the reference equation even if the framework in consideration is more realistic, as its is  was
discussed by numerous economists, such as Gollier
\cite{Gollier6,Gollier9,Gollier13,Gollier14,Gollier15,Gollier16,GollierEcological} and Weitzman \cite{Weitzman,
Weitzman_review}.  
The equilibrium yield curve at time  $0$ is then computed through the Ramsey rule, using the maximum principle and  leaving undiscussed the time-consistency of such an approach.

   Dynamic utility functions seem to be well adapted for modeling and studying long term 
yield curves and their dynamics, because it allows
to get rid of the dependency on the maturity $T_H$ of the classical backward optimization problem and thus gives time consistency for the optimal choices.  
Besides, as dynamic utility functions take into account that the preferences and risk aversion of investor may
change with time, they  are also more accurate.
Indeed,  in the presence of generalized long term uncertainty, the decision
scheme must evolve: the economists agree on the necessity of a
sequential decision scheme that allows to revise the first decisions
according to the evolution of the knowledge and to direct experiences, see Lecocq and Hourcade \cite{Hourcade}. 
Besides, a  sequential decision allows to cope with situations in which it is important to find the core of an  agreement
between partners having different views or anticipations,  in order to give time for solving their controversy. \\[-8mm]

\subsection{The financial framework}
 Cox-Ingersoll-Ross \cite{CIR} adopt an equilibrium approach to endogenously determine the
term structure of interest rates, in the presence of a financial market. In their model, there exists a single consumption good
and the production process follows a diffusion whose coefficients depends on an exogeneous
stochastic factor which in some way influences the economy.
The risk-free rate is determined endogenously such that the investor is not better off by
trading in the money market, i.e. she is indifferent between an investment in the production
opportunity and the risk-free instrument.\\
The financial point of view presented now is very closed to the previous one, but the agent may  invest in a financial market in addition to the money market.
{We consider an arbitrage approach with exogenously given interest rate,  instead of an equilibrium approach that determines them endogenously (see the Lecture notes of  Bj\"ork \cite{bjork} for a comparison between these two approaches). The financial market is 
 an incomplete It\^o  financial  market: notations are  the one described in Section \ref{descmarket},  with a $n$ standard Brownian
motion $W$, a (exogeneous) financial short term interest rate  $(r_t)$ and a $n$-dimensional risk premium $(\eta^\sR_t)$.} \\[-8mm]

\paragraph{The  (backward) classical optimization problem}
  In the  classical optimization  problem  (\ref{pbopticlassic}) with given horizon $T_H$, studied in Subsection \ref{transition:classic}, both utility functions for terminal wealth and consumption rate are deterministic, then designed by small letter $u$ and $v$; their Fenchel conjugates are denoted by $(\tilde{u}, \tilde{v})$. \\ Since we are concerned essentially  by the Ramsey rule and the yield curve dynamics, we focus on the equivalent dual formulation 
 (\ref{dualopti}).  
The 
 optimal consumption rate $c^{*,H}(y)$ depends on the time horizon $T_H$ through the optimal state price density process $Y^{{*,H}}$ 
 \begin{equation*}(\ref{maxconso})
 \left\{
 \begin{array}{cll}
 c^{*,H}_t(c_0)&= -   \tilde{v}_y(t, Y^{{*,H}}_{t}(y))  \quad   \mbox{ i.e. } \quad   &v_c(t, c^{*,H}_t(y)) =  Y^{*,H}_t(y), \quad  0\leq t\leq {T_H}\\
c_0&=-   \tilde{v}_y(0, y)\quad \mbox{ i.e. } & v_c(0, c_0) = y
 \end{array}
 \right .
 \end{equation*}
As the Lagrange multiplier $y$ does  not have  an obvious financial interpretation, we adopt as in the economic literature the parameterization by the initial consumption $c_0$, based on  the one to one correspondance ${v_c(0, c_0)} =  y. $ \\ 
Equation (\ref{maxconso}) may be interpreted as a pathwise Ramsey rule, between  the marginal utility of the  optimal  consumption  and the optimal state price density process:

\begin{equation}\label{Ramseypath}
 \frac{v_c(t, c^{*,H}_t(c_0))}{v_c(0, c_0)} =  \frac{Y^{*,H}_t(y)}{y},   \quad    0\leq t\leq {T_H}       \quad   \mbox{ with  } \quad {v_c(0, c_0)} =  y.
\end{equation}


\paragraph{The (forward point of view) dynamic problem}
 We adopt  notations of Section \ref{sec:conso}, using capital letter to refer to progressive utilities.
The pathwise relation  (\ref{maxconso}) still holds for progressive  utility functions, using the  characterization of the optimal consumption (see Theorem \ref{ppU_xconso}), where the parameterization is done through the initial wealth $x$, or equivalently $c_0$ or $y$ since
$c_0 =  -\tv_y(u_x(x))=  -\tv_y(y) $,
\begin{equation}\label{Ramseypathdynamic}
V_c(t, c^{*}_t(c_0)) = Y^{*}_t(y),   \quad    t \geq 0     \quad   \mbox{ with  } \quad v_c(0, c_0) =  y.
 \end{equation}
 The forward point of view emphazises the key rule played by the monotony of $Y$ with respect to the initial condition $y$, under 
regularity conditions of the progressive utilies (cf Theorem \ref{ppU_xconso}). 
Then as function of $y$, $c_0$ is decreasing, and $c^{*}_t(c_0)$ is an increasing function of $c_0$.
This question of monotony is frequently avoided, maybe because with power utility functions (the example often used in the litterature)
$Y^{*}_t(y)$ is linear in $y$  as $\nu^*$ does not dependent on $y$.\\
The optimal state price density process $Y^{*}$ summarizes all the difference between the classical backward and dynamic forward approachs. 
In particular, progressive utilities allows
to get rid of the dependency on the maturity $T_H$ and thus gives time consistency 
for the optimal choices.\\
{\bf Remark}:    This time unconsistency is also present in the  Ramsey rule (\ref{RamseyRule}) in the economic litterature. Indeed, the optimization problem they considered is usually formulated 
through a time separable utility  
  $v(t,x)=e^{-\beta t}v(x)$  with a infinite horizon, which is equivalent\footnote{ If
$\tau_H$ is  distributed  as an independent exponential law  with parameter $\beta$,\\
 $ \hspace*{2cm}  \mathbb{E}(\int_0^{+\infty } e^{-\beta t} v(c_t)dt)= \mathbb{E}(\int_0^{\tau_H }  v(c_t)dt)   $.  } (in expectation) 
 to consider the utility $v$ and a random horizon $\tau_H$ exponentially distributed   with parameter $\beta$.
In the Ramsey rule (\ref{RamseyRule}), the optimal consumption process $c^*$ intrinsically depends on $\beta$, 
which corresponds to the dependency on the horizon $T_H$ of our classical backward formulation.\\[-8mm]
\subsection{Equilibrium and financial yields curve dynamics}
As previously observed, forward and backward optimization problems lead to the same pathwise  relation
(\ref{Ramseypathdynamic}) 
between optimal consumption and optimal state price density. The main difference is in the dependence on the horizon of optimal quantities in the backward case. {So, in general the notation of the forward case are used, but
 with the additional symbol $H$ ($Y^{*,H}, c^{*,H}, X^{*,H}$) to address the dependency
 on $T_H$ in  the classical backward problem. }\\
%
%
%
\rmi Thanks to the pathwise relation (\ref{Ramseypathdynamic}), the Ramsey rule yields to a description of the equilibrium interest rate as a	 function of the optimal  state price density process $Y^*$, $R^e_0(T)(y)=-\frac{1}{T}\ln \mathbb{E}[Y^{*}_T(y)/y]$, that allows to give a financial interpretation in terms of zero coupon bonds.
More dynamically in time, 
\begin{equation}\label{Ramseyfin}
R^e_t(T)(y):=-\frac{1}{T-t} \ln \mathbb{E}\left[\frac{V_c(T,c^*_T(c^*_0))}{V_c(t,c^*_t(c_0))}\big| \mathcal{F}_t\right]=-\frac{1}{T-t}\ln
\mathbb{E}\left[\frac{Y^{*}_T(y)}{Y^{*}_t(y)}\b| \mathcal{F}_t\right]\quad \forall t < T.
\end{equation}
Thanks to the flow property, $\{Y^{*}_T(y)=Y^{*}_T(Y^{*}_t(y)), c^*_T(c^*_0)=c^*_T(c^*_t(c^*_0)), \>t<T\}$,  the equilibrium yield curve  starting at time $t$ with initial condition $c^*_t(c_0)=-\tV_y(t,Y^{*}_t(y)) $ is given by $\b(R^e_t(T)(Y^{*}_t(y)),t<T\b)$. \\
%
%
\rmii The question is reduced to give a financial interpretation in terms of price of zero-coupon bonds, of the quantities $\mathbb{E}\Big[\frac{Y^{*}_T(y)}{Y^{*}_t(y)}\b| \mathcal{F}_t\Big]\quad \forall t<T$.
Let  $\b(B^m(t,T),\>t\leq T\b)$, ($m$ for market), be the price  at time $t$ of a  zero-coupon bond paying one unit of cash at maturity $T$.
In finance, the market yield curve is defined through the price of zero-coupon bond by $B^m(t,T)=\exp(-R^{m}_t(T)(T-t)).$
 We use the  results of Subsection \ref{subsec:pricing} concerning the pricing of contingent claim: the case of zero-coupon bond $B^m(t,T)$ corresponds to $\zeta_T=1$.\\[-8mm]
 \paragraph{Marginal utility yield curve}
\rmi In a complete market, or if the  zero-coupon bonds are hedgeable,   $B^m(t,T)$ is computed by the the minimal risk neutral pricing rule
 $B^m(t,T)=\mathbb{E}\big[Y^0_{t,T}\b|\mathcal{F}_t\big]=\mathbb{E}^{\Q}\b[e^{-\int_t^Tr_s ds}\b|\mathcal{F}_t\big].$\\
{\em Then, for replicable bond, equilibrium interest rate and market interest rate coincide}.\\
 \rmii For non hedgeable zero-coupon bond, we can apply the marginal indifference pricing rule (with consumption). So we denote by $B^{u}(t,T)$ ($u$ for utility) the marginal utility price at time $t$ of  a zero-coupon  bond paying one cash unit at maturity $T$, that is $\quad B^{u}(t,T)=B^{u}_t(T, y)=  \mathbb{E}\B[ \frac{Y^{*}_{T}(y)}{Y^*_t(y)} \b| \mathcal{F}_t  \B]$. Based on the link between optimal state price density and optimal consumption, we see that
\begin{equation}\label{linkYyieldcurve}
B^{u}_t(T,y):=B^{u}(t,T)(y)=   \mathbb{E}\B[ \frac{Y^{*}_{T}(y)}{Y^*_t(y)} \b| \mathcal{F}_t  \B]=\mathbb{E} \B[\frac{V_c(T,c^{*}_T(c_0))}{V_c(t,c^*_t(c_0))} \b| \mathcal{F}_t\B].
\end{equation}
{\em According to the Ramsey rule (\ref{Ramseyfin}), equilibrium interest rates and marginal utility interest rates are the same. Nevertheless, this last curve is robust only for small trades .}\\
The martingale property of $Y^*_t(y) B^{u}_t(T,y)$  yields to the following dynamics  for the zero coupon bond maturing at time $T$ with  volatility vector $\Gamma_t(T,y)$ 
\begin{equation}
\frac{dB^{u}_t(T,y)}{B^{u}_t(T,y)}= r_t dt+\Gamma_t(T,y).(dW_t+(\eta^\sigR_t-\nu^*_t(y))dt).
\end{equation}
Using the classical notation for exponential martingale, ${\cal E}_t(\theta)=\exp\b(\int_0^t \theta_s.dW_s-\demi \int_0^t\|\theta_s\|^2.ds\b)$,  the martingale $Y^*_t(y) B^{u}_t(T,y)$  can written as an exponential martingale with volatility $\b(\nu_.^*(y)-\eta^\sigR_.+\Gamma_.(T,y)\b)$. In particular, 
using that $B^{u}_T(T,y)=1$,
\begin{eqnarray*}
Y^*_T(y)= B^{u}_0(T, y){\cal E}_T\b(\nu_.^*(y)-\eta^\sigR_.+\Gamma_.({\bf T},y)\b)= y\> e^{-\int_0^T r_s ds} \cal{E}_T\b(\nu_.^*(y)-\eta^\sigR_.\b).
\end{eqnarray*}
Taking the logarithm gives
\begin{equation}\label{eq:intrsds}
\int_0^Tr_s ds=TR^u_0(T)-\int_0^T\Gamma_t(T,y).(dW_t+(\eta_t-\nu^*_t(y))dt)+\demi \|\Gamma_t(T,y)\|^2 dt .
\end{equation}
When the family $\Gamma_t(T,y)$ is assumed 
to be differentiable with respect to the maturity $T$, we recover  the classical Heath Jarrow Morton framework \cite{HJM} with the following dynamics representation of the short rate
\begin{equation}
r_t=f_0(t, y)-\int_0^t\partial_T\Gamma_s(t,y).(dW_s+(\eta_s-\nu^*_s(y))ds)+\demi \partial_t\|\Gamma_s(t,y))\|^2 ds
\end{equation}
with $f_0(.,y)$ being the forward short rate.\\

%
\noindent $\bullet$  {\em  Yield curve dynamics and infinite maturity} \\
 The computation of the marginal utility  price of zero coupon bond is then straightforward using (\ref{linkYyieldcurve})
leading to the yield curve dynamics $(R^{u}_t(T,y)= - \frac{1}{T-t} \ln B^{u}_t(T,y))$
\begin{eqnarray*}
R^{u}_t(T,y)& =&\frac{T}{T-t}
R^{u}_0(T,y)  -  \frac{1}{T-t} \int_0^t r_s ds - \int_0^t  \frac{\Gamma_s(T,y)}{T-t} dW_s  \\
 &&+ \int_0^t \frac{||\Gamma_s(T,y) ||^2}{2(T-t)} ds +  \int_0^t  <\frac{\Gamma_s(T,y)}{T-t}, \nu_s^{*}-\eta^\sigR_s   >    ds.
 \end{eqnarray*}
Along the same lines as in Dybvig \cite{Dybvig} and  in El Karoui and alii. \cite{ElKarouiFrachot}, we study the dynamics behavior of the yield curve for infinite maturity, when the maturity goes to infinity
\begin{equation}\label{ltinfini}
l_t(y):= \lim_{T \rightarrow + \infty}R^{u}_t(T,y). 
\end{equation}
If $\lim_{T \rightarrow + \infty}  \frac{\Gamma_t(T,y)}{T-t}$ is not equal to zero $dt \otimes d\mathbb{P}$ a.s. then 
$\lim_{T \rightarrow + \infty}  \frac{||\Gamma_t(T,y)||^2}{T-t}= +\infty$ a.s and $l_t(y)$ is infinite.
Otherwise, 
$
l_t=l_0+ \int_0^t \lim_{T \rightarrow + \infty}  \left( \frac{||\Gamma_s(T,y)||^2}{2(T-s)}   \right) ds
$
thus $l_t$ is constant if $\lim_{T \rightarrow + \infty}  \frac{||\Gamma_t(T,y)||^2}{T-t}=0$ and $l_t$ is  a non-decreasing process if $\lim_{T \rightarrow + \infty}
 \frac{||\Gamma_t(T,y)||^2}{T-t}>0$.


\begin{Remark}{\rm
When hedging strategies cannot be implemented, the nominal amount of the transactions becomes an important risk factor
and marginal utility prices are not accurate any more, especially when the market is highly
illiquid. To face this problem, the utility based indifference pricing methodology  seems to be more appropriate. \\
The \emph{utility indifference price} is the cash amount $\widehat p$ for which the investor is indifferent between selling or buying a certain quantity $q$ of a positive claim $\zeta_{T_H}$ (paid at time $T_H$) at the price $\widehat p$  in an optimally managed portfolio with initial wealth $x+p$ or investing optimally its initial wealth in the market without the claim $\zeta_{T_H}.$  If $q>0$ (resp. $q<0$)  $-p=:p^b$ is a positive buying (resp. $p=:p^s$ is a  selling) indifference price.
In other words, considering the two  backward maximization problems recalled in \eqref{claim problem}:
\begin{align}
\mathcal{U}^{\zeta}(t,x+\widehat p_t,q)=\mathcal{U}(t,x),\quad \text{for all }t\in[0,T_H].
\end{align}
The pricing rule is now non linear, providing a bid-ask spread.  Since it is not possible to develop this idea here, we refer the interested reader to the book "Indifference Pricing" edited by Carmona \cite{Carmona}.}
\end{Remark}

\subsection{Power utilities with consumption and yield curve properties}
To be able to give more precise properties of the marginal utility yield curve, we study  progressive and backward power utilities as the classical most important example
for economics, due to the simplification of some calculation.\\[-8mm]
\paragraph*{Consumption consistent progressive power utility} 
  Let us consider a consumption consistent progressive power utility (with risk aversion coefficient $\alpha$), associated with the pair of power progressive utilities $\b(U^{(\alpha)}(t,x)=\widehat Z^{(\alpha)}_t \frac{x^{1-\alpha}}{1-\alpha}\>, V^{(\alpha)}(t,x)=(\hat \psi_t)^\alpha U^{(\alpha)}(t,x)\b)$. From Corollary \ref{consforwardpower}, the  optimal processes are linear with respect of their initial condition, i.e. $\widehat X^*_t(x)=x \widehat X^*_t, \> Y^*_t(y)=y Y^*_t, \>\text{and}\quad c^*_t(z)=z\>\widehat \psi_t>0.$
The coefficient  $\widehat Z^{(\alpha)}_t$ is determined by the optimal processes via $\widehat Z^{(\alpha)}_t=Y^*_t (\widehat X^*_t)^\alpha$. Moreover, 
$d\widehat X^*_t=\widehat X^*_t\b((r_t-\widehat \psi_t)dt+ \kappa_t^*.(dW_t+\eta^\sigR_t)\b)$ and $dY^*_t=Y^*_t\b(-r_t dt +(\nu^*_t-\eta^\sigR_t)dW_s\b)$
where only the dynamics of $\widehat X^*_t$ is affected by the consumption rate $\widehat \psi_t.$

\paragraph*{Power Backward Utilities and Yields curve}
 \rmi { For backward utility function, the time horizon $T_H$ plays a crucial. For  zero coupon with maturity $T<T_H$, the payoff at time $T$ is recapitalized at the risk-free rate from time $T$ to time $T_H$, leading to the marginal utility  price for zero-coupon (as explained in  (\ref{capitalisation}))
$B^{u,H}_t(T,y)=   \mathbb{E}\B[ \frac{Y^{*,H}_{T}(y)}{Y^{*,H}_t(y)} \b| \mathcal{F}_t  \B]$.}
Since the value function of a power backward utility problem is a consistent power utility process with  deterministic value
at maturity $T_H$, the previous system (I) in   Section \ref{sec:expuissance} states that 
we are looking  for optimal processes $X^*$ and $Y^*$ such that $Z^*_{T_H}=Y^*_{T_H}(X^*_{T_H})^\alpha$ is a constant $C$. { Thus, compared to the forward case, the dependency on the time horizon $T_H$ adds a deterministic constraint between optimal wealth and optimal dual process at time $T_H$}.  
This  constraint is  equivalent to the martingale property of the value function along the optimal portfolio, equal to the martingale 
$\frac{1}{1-\alpha}Y^*_tX^*_{t}=\frac{1}{1-\alpha}{\cal E}_t\b(\kappa^*-\eta^\sigR+\nu^*\b)$.
To understand the impact of the short rate uncertainty, it is better to write the constraint as \\[1mm]
\centerline{$X^*_{T_H}Y^0_{T_H}=K\>Y^0_{T_H}\b(Y^*_{T_H}X^*_{T_H}\b)^{1/1-\alpha}$}\\[2mm]
 since  both  processes
$X^* Y^0$ and $Y^*X^*$ are exponential martingales with known volatility given respectively by $\kappa^*+\eta^\sigR$ and $\nu^*+\kappa^*-\eta^\sigR$, and $Y^0_{t}=\exp(-\int_0^tr_s ds) L^0_{t}$ where  $L^0_{t}$ is an exponential martingale with volatility $-\eta^\sigR$. To characterise the parameters of all these processes, we can use the uniqueness of the decomposition as terminal value of some exponential martingale,  after taking into account the randomness of spot rate $r$ or risk premia $\eta^\sigR$, and $\nu^*$. In any case, this condition implies some links  on the random variable $\int_0^{T_H}r_s ds$ and the volatilities $\nu^*_t-\eta^\sigR_t$ and $\kappa^*_t$ of the optimal processes $Y^*$ and $X^*$. But it is not so easy to give a description of this links in all generality. 
%
%
\paragraph*{Examples in log-normal market of  marginal  utility yields curves with backward power utilities}
 We assume a log-normal market:\\
\rmi  $\eta^\sigR_.$ is a deterministic process (and $\sigR_t$ contains the deterministic processes)\\
 \rmii  $ (\int_0^{t} r_sds)_{0\leq t \leq T_H}$ is a Gaussian process, with a deterministic volatility vector $\Gamma_.(t)$, Thus the logarithm of $Y^0$ is a Gaussian process and equation (\ref{eq:intrsds}) can be written as 
\begin{equation}\label{dynamicsrt}
-\int_0^{t} r_sds= \text{Cst}(t)+\int_0^{t} \Gamma_s(t).dW_s, \quad t \in[0,T_H].
\end{equation}
\rmiii Assuming furthermore that $\nu^{*,H}$ is deterministic, the  logarithm of the    optimal wealth $\ln(X^{*,H})$ and of the optimal state price density $\ln(Y^{*,H})$ are Gaussian process.\\}
In particular, at time $T_H$
\begin{eqnarray*}
\ln(Y^{*,H}_{T_H})&=& \text{Cst}-\int_0^{T_H} r_tdt +\int_0^{T_H}(\nu^{*,H}_t-\eta^\sigR_t).dW_t\label{dynamicsY}\\
 \ln(X^{*,H}_{T_H})&=&\text{Cst}+\int_0^{T_H} r_tdt+\int_0^{T_H} \kappa^*_tdW_t, \nonumber
\end{eqnarray*}
and since  $Y^{*,H}_{T_H}(X^{*,H}_{T_H})^\alpha$ is a constant,  
the Gaussian variable $(1-\alpha)\int_0^{T_H}\Gamma_t({T_H}).dW_t+\int_0^{T_H}(\nu^{*,H}_t-\eta^\sigR_t)dW_t+\int_0^{T_H} \alpha \kappa_t^{*,H} dW_t$ has $0$ variance. Thus, using the decomposition of $\Gamma_t({T_H})$ into two orthogonal vectors $\Gamma^\sigR_t({T_H})$ and $\Gamma^{\perp}_t({T_H})$, we have that
\begin{equation}\label{eqvolbond}
\nu^{*,H}_t=-(1-\alpha)\Gamma^{\perp}_t({T_H}),\quad \alpha \kappa_t^{*,H} +(1-\alpha)\Gamma^\sigR_t({T_H})=\eta^\sigR_t.
\end{equation}
Remark that $\nu^{*,H}$ is always proportional to $\Gamma^{\perp}({T_H})$ and $\kappa^{*,H}$ depends on the maturity $T_H$ only through $\Gamma^\sigR({T_H})$.
So the knowledge of deterministic risk premium $\eta^\sigR$, and  the optimal deterministic parameters $\nu^{*,H}_t$,  $\kappa_t^{*,H}$ allows us to identify the volatility of the marginal utility zero-coupon bond with maturity $T_H$, as 
\begin{equation}
\Gamma_t(T_H)=\frac{(\eta^\sigR_t-\nu^{*,H}_t)}{(1-\alpha)}-\frac{\alpha}{1-\alpha}\kappa_t^{*,H}
\end{equation}
Conversely, given a deterministic volatility for the zero-coupon bond  with maturity $T_H$, and the risk aversion coefficient $\alpha$, we can easily recover from Equation \eqref{eqvolbond} the optimal volatilities $\nu^{*,H}_t$ and $\kappa_t^{*,H}$. \\

\noindent {A classical model for the short rate dynamics is the Vasicek model, where  the short rate is 
 given by an Ornstein-Uhlenbeck process 
$dr_t=a(b-r_t)dt- \sigma dW_t.$
The computation of $\int_0^t r_s ds$ yields the volatility for the zeron-coupon bond   $\Gamma_s(t) =    (1-e^{-a(t-s)})  \frac{\sigma}{a}$ (see for example \cite{JYC} Proposition
2.6.1.6 for details of this classical Gaussian computation).
The classical framework consists in a complete market driven by a one dimensional Brownian motion. In the framework of an incomplete market with the noise driving the spot rate  being orthogonal to the one driving the risky assets, then $\Gamma^{\perp}_s(t) =    (1-e^{-a(t-s)})  \frac{\sigma}{a}$ and $ \Gamma^\sigR=0 .  $
 Thus  in this example $ \kappa_t^{*,H}= \frac{\eta^\sigR_t}{\alpha} $ does not depend on the maturity $T_H$ while
$ \nu_t^{*,H}=  (\alpha-1) (1-e^{-a(T_H-t)}) \frac{\sigma}{a}  $ depends on the time to maturity $(T_H-t)$.\\

}

\noindent $\bullet$ {\em Yield curve for infinite maturity} 
Since in the backward case  $\nu^{*,H}$  depends on the maturity $T_H$, the yield curve for infinite maturity  
$l_t = \lim_{T \rightarrow + \infty}R^{u}_t(T)$ differs  from the one in the forward case (in (\ref{ltinfini})) if $\lim_{T \rightarrow + \infty}
 \frac{||\Gamma_t(T)||^2}{T-t}>0$ and $\alpha < \frac{1}{2}$.
As we are looking at the asymptotics $T \rightarrow + \infty$ and $T \leq T_H$, 
 we set   $T_H=T \rightarrow + \infty  $ (note that similar results hold if  $T_H> T \rightarrow + \infty$) and
\begin{eqnarray*}
l_t&=&l_0+ \int_0^t \lim_{T \rightarrow + \infty}  \left( \frac{||\Gamma_s(T)||^2}{2(T-s)}  -(1-\alpha)
\frac{||\Gamma^{\perp}_s(T)||^2}{(T-s)}  \right) ds\\
l_t&=&l_0+ \int_0^t \lim_{T \rightarrow + \infty}  \left( \frac{(2 \alpha -1)||\Gamma^{\perp}_s(T)||^2}{2(T-s)}  +
\frac{||\Gamma^{\sigR}_s(T)||^2}{2(T-s)}  \right) ds
  \end{eqnarray*}
Thus if  $\lim_{T \rightarrow + \infty}
 \frac{||\Gamma_t(T)||^2}{T-t}>0$,  $l_t$ is an non-decreasing function of the risk aversion $\alpha$: if $\alpha \geq  \frac{1}{2}$, 
$l_t$ is  a non-decreasing process as in the forward case; 
 if  $\alpha<1/2$, $l_t$  may be decreasing or increasing, depending on the sign of $\lim_{T \rightarrow + \infty} (\frac{(2 \alpha
-1)||\Gamma^{\perp}_s(T)||^2}{2(T-s)}  + \frac{||\Gamma^{\sigR}_s(T)||^2}{2(T-s)})$. In particular, $\lim_{T \rightarrow + \infty}
 \frac{||\Gamma_t(T)||^2}{T-t}>0$, $\alpha < \frac{1}{2}$ and $\lim_{T \rightarrow + \infty}
\frac{||\Gamma^{\sigR}_t(T)||^2}{2(T-t)}=0$ implies a decreasing yield curve for infinite maturity in this framework of backward power utilities in log-normal market.\\

An affine factor model makes it possible to extend the
previous log-normal model to a more stochastic framework, while leading to tractable pricing
formulas, see \cite{MrNek04}.\\

 {\bf Conclusion}: In this paper, we  remained deliberately closed to the economic setting, studying more precisely power utility functions
and using marginal utility indifference price (Davis price) for the pricing of non replicable zero-coupon bonds, which allowed us to interpret the Ramsey rule in 
a financial framework. Those simplifications imply that  the impact of the  initial economic wealth is avoided : on the one hand 
 power utilities imply that the optimal processes are linear with respect to the initial conditions,  on the other hand  Davis price is a linear 
pricing rule while for non replicable claim, the size of the transaction is an important source of risk that must be taken into account.
 This important issue concerning  the dependency on the initial wealth  and its impact for yield curves will be discussed
in a future work.
\bibliographystyle{plain}
\bibliography{UtilityConsumption}

\end{document}